\newtheorem{thm}{Theorem}[section]
\newtheorem{cor}[thm]{Corollary}
\newtheorem{lem}[thm]{Lemma}
\newtheorem{rem}[thm]{Remark}
\newtheorem{ass}[thm]{Assumption}
\newtheorem{ex}[thm]{Example}
\numberwithin{table}{section}
\numberwithin{equation}{section}
\numberwithin{figure}{section}
\title{On a Class of Nonlocal Wave Equations \\ from Applications}
\author{Horst Reinhard Beyer\footnotemark[1]~,
Burak Aksoylu\footnotemark[2]~, and Fatih Celiker\footnotemark[3]
\\  \today
}
\begin{document}

\maketitle

\renewcommand{\thefootnote}{\fnsymbol{footnote}} 
\footnotetext[1]{Department of Mathematics, TOBB University of 
Economics and Technology, Ankara, 06560, Turkey
\&
Instituto Tecnol\'ogico Superior de Uruapan, 
Carr. Uruapan-Carapan No. 5555, Col. La Basilia, Uruapan, 
Michoac\'an. M\'exico
\& 
Theoretical Astrophysics, IAAT, Eberhard Karls 
University of T\"ubingen, T\"ubingen 72076, Germany, hbeyer@etu.edu.tr.}

\footnotetext [2] {Department of Mathematics, TOBB University of
  Economics and Technology, Ankara, 06560, Turkey \& Department of
  Mathematics and Statistics, University of New Mexico, Albuquerque,
  NM 87131, USA, baksoylu@etu.edu.tr.  This work was supported in part
  by National Science Foundation DMS 1016190 grant, European Commission
  Marie Curie Career Integration Grant 293978, and Scientific and
  Technological Research Council of Turkey (T\"UB\.{I}TAK) TBAG
  112T240 and MAG 112M891 grants. Research visit of Horst R. Beyer was
  supported in part by T\"UB\.{I}TAK 2221 Fellowship for Visiting
  Scientist Program. Sabbatical visit of Fatih Celiker was supported
  in part by T\"UB\.{I}TAK 2221 Fellowship for Scientist on Sabbatical
  Leave Program.}

\footnotetext[3]{Department of Mathematics, Wayne State University,
  656 W. Kirby, Detroit, MI 48202, USA, celiker@math.wayne.edu. This
  work was supported in part by National Science Foundation DMS
  1115280 grant.} 

\date{\today}

\begin{abstract}
  We study equations from the area of peridynamics, which is an
  extension of elasticity.  The governing equations form a system of
  nonlocal wave equations.  Its governing operator is found to be a
  bounded, linear and self-adjoint operator on a Hilbert space.  We
  study the well-posedness and stability of the associated initial
  value problem. We solve the initial value problem by applying the
  functional calculus of the governing operator. In addition, we give
  a series representation of the solution in terms of spherical Bessel
  functions. For the case of scalar valued functions, the governing
  operator turns out as functions of the Laplace operator.  This
  result enables the comparison of peridynamic solutions to those of
  classical elasticity as well as the introduction of local boundary
  conditions into the nonlocal theory. The latter is studied in a
  companion paper.
\end{abstract}

\begin{keywords}
Nonlocal wave equation, nonlocal operators, peridynamics, elasticity, 
operator theory.
\end{keywords}

\begin{AMS}
47G10, 35L05, 74B99
\end{AMS}



\section{Motivation}

Classical elasticity has been successful in characterizing and
measuring the resistance of materials to crack growth. On the other
hand, peridynamics (PD), a nonlocal extension of continuum mechanics
developed by Silling~\cite{silling2000PDfirstPaper}, is capable of
quantitatively predicting the dynamics of propagating cracks,
including bifurcation.  Its effectiveness has been established in
sophisticated applications such as Kalthoff-Winkler experiments of the
fracture of a steel plate with notches
\cite{kalthoffWinkler1988,silling2003}, fracture and failure of
composites, nanofiber networks, and polycrystal fracture
\cite{kilicMadenci2009,
  oterkusMadenci2012,sillingBobaru2005,sillingBobaru2004}. Further
applications are in the context of multiscale modeling, where PD has
been shown to be an upscaling of molecular dynamics
\cite{selesonGunzburgerParks2014,Seleson:2009:UpscalingMDtoPD} and has
been demonstrated as a viable multiscale material model for length
scales ranging from molecular dynamics to classical elasticity
\cite{askariEtAl2008scidac}.  Also see other related engineering
applications
\cite{celikGuvenMadenci2011,kilicAgwaiMadenci2009,kilicMadenci2010,oterkusMadenciEtAl2012,oterkusMadenci2012_initiation},
the review and news articles
\cite{duEtal2012_sirev,duLipton2014_siamNews,lehoucqSilling2010} for a
comprehensive discussion, and the recent book
\cite{madenciOterkus2014_book}.

We study a class of nonlocal wave equations.  The driving application
is PD.  The same operator is also employed in nonlocal diffusion
\cite{rossiEtAl2010_book,duEtal2012_sirev,selesonGunzburgerParks2013_nonlocalDomains}.
Similar classes of operators are used in numerous applications such as
population models
\cite{carrilloFife2005,mogilnerEdelstein-Keshet1999}, image processing
\cite{Gilboa:2008:NonlocalImage,kindermannOsherJones2005}, particle
systems \cite{bodnarVelazquez2006}, phase transition
\cite{albertiBellettini1998,albertiBellettini1998-2}, and coagulation
\cite{fournierLaurencot2006}. In addition, we witness a major effort
to meet the need for mathematical theory for PD applications and
related nonlocal problems addressing, for instance, conditioning
analysis, domain decomposition and variational theory
\cite{aksoyluMengesha2010,aksoyluParks2011,aksoyluUnlu2014_nonlocal},
volume constraints
\cite{duEtal2012_sirev,duEtal2013_volumeConstraint,duEtal2013_navierEqu},
nonlinearity
\cite{durukErbayErkip2009,durukErbayErkip2010,durukErbayErkip2011_general,lipton2014},
discretization
\cite{aksoySenocak2011,aksoyluUnlu2014_nonlocal,Emmrich:2007:DiscretizePD_,tianDu2013},
numerical methods
\cite{chenGunzburger2011,duJuTianZhou2013_aPosterioriErrorAnal,duTianZhao2013_adaptiveFEM,selesonBeneddinePrudhomme2013_couplingScheme}, and
various other aspects 
\cite{Alali:2009:HeterogeneousPD,duKammLehoucqParks2012,duZhou2009,emmrichLehoucqPuhst2013,Weckner:2007:PDConverge_,gunzburgerLehoucq2010,hindsRadu2012,lehoucqSilling2008_PD_elasticity,Lehoucq:2008:PDStress,mengesha2012,duMengesha2013_signChangingKernel,mikata2012,selesonGunzburgerParks2014,selesonParks2011_influenceFunction,zhouDu2010}.

It is part of the folklore in physics that the point particle model,
which is the root for \emph{locality} in physics, is the cause of
unphysical singular behavior in the description of the underlying
phenomena.  This fact is a strong indication that, in the long run,
the development of nonlocal theories is necessary for description of
natural phenomena.  Operator theory does not discern the locality or
nonlocality of the governing operator.  This is the strength of this
approach.  This article adds valuable tools to the arsenal of methods
to analyze nonlocal problems, thereby, increasing structural
understanding in the field.

The rest of the article is structured as follows.  We start with a
mathematical introduction in Section \ref{sec:mathIntro}.  In Section
\ref{sec:optreatment}, we set the operator theory framework to treat
the nonlocal wave equation.  We prove basic properties of the
solutions such as well-posedness of the initial value problem and
provide a representation of the solutions in terms of bounded
functions of the governing operator.  We study the stability of
solutions and give conservation laws.  In Section
\ref{sec:properties}, in the vector-valued case, we note that the
governing operator becomes an operator matrix.  The generality of
operator theory allows a simple extension of the results established
for the scalar-valued functions to the vector-valued ones.  We prove
the boundedness of the entries of the governing operator matrix.  The
proof is natural due to operator theory again, because it relies on a
well-known criterion for integral operators.  We present a
``diagonalization'' of the matrix entries. This is accomplished by
employing the unitary Fourier transform and connecting the entries to
maximal multiplication operators.  We study the spectral properties of
the entries.  Then, we reach to a notable result.  Namely, we prove
that \emph{the governing operator is a bounded function of the
  classical local operator}.  This has far reaching
consequences. It enables the incorporation of local boundary
conditions into nonlocal theories, which is the subject of our
companion paper \cite{aksoyluBeyerCeliker2014_bounded}.  We introduce
notion of strong resolvent convergence. This allows us to prove the
convergence of solutions of the governing equation to that of the
classical solution.  We give examples of sequences of micromoduli that
are instance of this result.  In Section \ref{sec:representationSolu},
we consider the calculation of the solution of the wave
equation. Since the governing operator is bounded, holomorphic
functions of that operator can be represented in form of power series
in the operator.  Then, we give a representation of holomorphic
functions, present in the solution of the initial value problem,
utilizing the fact that the governing operator is a sum of two
commuting operators.  We discover that the corresponding power series
can be given in terms of a series of Bessel functions.  In Section
\ref{sec:examples}, we apply the representation in terms of Bessel
functions to special Gaussian micromoduli and Gaussian data.  We
depict the resulting solutions of peridynamic wave equation and
compare to the classical solutions.  We conclude in Section
\ref{sec:conclusion}.

\subsection{Mathematical Introduction} \label{sec:mathIntro}

The formal system of linear peridynamic wave equations in $n$-space dimensions
\cite[Eqn. 54]{silling2000PDfirstPaper} $n \in {\mathbb{N}}^{*}$, is given by 
\begin{equation} \label{formalperidynamicwavequation}
\rho \, \frac{\partial^2 u}{\partial t^2}(x,t) = \int_{{\mathbb{R}}^n} C(x^{\prime}-x) \cdot
\left(u(x^{\prime},t) - u(x,t) \right) dx^{\prime} + b(x,t) \, \, , 
\end{equation}
where ``$\cdot$'' indicates matrix multiplication,
or equivalently by the system
\begin{equation} \label{formalperidynamicwavequation2}
\rho \, \frac{\partial^2 u_j}{\partial t^2}(x,t) = \sum_{k=1}^n \int_{{\mathbb{R}}^n} C_{jk}(x^{\prime}-x) \cdot
\left(u_k(x^{\prime},t) - u_k(x,t) \right) dx^{\prime} + 
b_j(x,t) \, \, ,
\end{equation}
where $x \in {\mathbb{R}}^n$, $t \in {\mathbb{R}}$, $C :
{\mathbb{R}}^n \rightarrow M(n \times n,{\mathbb{R}})$ is the
micromodulus tensor, assumed to be even and assuming values inside the
subspace of symmetric matrices, $\rho > 0$ is the mass density, $b :
{\mathbb{R}}^{n} \times {\mathbb{R}} \rightarrow {\mathbb{R}}^n$ is
the prescribed body force density, and $u : {\mathbb{R}}^{n} \times
{\mathbb{R}} \rightarrow {\mathbb{R}}^n$ is the displacement field.

For comparison, e.g., the corresponding wave equation in classical elasticity in $1$-space dimension is given by 
\begin{equation} \label{classicalelasticity}
\rho \, \frac{\partial^2 u}{\partial t^2} = E \, \frac{\partial^2 u}{\partial x^2} +
b \, \, , 
\end{equation}
where $E > 0$ is the so called ``Young's modulus,'' and describing compression waves in a rod.

If $j, k \in \{1,\dots,n\}$ and $C_{jk} \in L^1({\mathbb{R}}^n)$,
we can rewrite 
(\ref{formalperidynamicwavequation2}) as 
\begin{equation} \label{formalperidynamicwavequation3}
\rho \, \frac{\partial^2 u_{j}}{\partial t^2}(x,t) = 
- \sum_{k=1}^n \left\{ \, 
\left[\int_{{\mathbb{R}}^n} C_{jk}(x^{\prime}) dx^{\prime}\right] u_{k}(x,t) - (C_{jk} * u_{k}(\cdot,t))(x)
\right\} + b_{j}(x,t) \, \, , 
\end{equation}
for all $x \in {\mathbb{R}}^n$, $t \in {\mathbb{R}}$ and 
$j \in \{1,\dots,n\}$
where $*$ denotes the convolution product.  The system
(\ref{formalperidynamicwavequation3}) is the starting point for a
functional analytic interpretation, which leads on a well-posed
initial value problem. For this purpose, we use methods from operator
theory; see, e.g., \cite{beyer2007_book,pazy1983_book}.

\section{Operator-Theoretic Treatment of Systems of Wave Equations}
\label{sec:optreatment}

Analogous to the majority of evolution equations from classical and quantum physics, 
(\ref{formalperidynamicwavequation3}) can be treated with methods from 
operator theory, see, e.g., \cite{beyer2007_book,pazy1983_book} for substantiation of this claim 
and \cite{hutsonPymCloud2005_book} for applications of operator theory in engineering. More specifically, this system falls into 
the class of abstract linear wave equations from Theorem~\ref{abstractwaveequation}. For the proof of this theorem 
see, e.g., 
\cite[Thm. 2.2.1 and Cor. 2.2.2]{beyer2007_book}. Special cases 
of this theorem are proved in \cite{joergens1962,mikhlin1970_book} and 
\cite[Vol.~II]{reedSimon_books}.  Statements and proofs make use of the spectral theorems of (densely-defined, linear and)
self-adjoint operators in Hilbert spaces, including the concept of functions of such operators, 
see, e.g., \cite[Vol.~I]{reedSimon_books},  or standard books on Functional Analysis, 
such as \cite{rudin1991_book,yosida1980_book}. These methods are also used throughout the paper.

This section provides the basic properties of the solutions of
abstract wave equations of the form (\ref{waveequation}).  Some of the
subsequent results are scattered in the literature.  Therefore,
wherever necessary, we provide proofs.  In particular,
Theorem~\ref{abstractwaveequation} gives the well-posedness of the
initial value problem for a class of abstract wave equations,
conservation of energy and a representation of the solutions in terms
of bounded functions of the governing operator. Corollary~\ref{stab1}
and Theorem~\ref{instability} are results on the stability of the
solutions, i.e., their growth for large times.
Theorem~\ref{conservationlaws} provides conservation laws induced by
symmetries of the governing operator.
Theorem~\ref{solutionoftheinhomogenousequation} provides special
solutions of the associated class of inhomogeneous wave equations.
Together with Theorem~\ref{abstractwaveequation}, these solutions
provide the well-posedness of the initial value problem of the latter
equations as well as a representation of the solutions in terms of
bounded functions of the governing operator.

\begin{thm} {\bf (Wave Equations)} \label{abstractwaveequation}
Let $\left( X,\braket{|}\right)$ be some non-trivial 
complex Hilbert space. Furthermore, let $A : D(A) \rightarrow X$ be
some densely-defined, linear, semibounded self-adjoint operator in $X$
with spectrum $\sigma(A)$.
Finally, let $\xi,\eta \in D(A)$. 
\begin{itemize}
\item[(i)]
Then 
there is a unique twice continuously differentiable map 
$u : {\mathbb{R}} \rightarrow X$ assuming values in $D(A)$ and 
satisfying
\begin{equation} \label{waveequation}
u^{\, \prime \prime}(t) = - A \, u(t) 
\end{equation}
for all $t \in {\mathbb{R}}$ 
as well as
\begin{equation*}
u(0) = \xi \, \, , \, \, u^{\, \prime}(0) = \eta \, \, .
\end{equation*}
\item[(ii)]
For this $u$, the corresponding energy function 
$E_u : \mathbb{R} \rightarrow {\mathbb{R}}$,
defined by 
\begin{equation*}
E_u(t) := \frac{1}{2} \, \big( \, \braket{u^{\, \prime}(t)| u^{\, \prime}(t)}
+ \braket{u(t)| A u(t)}  \, \big)  
\end{equation*}
for all $t \in {\mathbb{R}}$,
is constant. 
\item[(iii)]
Moreover, this $u$ is given by 
\begin{equation} \label{representationofthesolution}
u(t) = \left[\overline{\cos \left(t \sqrt{\phantom{ij}} \right)}\,
\bigg|_{\sigma(A)}\right]\!(A) \xi + 
\left[\, \overline{\frac{\sin \left(t \sqrt{\phantom{ij}} \right)}{\sqrt{\phantom{ij}}}} \, \bigg|_{\sigma(A)}\right]\!(A) \eta 
\end{equation} 
for all $t \in {\mathbb{R}}$,
where 
\begin{equation*} 
\overline{\cos(t \sqrt{\phantom{ij}} \,)} \, \, \, \, \textrm{and} \, \, \, \, 
\overline{\frac{\sin(t \sqrt{\phantom{ij}} \,)}{ \sqrt{\phantom{ij}}}}
\end{equation*}
denote the unique extensions of 
$
\cos(t \sqrt{\phantom{ij}} \,) \, \, \, \, \textrm{and}  \, \,
\, \, 
\sin(t \sqrt{\phantom{ij}}) / \sqrt{\phantom{ij}},
$
respectively, 
to entire holomorphic functions. 
\end{itemize}
\end{thm}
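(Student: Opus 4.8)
The plan is to base the entire argument on the spectral theorem for $A$ and its functional calculus, reducing the operator equation to the two scalar functions
\[
c(t,\lambda) := \cos(t\sqrt{\lambda}), \qquad s(t,\lambda) := \frac{\sin(t\sqrt{\lambda})}{\sqrt{\lambda}},
\]
understood as their entire holomorphic extensions in $\lambda$ (equivalently the power series $\sum_k(-1)^k t^{2k}\lambda^k/(2k)!$ and $\sum_k(-1)^k t^{2k+1}\lambda^k/(2k+1)!$). First I would record the elementary identities $\partial_t c=-\lambda s$, $\partial_t s=c$, $c(0,\lambda)=1$, $s(0,\lambda)=0$, so that each of $c(t,\cdot)$, $s(t,\cdot)$ solves the scalar problem $\partial_t^2 f=-\lambda f$. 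Since $A$ is semibounded from below, $\sigma(A)\subseteq[m,\infty)$ for some $m\in\mathbb{R}$; on the bounded part $[m,0)$ the hyperbolic growth of $c,s$ is harmless, while on $[0,\infty)$ one has $|c(t,\lambda)|\le 1$ and $|s(t,\lambda)|\le|t|$. Hence, for each fixed $t$, both $c(t,\cdot)$ and $s(t,\cdot)$ are bounded Borel functions on $\sigma(A)$, so $c(t,A)$ and $s(t,A)$ are bounded operators and the right-hand side of (\ref{representationofthesolution}) is well defined. This is precisely the place where semiboundedness enters.

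For existence in (i) I would take $u(t):=c(t,A)\xi+s(t,A)\eta$ and verify directly that it has the required properties. Writing $E$ for the projection-valued measure of $A$ and $\mu_\psi(\cdot):=\langle\psi|E(\cdot)\psi\rangle$, the identity $\|f(A)\psi\|^2=\int|f|^2\,d\mu_\psi$ converts every claim into an estimate on scalar integrals. Because $\xi,\eta\in D(A)$, the measures $\mu_\xi,\mu_\eta$ have finite second moment ($\int\lambda^2\,d\mu_\xi=\|A\xi\|^2<\infty$), which is exactly the domination needed to differentiate under the spectral integral: using $\partial_t c=-\lambda s$ and the integral form of the remainder, the difference quotients of $t\mapsto c(t,A)\xi$ converge in $X$ to $-s(t,A)A\xi=-A\,s(t,A)\xi$, and similarly for the $\eta$ term. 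Iterating yields $u'(t)=-s(t,A)A\xi+c(t,A)\eta$ and $u''(t)=-c(t,A)A\xi-s(t,A)A\eta=-A\,u(t)$, with $u(t)\in D(A)$ since $c(t,\cdot),s(t,\cdot)$ are bounded and $\xi,\eta\in D(A)$. Continuity of $u'$ and $u''$ follows from the same dominated-convergence estimate applied to $t\mapsto c(t,A)v$ and $t\mapsto s(t,A)v$, and the initial conditions $u(0)=\xi$, $u'(0)=\eta$ are immediate from $c(0,\cdot)=1$, $s(0,\cdot)=0$. I expect this justification of differentiation through the functional calculus to be the main technical obstacle; the remainder is bookkeeping.

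For (ii) I would differentiate $E_u$ directly. The only delicate point is $\tfrac{d}{dt}\langle u(t)|Au(t)\rangle$, since the hypotheses do not guarantee $u'(t)\in D(A)$; this is circumvented by moving $A$ onto the other factor via self-adjointness, $\langle u(t)|A(u(t+h)-u(t))\rangle=\langle Au(t)|u(t+h)-u(t)\rangle$, and using that $Au=-u''$ is continuous, which gives $\tfrac{d}{dt}\langle u|Au\rangle=2\,\mathrm{Re}\,\langle u'|Au\rangle$. Combining with $\tfrac{d}{dt}\langle u'|u'\rangle=2\,\mathrm{Re}\,\langle u'|u''\rangle=-2\,\mathrm{Re}\,\langle u'|Au\rangle$ yields $E_u'\equiv 0$.

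For uniqueness in (i) I would set $w:=u_1-u_2$, so that $w''=-Aw$ with $w(0)=w'(0)=0$ and $w(t)\in D(A)$. Energy conservation alone is insufficient here because $A$ need not be sign-definite, so instead I would truncate the spectrum: for $\Omega_n:=\sigma(A)\cap[-n,n]$ the projection $E(\Omega_n)$ commutes with $A$, hence $w_n:=E(\Omega_n)w$ satisfies the same equation with $A$ replaced by its restriction to $\mathrm{ran}\,E(\Omega_n)$, which is a \emph{bounded} self-adjoint operator. Standard uniqueness for linear second-order ordinary differential equations with bounded coefficients forces $w_n\equiv 0$, and letting $n\to\infty$ with $E(\Omega_n)\to I$ strongly gives $w\equiv 0$. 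Finally, (iii) is then immediate: the solution constructed in (i) is exactly (\ref{representationofthesolution}), and by uniqueness it is \emph{the} solution.
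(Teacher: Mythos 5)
Your proof is correct. Note first that the paper itself contains no proof of this theorem: it explicitly defers to \cite[Thm.~2.2.1 and Cor.~2.2.2]{beyer2007_book} (with special cases in \cite{joergens1962,mikhlin1970_book} and \cite[Vol.~II]{reedSimon_books}), and your argument is essentially the standard spectral-calculus proof that those references follow, so there is no divergence of method to report --- only that you have supplied in full what the paper outsources.

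Within that approach you handle the genuinely delicate points properly. (a) Semiboundedness enters exactly where you say: the entire extensions of $\cos(t\sqrt{\lambda})$ and $\sin(t\sqrt{\lambda})/\sqrt{\lambda}$ grow like $\cosh(t\sqrt{|\lambda|})$ as $\lambda \to -\infty$, so without $\sigma(A) \subseteq [m,\infty)$ the right-hand side of (\ref{representationofthesolution}) would not define bounded operators. (b) Differentiation under the spectral integral is justified by dominated convergence, with dominating functions of the form $C(1+\lambda)$ that are square-integrable against the spectral measures of $\xi,\eta$ precisely because $\xi,\eta \in D(A)$; one small point your phrase ``iterating yields'' glosses over, and which is worth spelling out in a final write-up, is that the second differentiation of the $\xi$-term needs only $A\xi \in X$, since there the relevant bound is $|\partial_t s| = |c| \leqslant 1$ on $[0,\infty)$ and local boundedness on $[m,0]$ --- so no regularity beyond $D(A)$ is consumed, as the theorem requires. (c) Your observation that uniqueness cannot be extracted from energy conservation alone, because $\braket{w(t)|Aw(t)}$ may be negative for sign-indefinite $A$, is exactly right, and the spectral truncation $w_n = E(\Omega_n)w$ followed by first-order reduction and Gronwall for the resulting bounded self-adjoint generator is a clean fix; it is also pleasant that this step, as you note implicitly, uses no semiboundedness at all. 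Likewise, in (ii), moving $A$ across the inner product by self-adjointness instead of claiming $u'(t) \in D(A)$ (which the hypotheses do not give) is the correct maneuver.
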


Moreover, if $A$ is positive, the solutions of (\ref{waveequation})
are stable, i.e., there are no solutions that are growing
exponentially in the norm.

\begin{cor} \label{stab1} {\bf (Stability of Solutions)}
If $A$ is positive, then 
\begin{equation*}
\|u(t)\| \leqslant \|\xi\| + |t| \cdot \|\eta\|
\end{equation*}
for every $t \in {\mathbb{R}}$.
\end{cor}

\begin{proof}
The statement follows from Theorem~\ref{abstractwaveequation}~(iii), since, from an application of the 
spectral theorem of densely-defined, linear and
self-adjoint operators in Hilbert spaces, it follows that 
the operator norms 
of the operators in (\ref{representationofthesolution})
satisfy
\begin{equation*}
\bigg\| \left[\overline{\cos \left(t \sqrt{\phantom{ij}} \right)}\,
\bigg|_{\sigma(A)}\right]\!(A) \bigg\| \leqslant 1 \, \, , \, \, 
\bigg\|\left[\,\overline{\frac{\sin \left(t \sqrt{\phantom{ij}} \right)}{\sqrt{\phantom{ij}}}} \, \bigg|_{\sigma(A)}\right]\!(A) \bigg\| \leqslant t \, \, , 
\end{equation*}
for every $t \in {\mathbb{R}}$.
\end{proof}

On the other hand, if $A$ is strictly negative, there are solutions of (\ref{waveequation}) that are growing exponentially in the norm. 
The corresponding theorem is not readily found in the literature. 
For the convenience of the reader, we give a proof in the Appendix.

\begin{thm} \label{instability}
{\bf (Instability of Solutions)}
If $\left(X,\braket{|}\right)$, $A : D(A) \rightarrow X$, 
$\sigma(A)$
are as in Theorem~\ref{abstractwaveequation} and, in addition, $A$ is
such that 
\begin{equation*}
\sigma(A) \cap (-\infty,0) \neq \emptyset \, \, , 
\end{equation*}
then there is a twice continuously differentiable map 
assuming values in $D(A)$ and 
satisfying
\begin{equation*} 
u^{\, \prime \prime}(t) = - A \, u(t) 
\end{equation*}
for all $t \in {\mathbb{R}}$ with exponentially growing 
norm.
\end{thm}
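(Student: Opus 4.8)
The plan is to exhibit an explicit exponentially growing solution by exploiting the hypothesis that the spectrum of $A$ meets the negative half-line. Since $\sigma(A) \cap (-\infty,0) \neq \emptyset$, I would first fix some $\lambda_0 \in \sigma(A) \cap (-\infty,0)$ and choose a small $\varepsilon > 0$ such that the interval $I := [\lambda_0 - \varepsilon, \lambda_0 + \varepsilon]$ is still contained in $(-\infty, 0)$, so that every point of $I$ is strictly negative. The idea is that on this spectral region the functions $\cos(t\sqrt{\phantom{ij}})$ and $\sin(t\sqrt{\phantom{ij}})/\sqrt{\phantom{ij}}$ appearing in the representation formula (\ref{representationofthesolution}) involve $\sqrt{\lambda}$ with $\lambda < 0$, hence imaginary arguments, which turn the trigonometric functions into hyperbolic ones that grow like $e^{|t|\sqrt{|\lambda_0|}}$.

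The key step is to produce a suitable initial datum supported spectrally near $\lambda_0$. Because $\lambda_0 \in \sigma(A)$, the spectral projection $P := \mathbf{1}_I(A)$ associated with the interval $I$ is nonzero, so I can pick a unit vector $\xi \in \text{ran}(P) \subseteq D(A)$; note $\xi \in D(A)$ since $I$ is bounded and hence the spectral integral of $\lambda$ over $I$ is finite. Taking $\eta := 0$, Theorem~\ref{abstractwaveequation} guarantees a unique twice continuously differentiable solution $u$ with values in $D(A)$, given by
\begin{equation*}
u(t) = \left[\overline{\cos \left(t \sqrt{\phantom{ij}} \right)}\,\bigg|_{\sigma(A)}\right]\!(A)\, \xi .
\end{equation*}
I would then estimate $\|u(t)\|^2$ from below using the spectral theorem: since $\xi$ is supported in $I$, we have
\begin{equation*}
\|u(t)\|^2 = \int_I \Big| \overline{\cos(t\sqrt{\phantom{ij}})}(\lambda) \Big|^2 \, d\braket{\xi|P_\lambda \xi},
\end{equation*}
and on $I$ the entire extension of the cosine satisfies $\overline{\cos(t\sqrt{\phantom{ij}})}(\lambda) = \cosh(t\sqrt{|\lambda|}) \geqslant \cosh(t\sqrt{|\lambda_0|-\varepsilon})$ for $\lambda \in I$. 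Hence $\|u(t)\| \geqslant \cosh\big(t\sqrt{|\lambda_0|-\varepsilon}\big)\,\|\xi\|$, which grows exponentially in $|t|$.

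The main obstacle, and the only genuinely delicate point, is the justification that the entire holomorphic extension $\overline{\cos(t\sqrt{\phantom{ij}})}$ really coincides with $\lambda \mapsto \cosh(t\sqrt{-\lambda})$ on the negative reals. This is because $\cos(z)$ is even, so its composition with the square root is single-valued and extends to an entire function of $\lambda = z^2$; evaluating this extension at negative $\lambda$ gives $\cos(it\sqrt{|\lambda|}) = \cosh(t\sqrt{|\lambda|})$. I would verify this identity by noting that both sides are entire in $\lambda$ and agree on the positive axis, so they agree everywhere by the identity theorem, and in particular on $I$. Once this identification is in place, the lower bound above is immediate and completes the construction of the exponentially growing solution, establishing the theorem.
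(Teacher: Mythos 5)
Your proof is correct and follows essentially the same route as the paper's: both exploit a nonzero spectral projection onto a small interval inside $\sigma(A)\cap(-\infty,0)$, take the initial datum in its range with zero initial velocity, and bound $\|u(t)\|$ from below by a hyperbolic cosine via the spectral theorem together with the identification $\overline{\cos\left(t\sqrt{\phantom{ij}}\right)}(\lambda)=\cosh\bigl(t\sqrt{-\lambda}\bigr)$ for $\lambda<0$. The only cosmetic difference is that the paper centers its spectral window at $\lambda_0=\inf\sigma(A)$ and lets $\varepsilon\to 0$ to obtain the growth rate $\sqrt{|\lambda_0|}$, whereas you center it at an arbitrary negative spectral point and keep $\varepsilon$ fixed, which yields the slightly weaker rate $\sqrt{|\lambda_0|-\varepsilon}$ but still establishes the claimed exponential growth.
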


\begin{proof}
See the Appendix. 
\end{proof}

The following Theorem~\ref{conservationlaws} can be considered a form
of Noether's Theorem for the solutions of (\ref{waveequation}).  For
the convenience of the reader, we provide a proof in the Appendix.

\begin{thm} \label{conservationlaws} ({\bf Conservation Laws Induced by Symmetries})
Let $u, v : {\mathbb{R}} \rightarrow X$ be twice continuously differentiable map 
assuming values in $D(A)$ and 
satisfying
\begin{equation*} 
u^{\, \prime \prime}(t) = - A \, u(t) \, \, , \, \, 
v^{\, \prime \prime}(t) = - A \, v(t)
\end{equation*}
for all $t \in {\mathbb{R}}$. 
Then the following holds. 

\begin{itemize}
\item[(i)] Then 
$j_{u,v} : {\mathbb{R}} \rightarrow {\mathbb{C}}$, defined by 
\begin{equation*}
j_{u,v}(t) := \braket{u(t)|v^{\prime}(t)} - \braket{u^{\prime}(t)|v(t)} 
\end{equation*}
for every $t \in {\mathbb{R}}$, is constant. 
\item[(ii)] If $B \in L(X,X)$ commutes with $A$, i.e., is such that 
$A \circ B \supset B \circ A$, then 
\begin{equation*}
j_{u,B}(t) := \braket{u(t)|B u^{\prime}(t)} - \braket{u^{\prime}(t)|B u(t)} 
\end{equation*}
for every $t \in {\mathbb{R}}$, is constant.
\item[(iii)] If 
$B$ is a densely-defined, linear self-adjoint operator
in $X$ that commutes with $A$, i.e., is such that every member of its associated spectral
family commutes with every member of the spectral family that is associated to $A$, 
and $u(0), u^{\prime}(0) \in D(A) \cap D(B)$, 
then  $\textrm{Ran}(u), \textrm{Ran}(u^{\prime})  \subset D(A) \cap  D(B)$ and 
\begin{equation*}
j_{u,B}(t) := \braket{u(t)|B u^{\prime}(t)} - \braket{u^{\prime}(t)|B u(t)}
\end{equation*}
for every $t \in {\mathbb{R}}$, is constant.
\end{itemize} 
\end{thm}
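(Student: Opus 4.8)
The plan is to prove the three parts in order, letting part (i) carry the analytic content while (ii) and (iii) reduce to it through the commutation hypotheses. For part (i), I would show that $j_{u,v}$ is constant by checking that its derivative vanishes. Since $u$ and $v$ are twice continuously differentiable and the inner product is continuous and sesquilinear, $j_{u,v}$ is differentiable and the product rule gives
\[
j_{u,v}'(t) = \braket{u'(t)|v'(t)} + \braket{u(t)|v''(t)} - \braket{u''(t)|v(t)} - \braket{u'(t)|v'(t)} .
\]
The two $\braket{u'|v'}$ terms cancel; substituting $u''(t) = -A\,u(t)$ and $v''(t) = -A\,v(t)$ leaves $\braket{Au(t)|v(t)} - \braket{u(t)|Av(t)}$. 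Because $u(t),v(t) \in D(A)$ and $A$ is self-adjoint, these two terms agree, so $j_{u,v}'(t) = 0$ for all $t$ and $j_{u,v}$ is constant.

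For part (ii), I would apply (i) with the choice $v := Bu$. The key observation is that $Bu$ is again an admissible solution of the wave equation: boundedness of $B$ makes $Bu$ twice continuously differentiable with $(Bu)'' = Bu''$ and $(Bu)' = Bu'$, while the relation $A \circ B \supset B \circ A$ (recall $D(B) = X$) gives both $Bu(t) \in D(A)$ and $ABu(t) = BAu(t)$. Hence $(Bu)''(t) = -BAu(t) = -ABu(t) = -A\,(Bu)(t)$, so $Bu$ satisfies the hypotheses of (i). Since $(Bu)' = Bu'$, the expression $j_{u,Bu}$ coincides termwise with $j_{u,B}$, and (i) yields that $j_{u,B}$ is constant.

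For part (iii), the strategy is to secure the range inclusion first and then recover the conservation law by truncating $B$ and invoking (ii). By Theorem~\ref{abstractwaveequation}~(iii), $u(t)$ and $u'(t)$ arise from $\xi = u(0)$ and $\eta = u'(0)$ through functions of $A$. The main obstacle is precisely here: I must use the commutation hypothesis in its spectral-family form to conclude that every bounded function of $A$ leaves $D(B)$ invariant and commutes with $B$ on $D(B)$; applying this to the functions of $A$ appearing in the representation, together with $\xi,\eta \in D(A) \cap D(B)$, gives $\text{Ran}(u),\text{Ran}(u') \subset D(A) \cap D(B)$. With this in hand, I would set $B_n := \int_{[-n,n]} \mu \, dE^B_\mu$, the spectral truncations of $B$, which are bounded, self-adjoint, and commute with $A$; part (ii) then makes each $j_{u,B_n}$ constant. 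Letting $n \to \infty$, the spectral theorem gives $B_n x \to B x$ for every $x \in D(B)$, so evaluating at $x = u(t)$ and $x = u'(t)$ (which lie in $D(B)$ by the range inclusion) yields $j_{u,B_n}(t) \to j_{u,B}(t)$ for each fixed $t$. A pointwise limit of constant functions is constant, completing the proof.
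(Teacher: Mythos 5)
Your parts (i) and (ii) are essentially identical to the paper's proof: it too shows $j_{u,v}'(t)=\braket{u(t)|v''(t)}-\braket{u''(t)|v(t)}$ (computing the difference quotient explicitly rather than citing the product rule, a cosmetic difference) and kills this using $u''=-Au$, $v''=-Av$ and self-adjointness of $A$; and it too reduces (ii) to (i) with the choice $v:=B\circ u$, using $A\circ B\supset B\circ A$ to get $B(D(A))\subset D(A)$ and $(Bu)''=-A(Bu)$.

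In (iii) you genuinely deviate in the choice of bounded approximants of $B$. You truncate spectrally, $B_n:=\int_{[-n,n]}\mu\,dE^B_\mu$, whereas the paper regularizes through the unitary group generated by $B$: it applies (ii) to $f_s(B)$ with $f_s(\lambda):=(is)^{-1}(e^{is\lambda}-1)$ and lets $s\to0$ (legitimate because (ii) requires only $B\in L(X,X)$ commuting with $A$, not self-adjointness, so the complex-valued $f_s$ is admissible). The two devices are interchangeable: both are bounded functions of $B$, hence commute with $A$ under the commuting-spectral-families hypothesis; both satisfy $|f(\lambda)|\leqslant|\lambda|$, so dominated convergence yields strong convergence to $B$ on $D(B)$; and both finish with the observation that a pointwise limit of constant functions is constant. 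Your route is marginally more elementary, avoiding any appeal to the one-parameter group generated by $B$; the paper's choice fits its Noether-theorem reading of $B$ as generating a symmetry group.

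One caution, which applies to your proposal and to the paper in equal measure: the range inclusion that you yourself flag as the main obstacle is justified in both texts by the same brief appeal to Theorem~\ref{abstractwaveequation}, and for $\mathrm{Ran}(u')$ this is thinner than it looks. Writing the representation as $u(t)=c_t(A)\,u(0)+s_t(A)\,u'(0)$ with $c_t,s_t$ bounded on $\sigma(A)$, the invariance of $D(A)\cap D(B)$ under bounded functions of $A$ does give $\mathrm{Ran}(u)\subset D(A)\cap D(B)$; but differentiation gives $u'(t)=-s_t(A)\,Au(0)+c_t(A)\,u'(0)$, whose first term is a bounded function of $A$ applied to $Au(0)$, and $Au(0)\in D(B)$ is not among the hypotheses (for a pair of commuting diagonal operators on $\ell^2$ one can choose $u(0)\in D(A)\cap D(B)$ and $u'(0)=0$ so that $u'(t)\notin D(B)$ for typical $t$). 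Since the paper's own proof passes over exactly this point with the same words, your proposal is not weaker than the paper here; but a fully self-contained version of (iii) would need either an additional hypothesis of the type $Au(0)\in D(B)$ or an interpretation of $j_{u,B}$ through the quadratic form of $B$.
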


\begin{proof}
See the Appendix.
\end{proof}

Duhamel's principle leads to a solution of (\ref{waveequation}) for
vanishing data, the proof of the well-posedness and a representation
of the solutions of the initial value problem of the inhomogeneous
equation,
\begin{equation*} 
u^{\, \prime \prime}(t) = - A \, u(t) + b(t) \, \, , 
\end{equation*}
$t \in {\mathbb{R}}$.  For simplicity, the corresponding subsequent
Theorem~\ref{solutionoftheinhomogenousequation} assumes that $A$ is in
addition positive, which is the most relevant case for applications
because otherwise there are exponentially growing solutions,
indicating that the system is unstable; see Theorem \ref{instability}.
The same statement is true if $\sigma(A)$ is only bounded from below.
On the other hand, Theorem~\ref{solutionoftheinhomogenousequation} can
also be obtained by application of the corresponding well-known more
general theorem for strongly continuous semigroups; see, e.g.,
\cite[Thm. 4.6.2]{beyer2007_book}.  We give a direct proof of
Theorem~\ref{solutionoftheinhomogenousequation} in the Appendix, which
does not rely on methods from the theory of strongly continuous
semigroups. For the definition of weak integration; see, e.g.,
\cite[Sec. 3.2]{beyer2007_book}.

\begin{thm} \label{solutionoftheinhomogenousequation}
{\bf (Solutions of Inhomogeneous Wave Equations)}
Let $\left(X,\braket{|}\right)$, $A : D(A) \rightarrow X$, 
$\sigma(A)$
be as in Theorem~\ref{abstractwaveequation} and, in addition, $A$ be
positive.
Finally, let $f : {\mathbb{R}} \rightarrow X$ be a continuous map, assuming values in $D(A^2)$ such that $Af$, $A^2\!f$ are continuous. Then, $v : {\mathbb{R}} \rightarrow X$,
for every $t \in \mathbb{R}$ defined by 
\begin{align*}
v(t) := \int_{I_t} 
\left[\,\overline{\frac{\sin \left((t - \tau) \sqrt{\phantom{ij}} \right)}{\sqrt{\phantom{ij}}}} \, \bigg|_{\sigma(A)}\!\right]\!(A) f(\tau)
\, d\tau \, \, ,
\end{align*}
where $\int$ denotes weak integration in $X$,
\begin{equation*}
I_{t} := 
\begin{cases}
[0,t] & \text{if $t \geqslant 0$} \\
[t,0] & \text{if $t < 0$}
\end{cases} \, \, , 
\end{equation*}
is twice continuously differentiable, assumes values in $D(A)$,
is such that 
\begin{equation*}
v(0) = v^{\prime}(0) = 0 \, \, ,
\end{equation*} 
and 
\begin{equation*} 
v^{\, \prime \prime}(t) + A \, v(t) = f(t), \quad t \in {\mathbb{R}}.
\end{equation*}
\end{thm}

\begin{proof}
See the Appendix.
\end{proof}

\section{The Governing Operator and Properties}
\label{sec:properties}

The standard data space for the classical wave equation
is a  
$L^2$-space with constant weight, on a non-empty open subset of ${\mathbb{R}}^n$, 
$n \in {\mathbb{N}}^{*}$,  
for instance, $L^2_{\mathbb{C}}({\mathbb{R}})$ for a
bar of infinite extension in $1$-space dimension.   
It turns out that the classical data spaces are
suitable also as data spaces for peridynamics, for instance, again $L^2_{\mathbb{C}}({\mathbb{R}})$ for a
bar
of infinite extension in $1$-space dimension, 
composed of a ``linear peridynamic material.'' This simplifies the discussion of the convergence of peridynamic solutions to classical solutions.

In the following, we represent (\ref{formalperidynamicwavequation3}) in form of 
(\ref{waveequation}), where the governing operator $A$ is
an ``operator matrix,''
consisting of sums of multiples of the identity
and convolution operators, as indicated in (\ref{formalperidynamicwavequation3}). 
These matrix entries will turn out to be pairwise commuting.  
The following remark provides some known relevant information on operator matrices
of bounded operators. On the other hand, we avoid explicit matrix notation. 

\begin{rem} {\bf (Operator Matrices)}
If ${\mathbb{K}} \in \{ {\mathbb{R}}, {\mathbb{C}}
\}$, $(X, \braket{\,|\,})$ a non-trivial ${\mathbb{K}}$-Hilbert space, $(A_{jk})_{j,k \in \{1,\dots,n\}}$ a family of elements of $L(X,X)$.
\begin{itemize} 
\item[(i)] Then by
\begin{equation*}
A(\xi_1,\dots,\xi_n) := \left(\,\sum_{k=1}^n A_{1k} \xi_k,\dots,\sum_{k=1}^n A_{nk} \xi_k\right)
\end{equation*}
for every 
$(\xi_1,\dots,\xi_n) \in X^n$, there is defined a bounded linear operator with adjoint $A^{*}$ given by 
\begin{equation*}
A^{*}(\xi_1,\dots,\xi_n) = \left(\,\sum_{k=1}^n 
A_{k1}^{*}
 \xi_k,\dots,\sum_{k=1}^n A_{kn}^{*} \xi_k\right)
\end{equation*}
for every 
$(\xi_1,\dots,\xi_n) \in X^n$.
\item[(ii)] 
If the members of $(A_{jk})_{j,k \in \{1,\dots,n\}}$
are pairwise commuting, then 
$A$ is bijective if and only if $\det(A)$ is bijective, where 
\begin{align*} 
& \det(A) := 
\sum_{\sigma \in S_n} \textrm{sign}(\sigma) \, A_{1\sigma(1)} 
\cdots A_{n \sigma(n)} \, \, ,
\end{align*}
$S_n$
denotes the set of permutations of $\{1,\dots,n\}$,
\begin{equation*}
\textrm{sign}(\sigma) := 
\prod_{i,j = 1, i<j}^{n} \textrm{sign}(\sigma(j) - \sigma(i))
\end{equation*}
for all $\sigma \in S_n$ and $\textrm{sign}$ denotes the signum function.
\end{itemize}
\end{rem}

The basic properties of the entries of the operator matrix are given
in the following lemma.  In fact, these operators turn out to be
bounded linear operators on $L^2_{\mathbb{C}}({\mathbb{R}}^n)$.
Hence, the boundedness and self-adjointness of $A$ follows from those of
$A_C$.  The boundedness of $A$ has been shown in
\cite{duZhou2009,Weckner:2007:PDConverge_,zhouDu2010} for special class of
kernel functions.  We generalize the result to kernel functions
that are in $L^1({\mathbb{R}}^n)$ by utilizing a well-known criterion
for integral operators; see, e.g., Corollary to
\cite[Thm. 6.24]{weidmann1980_book}.

\begin{lem} \label{governingoperator}
{\bf (Matrix Entries)}
Let $n \in {\mathbb{N}}^{*}$, $\rho > 0$ and 
$C \in L^1({\mathbb{R}}^n)$ be even. Then,
\begin{equation} \label{definitionofA}
A_{C} f := \frac{1}{\rho} \left[
\left(\, 
\int_{{\mathbb{R}}^n} C \, dv^n \right) \!. f - 
C * f
\right] \, \, , 
\end{equation}
for every $f \in L^2_{\mathbb{C}}({\mathbb{R}}^n)$, where
$*$ denotes the convolution product, there is defined a
self-adjoint bounded linear operator on $L^2_{\mathbb{C}}({\mathbb{R}}^n)$ 
with operator norm $\|A_{C}\|$
satisfying
\begin{equation} \label{operatornormofA}
\|A_{C}\| \leqslant \frac{1}{\rho}
\left(\, 
\bigg| \int_{\mathbb{R}} C \, dv^n 
\bigg| + \|C\|_{1}
\right) \leqslant \frac{2 \|C\|_1}{\rho} \, \, .
\end{equation}
\end{lem}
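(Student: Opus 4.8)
The plan is to split $A_C$ into the sum of a scalar multiple of the identity and a convolution operator, and to treat each piece separately. Writing $c := \int_{\mathbb{R}^n} C \, dv^n$ (a real number, since $C$ is real-valued and integrable) and denoting by $T_C f := C * f$ the convolution operator, we have $A_C = \rho^{-1}(c\,\mathrm{id} - T_C)$. The operator $\rho^{-1} c\,\mathrm{id}$ is plainly bounded and linear with norm $\rho^{-1}|c|$, so the entire content of the lemma reduces to showing that $T_C$ is a well-defined, bounded, self-adjoint operator on $L^2_{\mathbb{C}}(\mathbb{R}^n)$ with $\|T_C\| \leqslant \|C\|_1$.

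First I would establish boundedness of $T_C$ by realizing it as an integral operator with kernel $K(x,x') := C(x - x')$, so that $(T_C f)(x) = \int_{\mathbb{R}^n} K(x,x') f(x') \, dv^n(x')$. To invoke the integral-operator criterion cited above, I would verify the two Schur-type bounds: for every fixed $x$, translation invariance of Lebesgue measure gives $\int_{\mathbb{R}^n} |K(x,x')| \, dv^n(x') = \int_{\mathbb{R}^n} |C(x - x')| \, dv^n(x') = \|C\|_1$, and symmetrically $\int_{\mathbb{R}^n} |K(x,x')| \, dv^n(x) = \|C\|_1$ for every fixed $x'$. Since both directional $L^1$-integrals are bounded by $\|C\|_1$, the criterion yields that $T_C$ is bounded with $\|T_C\| \leqslant \|C\|_1$. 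The triangle inequality then gives $\|A_C\| \leqslant \rho^{-1}(|c| + \|C\|_1)$, and the elementary estimate $|c| = |\int_{\mathbb{R}^n} C \, dv^n| \leqslant \int_{\mathbb{R}^n} |C| \, dv^n = \|C\|_1$ upgrades this to $\|A_C\| \leqslant 2\|C\|_1/\rho$, which is exactly (\ref{operatornormofA}).

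For self-adjointness I would argue at the level of the integral kernel. The operator $\rho^{-1} c\,\mathrm{id}$ is self-adjoint because $c$ is real. For $T_C$, the relevant fact is that a bounded integral operator is self-adjoint precisely when its kernel is Hermitian, i.e. $K(x,x') = \overline{K(x',x)}$; here $\overline{K(x',x)} = \overline{C(x' - x)} = C(x' - x) = C(-(x - x')) = C(x - x') = K(x,x')$, where the middle equalities use that $C$ is real-valued and even, respectively. Hence $T_C^{*} = T_C$ and consequently $A_C^{*} = A_C$. Alternatively, I could verify $\braket{T_C f | g} = \braket{f | T_C g}$ directly by writing out both inner products, applying Fubini's theorem (licensed by $C \in L^1$ and $f,g \in L^2$), and performing the change of variables that the evenness of $C$ permits.

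The computations above are routine, and I expect the only points requiring genuine care to be the verification of the hypotheses of the integral-operator criterion --- in particular that $K(x,x') = C(x - x')$ is jointly measurable and that the two directional $L^1$-bounds hold for (almost) every fixed variable, which is what simultaneously guarantees that $C * f$ is well-defined in $L^2$ --- together with the explicit use of the evenness of $C$ in the kernel-symmetry step, which is precisely what makes $T_C$ (and hence $A_C$) self-adjoint rather than merely bounded. The realness of $C$ enters twice: once for the self-adjointness of the scalar part and once in the Hermitian-symmetry computation.
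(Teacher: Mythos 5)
Your proposal is correct and takes essentially the same route as the paper's own proof: both realize the convolution as an integral operator with kernel $K(x,x') = C(x - x')$, check the two directional $L^1$ (Schur) bounds $\|K(x,\cdot)\|_1 = \|K(\cdot,x')\|_1 = \|C\|_{1}$, invoke the cited criterion for integral operators to get boundedness with norm $\leqslant \|C\|_{1}$ together with self-adjointness from the symmetry of the kernel (evenness and realness of $C$), and then obtain (\ref{operatornormofA}) via the triangle inequality and $\left| \int_{{\mathbb{R}}^n} C \, dv^n \right| \leqslant \|C\|_{1}$. The only cosmetic difference is that you verify the Hermitian symmetry of the kernel as a separate explicit step, whereas the paper lets the criterion (Corollary to Theorem 6.24 in Weidmann) deliver boundedness and self-adjointness in one stroke for symmetric kernels.
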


\begin{proof}
For this purpose, we define the projections 
$p_1, p_2 : {\mathbb{R}}^{2n} \rightarrow  {\mathbb{R}}^{n}$
by  
\begin{equation*}
p_1(x_1,\dots,x_n,y_1,\dots,y_n) := (x_1,\dots,x_n) \, \, ,
\, \,  p_2(x_1,\dots,x_n,y_1,\dots,y_n) := (y_1,\dots,y_n)
\end{equation*}
for all $(x_1,\dots,x_n,y_1,\dots,y_n) \in {\mathbb{R}}^{2n}$,
and $K := C \circ (p_1 - p_2)$. Taking into account that 
$C$ is in particular measurable, as a consequence of 
the theory of Lebesgue integration, $K$ is measurable. Also,
since $C$ is even, $K$ is symmetric. Furthermore, for every 
$x \in {\mathbb{R}}^n$ and $y \in {\mathbb{R}}^n$
\begin{equation*}
K(x,\cdot) = C(x - \cdot) = C(\cdot - x) \, \, , \, \, 
K(\cdot,y) = C(\cdot - y) \in L^1({\mathbb{R}}^n)
\end{equation*}
and 
\begin{equation*}
\|K(x,\cdot)\|_1 = \|K(\cdot,y)\|_1 = \|C\|_{1} \, \, .
\end{equation*}
Hence according to a well-known criterion for integral operators
on $L^2$-spaces, see, e.g., 
Corollary to \cite[Thm. 6.24]{weidmann1980_book}, to $K$ there 
corresponds a self-adjoint bounded linear integral operator
${\textrm{Int}(K)}$ on 
$L^2_{\mathbb{C}}({\mathbb{R}}^n)$ with operator norm 
$\leqslant \|C\|_{1}$ and for almost all $x$ 
given by 
\begin{equation*}
[{\textrm{Int}(K)} f](x) = \int_{{\mathbb{R}}^n} K(x,\cdot)  \cdot f \, dv^n = \int_{{\mathbb{R}}^n} C(x - \cdot)  \cdot f \, dv^n = (C * f)(x) \, \, .
\end{equation*}
Hence by (\ref{definitionofA}), there is given a self-adjoint 
bound linear operator $A_{C}$ with operator norm $\|A_{C}\|$
satisfying (\ref{operatornormofA}).

\end{proof}

For the study of the spectral properties of the matrix entries,
needed for the application of the results from Section~\ref{sec:optreatment}, we use Fourier transformations. This step parallels 
the common procedure for constant coefficient differential operators on ${\mathbb{R}}^n, n \in {\mathbb{N}}^{*}$. With the help of the unitary Fourier transform $F_2$,
Theorem~\ref{fouriergov} represents the matrix entries as maximal multiplication operators. This process can be viewed as a form of
``diagonalization'' of the entries. Also, since bounded maximal multiplication 
operators commute, the entries commute pairwise. The spectra of  maximal multiplication operators are well understood, leading to 
Corollary~\ref{spectralpropertiesofthecoefficients}.
Also, 
the functional calculus which is associated to maximal multiplication operators is known and allows the construction of the 
functional calculi of the entries. The latter is used in the proof of Theorem~\ref{classicalgoverningoperator} which proves that   
matrix entries corresponding to spherically symmetric micromoduli
are functions of the Laplace operator. 

\begin{ass}
  In the following, for $n \in {\mathbb{N}}^{*}$, $F_2$ denotes the
  unitary Fourier transformation on $L^2_{\mathbb{C}}({\mathbb{R}}^n)$
  which, for every rapidly decreasing test function 
$f \in {\mathscr S}_{{\mathbb{C}}}({\mathbb{R}})$, is defined by
\begin{equation*}
(F_2 f)(k) := \frac{1}{(2 \pi)^{n/2}} \int_{{\mathbb{R}}^n} e^{-i k \cdot 
{\textrm{id}}_{{\mathbb{R}}^n}} f \, dv^n, \quad  k \in {\mathbb{R}}^n. 
\end{equation*}
Also, we denote by $F_1$ the 
map from $L^1_{\mathbb{C}}({\mathbb{R}}^n)$ to 
$C_{\infty}({\mathbb{R}}^n,{\mathbb{C}})$, the space of continuous functions
on ${\mathbb{R}}^n$ vanishing at infinity, which for every 
$f \in L^1_{\mathbb{C}}({\mathbb{R}}^n)$, is defined by  
\begin{equation*}
(F_1 f)(k) := \int_{{\mathbb{R}}^n} e^{-i k \cdot
{\textrm{id}}_{{\mathbb{R}}^n}} f \, dv^n, \quad  k \in {\mathbb{R}}^n. 
\end{equation*}
\end{ass}

\begin{thm} \label{fouriergov}
{\bf (Fourier Transforms of the Entries)}
Let 
\begin{equation*}
T_{\frac{1}{\rho}[(F_1 C)(0) - F_1 C]}
\end{equation*} 
denote the 
maximal multiplication operator by the bounded continuous function
\begin{equation*}
\frac{1}{\rho}[(F_1 C)(0) - F_1 C] 
\end{equation*}
on $L^2_{\mathbb{C}}({\mathbb{R}}^n)$.
Then 
\begin{equation*}
F_2 \circ A_{C} \circ F_2^{-1} = T_{\frac{1}{\rho}[(F_1 C)(0) - F_1 C]} \, \, .
\end{equation*}
\end{thm}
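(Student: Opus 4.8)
The plan is to treat the two summands of $A_C$ separately and recombine them by linearity, after which the whole statement reduces to the classical convolution theorem in the normalization dictated by $F_2$. First I would record the elementary identity $\int_{\mathbb{R}^n} C\, dv^n = (F_1 C)(0)$, obtained by evaluating the definition of $F_1$ at $k=0$. Multiplication by this number is a scalar multiple of the identity and therefore commutes with the unitary $F_2$, so under conjugation the first term of $A_C$ contributes exactly multiplication by the constant $\frac{1}{\rho}(F_1 C)(0)$. Since $F_1 C \in C_{\infty}(\mathbb{R}^n,\mathbb{C})$ is bounded (as is the constant), the prospective multiplier $\frac{1}{\rho}[(F_1 C)(0) - F_1 C]$ is bounded and continuous, whence the maximal multiplication operator $T$ it defines is in fact everywhere-defined and bounded; there is no domain subtlety to worry about.

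The substantive step is the convolution theorem, namely that conjugation by $F_2$ turns convolution with $C$ into multiplication by $F_1 C$. Rather than manipulating $F_2^{-1}$, I would prove the equivalent operator identity
\[
F_2 \circ A_C = T_{\frac{1}{\rho}[(F_1 C)(0) - F_1 C]} \circ F_2,
\]
for which it suffices to verify $F_2(C * f) = (F_1 C)\cdot(F_2 f)$. I would first establish this on the dense subspace $L^1_{\mathbb{C}}(\mathbb{R}^n)\cap L^2_{\mathbb{C}}(\mathbb{R}^n)$, where the computation is classical: inserting the definitions, substituting $z = x-y$, and applying Fubini's theorem---legitimate because $\int_{\mathbb{R}^n}\!\int_{\mathbb{R}^n} |C(x-y)|\,|f(y)|\, dv^n(y)\, dv^n(x) = \|C\|_1\|f\|_1 < \infty$---factors the iterated integral into $(F_1 C)(k)$ times $\int_{\mathbb{R}^n} e^{-ik\cdot y} f(y)\, dv^n(y)$, and the latter equals $(2\pi)^{n/2}(F_2 f)(k)$, so that the normalization constants cancel and leave precisely $(F_1 C)(k)(F_2 f)(k)$.

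To pass from this dense subspace to all of $L^2_{\mathbb{C}}(\mathbb{R}^n)$ I would invoke boundedness of both sides of the operator identity. The left-hand side is bounded because $f \mapsto C * f$ is bounded by Lemma~\ref{governingoperator} (with norm at most $\|C\|_1$) and $F_2$ is unitary; the right-hand side is bounded because $T$ is bounded, as noted above, and again $F_2$ is unitary. Two bounded operators that agree on a dense subspace coincide, so the identity holds on all of $L^2_{\mathbb{C}}(\mathbb{R}^n)$. Composing on the right with the unitary $F_2^{-1}$ then yields the asserted equality $F_2 \circ A_C \circ F_2^{-1} = T_{\frac{1}{\rho}[(F_1 C)(0) - F_1 C]}$.

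I expect the one genuine obstacle to be the careful bookkeeping across the two transforms $F_1$ and $F_2$: one must keep track of the factor $(2\pi)^{n/2}$ relating $F_2$ to the classical Fourier integral, and note that it is $F_1 C$ (not $F_2 C$) that appears, precisely because $C$ is only assumed to lie in $L^1$ and is handled by the $L^1$-transform, while $f$ is an $L^2$-datum handled by the unitary transform. Everything else is a routine density-and-boundedness argument.
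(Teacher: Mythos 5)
Your proposal is correct and follows essentially the same route as the paper: both establish the key identity $F_2(C*f) = (F_1 C)\cdot(F_2 f)$ on the dense subspace $L^1_{\mathbb{C}}({\mathbb{R}}^n) \cap L^2_{\mathbb{C}}({\mathbb{R}}^n)$ via the $L^1$ convolution theorem (which you re-derive by Fubini, while the paper cites it together with the relation $F_2 f = (2\pi)^{-n/2} F_1 f$ on that subspace), and then extend to all of $L^2_{\mathbb{C}}({\mathbb{R}}^n)$ by boundedness of both sides. Your explicit handling of the scalar term $\frac{1}{\rho}(F_1 C)(0)\,{\mathrm{id}}$ and the identity $\int_{{\mathbb{R}}^n} C \, dv^n = (F_1 C)(0)$ merely spells out what the paper leaves implicit in its phrase ``the statement is a consequence of'' the convolution identity.
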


\begin{proof}
The statement is a consequence of the 
fact that 
\begin{equation*}
[F_2 \circ {\textrm{Int}}(K)] f = 
[T_{F_1 C} \circ F_2] f 
\end{equation*}
for every $f \in L^2_{\mathbb{C}}({\mathbb{R}}^n)$, where 
$K$ and ${\textrm{Int}}(K)$ are defined as in Lemma~\ref{governingoperator} and where $T_{F_1 C}$ denotes the 
maximal multiplication operator on $L^2_{\mathbb{C}}({\mathbb{R}}^n)$ by the bounded continuous 
function $F_1 C$. For the 
proof of this fact, we note that for every $L^1_{\mathbb{C}}({\mathbb{R}}^n) \cap L^2_{\mathbb{C}}({\mathbb{R}}^n)$ 
\begin{align*}
[F_2 \circ {\textrm{Int}}(K)] f & = 
F_2 (C * f) = \frac{1}{(2 \pi)^{n/2}} . F_1 (C * f) 
= \frac{1}{(2 \pi)^{n/2}}  . (F_1 C) (F_1 f) \\
& = 
(F_1 C) (F_2 f) = [T_{F_1 C} \circ F_2] f \, \, . 
\end{align*}
Hence, since $L^1_{\mathbb{C}}({\mathbb{R}}^n) \cap L^2_{\mathbb{C}}({\mathbb{R}}^n)$ is dense in $L^2_{\mathbb{C}}({\mathbb{R}}^n)$,
the bounded linear operators $F_2 \circ {\textrm{Int}}(K)$ 
and $T_{F_1 C} \circ F_2$ coincide on a dense subspace of
$L^2_{\mathbb{C}}({\mathbb{R}}^n)$ and therefore coincide on 
the whole of $L^2_{\mathbb{C}}({\mathbb{R}}^n)$.
\end{proof}

We give the spectrum and point spectrum of $A_C$.

\begin{cor} \label{spectralpropertiesofthecoefficients}
{\bf (Spectral Properties of the Entries)}
\begin{align*}
\sigma(A_{C}) & = \overline{\textrm{Ran} \frac{1}{\rho} . [(F_1 C)(0) - F_1 C]}  \, \, , \\ 
\sigma_{p}(A_C) & =  
\left\{
\lambda \in {\mathbb{R}} : 
\left\{
k \in {\mathbb{R}} : \frac{1}{\rho} . [(F_1 C)(0) - (F_1 C)(k)] 
= \lambda
\right\} \, \textrm{is no Lebesgue null set}
\right\} \, \, , 
\end{align*}
where the overline denotes the closure in ${\mathbb{R}}$.
Finally, for every $\lambda \in \sigma(A_{C})$, $A_{C} - \lambda$
is not surjective.
\end{cor}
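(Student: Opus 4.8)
The plan is to exploit Theorem~\ref{fouriergov}, which unitarily conjugates $A_C$ into the maximal multiplication operator $T_g$ on $L^2_{\mathbb{C}}({\mathbb{R}}^n)$ by the bounded continuous function $g := \frac{1}{\rho}[(F_1 C)(0) - F_1 C]$. Since the spectrum, point spectrum, and surjectivity of $A_C - \lambda$ are all preserved under unitary equivalence, it suffices to establish the three assertions for $T_g$. First I would recall the standard characterization of the spectrum of a maximal multiplication operator by a bounded continuous function: $\sigma(T_g) = \overline{g({\mathbb{R}}^n)} = \overline{\operatorname{Ran} g}$, which immediately yields the first displayed formula. For the second, the point spectrum of $T_g$ consists precisely of those $\lambda$ for which $g - \lambda$ vanishes on a set of positive Lebesgue measure, since $(T_g - \lambda)f = (g-\lambda)f = 0$ in $L^2$ forces $f$ to be supported (up to null sets) on $g^{-1}(\{\lambda\})$, and such a nonzero $f$ exists if and only if that level set is not a Lebesgue null set; this gives the stated description of $\sigma_p(A_C)$.

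For the final claim, that $A_C - \lambda$ fails to be surjective for every $\lambda \in \sigma(A_C)$, I would again pass to $T_g - \lambda = T_{g-\lambda}$. The key observation is that a maximal multiplication operator $T_h$ is surjective if and only if $1/h$ is essentially bounded, equivalently if and only if $0 \notin \overline{\operatorname{Ran} h}$; when $\lambda \in \sigma(A_C) = \overline{\operatorname{Ran} g}$ we have $0 \in \overline{\operatorname{Ran}(g-\lambda)}$, so $g - \lambda$ comes arbitrarily close to $0$ and $1/(g-\lambda)$ cannot be essentially bounded. I would make this precise by choosing a sequence of disjoint positive-measure sets on which $|g - \lambda|$ is small and exhibiting a target function in the range's complement, or more cleanly by noting that a bounded surjective operator between Banach spaces is bounded below by the open mapping theorem, whereas $T_{g-\lambda}$ admits an approximate null sequence (normalized functions supported where $|g-\lambda|$ is small) and hence is not bounded below.

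The main obstacle is the surjectivity argument, specifically handling the distinction between the injective and non-injective cases uniformly. If $\lambda$ is merely in the spectrum but not the point spectrum, then $T_{g-\lambda}$ is injective with dense range, yet still not surjective; the failure of surjectivity is witnessed not by a kernel element but by the fact that the range, though dense, is proper. The clean route is to show $T_{g-\lambda}$ is not bounded below: for $\lambda \in \overline{\operatorname{Ran} g}$ and each $m \in {\mathbb{N}}^*$, the set $S_m := \{k : |g(k) - \lambda| < 1/m\}$ has positive Lebesgue measure (by continuity of $g$ and $\lambda$ being a limit of values of $g$), so normalizing $\chi_{S_m}$ produces unit vectors $f_m$ with $\|(T_{g-\lambda})f_m\| \leqslant 1/m \to 0$. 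Since a surjective bounded operator would be bounded below by the open mapping theorem, $T_{g-\lambda}$, and hence $A_C - \lambda$, cannot be surjective. I would close by remarking that the positivity of the measure of $S_m$ uses only continuity of $g$ together with $\lambda \in \overline{\operatorname{Ran} g}$, so no further hypotheses on $C$ beyond those of Lemma~\ref{governingoperator} are needed.
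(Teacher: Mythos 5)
Your treatment of $\sigma(A_C)$ and $\sigma_p(A_C)$ is correct and matches the paper's proof: both pass through Theorem~\ref{fouriergov} and unitary invariance to the maximal multiplication operator $T_g$ with $g := \frac{1}{\rho}[(F_1C)(0) - F_1C]$, both use the continuity of $g$ to identify the essential range with $\overline{\operatorname{Ran} g}$ (the paper does this by noting that preimages of small intervals around points of the range are nonempty open sets, hence not null), and the level-set description of the point spectrum is the same. The difference lies in the final claim: the paper simply invokes the standard fact that a maximal multiplication operator is surjective if and only if it is bijective, whereas you attempt to prove the non-surjectivity from scratch, and it is there that your argument contains a genuine flaw.

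Your ``clean route'' rests on the assertion that a bounded surjective operator between Banach spaces is bounded below by the open mapping theorem. This is false in general: the left shift on $\ell^2$ is surjective with nontrivial kernel, hence not bounded below. What the open mapping theorem actually yields is that the induced injection on $X/\ker T$ is an isomorphism onto the target, and an approximate null sequence does not contradict that. Notice that the failure occurs exactly in the case you yourself flagged as delicate: for $\lambda \in \sigma_p(A_C)$ the operator $T_{g-\lambda}$ has genuine null vectors and is trivially not bounded below, yet your stated principle would be needed to conclude non-surjectivity, and it is precisely in the non-injective case that the principle is wrong. The repair is short and uses structure already at hand: $g$ is real-valued (it is the symbol of the self-adjoint $A_C$; directly, $C$ real and even makes $F_1 C$ real), so $T_{g-\lambda}$ is self-adjoint, and for a self-adjoint operator $\ker T = (\operatorname{Ran} T)^{\perp}$; surjectivity would therefore force injectivity, hence bijectivity, and then the bounded inverse theorem makes $T_{g-\lambda}$ bounded below --- which your normalized characteristic functions on $S_m = \{k : |g(k)-\lambda| < 1/m\}$ do contradict. (One further small point: $S_m$ is nonempty and open, hence of positive measure, but it may have infinite measure, in which case $\chi_{S_m} \notin L^2$; normalize $\chi_{S_m \cap B}$ for a suitable ball $B$ with $|S_m \cap B| > 0$ instead.) With these two corrections your argument becomes a complete, self-contained substitute for the paper's appeal to the standard surjective-iff-bijective property of multiplication operators.
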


\begin{proof}
Let $T_{\frac{1}{\rho}[(F_1 C)(0) - F_1 C]}$ denote
maximal multiplication operator by the bounded continuous function 
$\frac{1}{\rho}[(F_1 C)(0) - F_1 C]$ on $L^2_{\mathbb{C}}({\mathbb{R}}^n)$.
Since $F_2$ is an unitary operator 
\begin{equation*}
F_2 \circ A_{C} \circ F_2^{-1} = T_{\frac{1}{\rho}[(F_1 C)(0) - F_1 C]} \, \, , 
\end{equation*}
where the spectra and the point spectra of $A_{C}$ and
$T_{\frac{1}{\rho}[(F_1 C)(0) - F_1 C]}$ coincide, respectively.
Hence it follows from the properties of maximal multiplication
operators that
\begin{align*}
\sigma(A_{C}) & = \bigg\{ 
\lambda \in {\mathbb{R}} : 
\left(\frac{1}{\rho}[(F_1 C)(0) - F_1 C]\right)^{-1} \! \! (U_{c}(\lambda)) \\
& \qquad \qquad \quad {\textrm{\,\, is no Lebesgue null set for every $c>0$}}
\bigg\} \, \, , \\
\sigma_p(A_{C}) & =  \bigg\{ 
\lambda \in {\mathbb{R}} : 
\left(\frac{1}{\rho}[(F_1 C)(0) - F_1 C]\right)^{-1} \! \! (\lambda) \\
& \qquad \qquad \quad {\textrm{\,\, is no Lebesgue null set}}
\bigg\}  \, \, .
\end{align*}
Since ${\mathbb{R}} \setminus \overline{\textrm{Ran} \frac{1}{\rho} . [(F_1 C)(0) - F_1 C]}$ is open, for $\lambda \in {\mathbb{R}} \setminus \overline{\textrm{Ran} \frac{1}{\rho} . [(F_1 C)(0) - F_1 C]}$, there is $\varepsilon > 0$ such that 
\begin{equation*}
\{ k \in {\mathbb{R}} : 
\frac{1}{\rho} . [(F_1 C)(0) - (F_1 C)(k) \in 
(\lambda - \varepsilon, \lambda + \varepsilon)
\} 
\end{equation*}
is empty, and hence $\lambda \notin \sigma(A_{C})$. On the other hand, since $\frac{1}{\rho} [(F_1 C)(0) - F_1 C]$ is continuous,
for $\lambda \in \textrm{Ran} \frac{1}{\rho} . [(F_1 C)(0) - F_1 C]$ and $c > 0$,
\begin{equation*}
\left(\frac{1}{\rho}[(F_1 C)(0) - F_1 C]\right)^{-1} \! \! (U_{c}(\lambda))
\end{equation*} 
is non-empty and open, hence no Lebesgue null set and 
$\lambda \in \sigma(A)$. Since $\sigma(A_{C})$ is closed, it 
follows that 
\begin{equation*}
\sigma(A_{C}) = \overline{\textrm{Ran} \frac{1}{\rho} . [(F_1 C)(0) - F_1 C]}  \, \, . 
\end{equation*}
Finally, for $\lambda \in {\mathbb{R}}$, since 
\begin{equation*}
F_2 \circ (A_{C} -\lambda) \circ F_2^{-1} = T_{\frac{1}{\rho}[(F_1 C)(0) - F_1 C] - \lambda} 
\end{equation*}
it follows that  
$A_{C} - \lambda$ is surjective if and only if $T_{\frac{1}{\rho}[(F_1 C)(0) - F_1 C] - \lambda}$ is surjective. From the properties of maximal multiplication operators, it follows that the latter 
operator is surjective if and only if it is bijective and hence
if and only if $\lambda \in {\mathbb{R}} \setminus \sigma(A_{C})$. 
\newline
\end{proof}

The notable result we obtained is that the governing operator $A_C$ of
the peridynamic wave equation is a bounded function of the classical
governing operator, present in (\ref{classicalelasticity}). This
observation has far reaching consequences.  It enables the comparison
of peridynamic solutions to those of classical elasticity.  In the
past, only the convergence of the peridynamic operator to the
classical operator has been discussed; see
\cite{aksoyluParks2011,aksoyluUnlu2014_nonlocal,lehoucqSilling2008_PD_elasticity,zhouDu2010}.  More
important for applications is the corresponding convergence of
solutions.  The tool that has been developed for this purpose is the
notion of strong resolvent convergence used in
Theorem~\ref{applicationofstrongresolventconvergence}.

The other remarkable implication is the definition of peridynamic-type
operators on bounded domains as functions of the corresponding
classical operator.  Since the classical operator is defined through
\emph{local} boundary conditions, the functions inherit this
knowledge. This observation opens a gateway to incorporate local
boundary conditions to nonlocal theories, which has vital implications
for numerical treatment of nonlocal problems.  This is the subject of our
companion paper \cite{aksoyluBeyerCeliker2014_bounded}.

\begin{thm} {\bf (A Representation of Matrix Entries Corresponding to Spherically Symmetric Micromoduli
as Functions of the Laplace Operator)} \label{classicalgoverningoperator}
Let $n \in {\mathbb{N}}^{*}$, ${\cal L}_{n}$ be the closure of the positive symmetric, essentially 
self-adjoint operator in $L^2_{\mathbb{C}}({\mathbb{R}}^n)$,
given by  
\begin{equation*}
\left( C_0^{\infty}({\mathbb{R}}^n,{\mathbb{C}})
\rightarrow L^2_{\mathbb{C}}({\mathbb{R}}^n) \, \, , 
\, \, f \mapsto - \frac{E}{\rho} \triangle f
\right) \, \, ,
\end{equation*}
where $\rho > 0$ and $E > 0$. Furthermore, if $n > 1$, in addition,  
let $C$ be spherically symmetric, i.e., 
such that 
\begin{equation*}
C \circ R = C \, \, , 
\end{equation*}
for every $R \in SO(n)$, where $SO(n)$ denotes the map of group of special
orthogonal transformations on ${\mathbb{R}}^n$. Then 
\begin{equation*}
A_{C} = \bigg\{ 
\frac{1}{\rho} \left[ 
(F_1 C)(0) - F_1 C
\right] \circ \iota  \,
\bigg\} ({\cal L}_n) \, \, ,
\end{equation*}
where $\iota : [0,\infty) \rightarrow {\mathbb{R}}^n$ is 
defined by 
\begin{equation*}
\iota(s) := \left( \sqrt{\frac{\rho}{E} \, s\,}\,\right) \!. e_1
\, \, ,
\end{equation*}
for every $s \geqslant 0$
and $e_1, \dots, e_n$ denotes the canonical basis
of ${\mathbb{R}}^n$. 
\end{thm}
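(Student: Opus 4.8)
The plan is to diagonalize both operators simultaneously by means of the unitary Fourier transform $F_2$ and then reduce the asserted operator identity to a pointwise identity between two multiplier functions. Theorem~\ref{fouriergov} already supplies the Fourier representation of the left-hand side,
\begin{equation*}
F_2 \circ A_{C} \circ F_2^{-1} = T_{\frac{1}{\rho}[(F_1 C)(0) - F_1 C]} \, \, .
\end{equation*}
For the right-hand side I would first record the standard Fourier diagonalization of the Laplacian: on the core $C_0^{\infty}({\mathbb{R}}^n,{\mathbb{C}})$ one has $F_2 \circ {\cal L}_n \circ F_2^{-1} = T_{\lambda}$ with multiplier $\lambda(k) := \frac{E}{\rho}\,|k|^2$, and since the conjugate $F_2 \circ {\cal L}_n \circ F_2^{-1}$ is self-adjoint while ${\cal L}_n$ is the closure of an essentially self-adjoint operator, the equality on the core propagates to the self-adjoint operators themselves. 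In particular $\sigma({\cal L}_n) = \overline{\textrm{Ran}\,\lambda} = [0,\infty)$, which is exactly the domain $[0,\infty)$ of $\iota$.

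Next I would check that $g := \frac{1}{\rho}[(F_1 C)(0) - F_1 C]\circ \iota$ is a bounded continuous function on $[0,\infty)$; this is immediate, since $F_1 C \in C_{\infty}({\mathbb{R}}^n,{\mathbb{C}})$ by definition of $F_1$ and $\iota$ is continuous, so in particular $g({\cal L}_n) \in L(X,X)$, consistent with the boundedness of $A_C$ from Lemma~\ref{governingoperator}. The functional calculus of ${\cal L}_n$ is then obtained from that of $T_{\lambda}$ through the unitary intertwiner $F_2$: for every bounded measurable $h$ on $\sigma({\cal L}_n)$ one has $h({\cal L}_n) = F_2^{-1}\circ T_{h\circ\lambda}\circ F_2$, using the elementary identity $h(T_{\lambda}) = T_{h\circ\lambda}$ for multiplication operators. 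Applying this with $h = g$ yields
\begin{equation*}
F_2 \circ g({\cal L}_n) \circ F_2^{-1} = T_{g\circ\lambda} \, \, .
\end{equation*}

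It therefore remains to show that the two multipliers coincide, i.e.\ that $g(\lambda(k)) = \frac{1}{\rho}[(F_1 C)(0) - (F_1 C)(k)]$ for every $k \in {\mathbb{R}}^n$. Unwinding the definition of $\iota$ gives $\iota(\lambda(k)) = \sqrt{\tfrac{\rho}{E}\cdot\tfrac{E}{\rho}\,|k|^2}\,\, e_1 = |k|\,e_1$, so that $g(\lambda(k)) = \frac{1}{\rho}[(F_1 C)(0) - (F_1 C)(|k|\,e_1)]$, and the claim reduces to the radial identity $(F_1 C)(|k|\,e_1) = (F_1 C)(k)$. This is where the symmetry hypothesis enters: since the Fourier transform commutes with orthogonal changes of variables (the substitution $x \mapsto Rx$ has unit Jacobian for $R \in SO(n)$), $C \circ R = C$ forces $(F_1 C)\circ R = F_1 C$, so $F_1 C$ is $SO(n)$-invariant and hence a function of $|k|$ alone; as $SO(n)$ acts transitively on each sphere $\{\,|k| = r\,\}$ for $n > 1$, every $k$ may be rotated onto $|k|\,e_1$ and the identity follows. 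For $n = 1$ the same conclusion comes instead from the evenness of $C$ already assumed in Lemma~\ref{governingoperator}, since then $F_1 C$ is even and $(F_1 C)(k) = (F_1 C)(|k|) = (F_1 C)(|k|\,e_1)$.

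With the multipliers thus identified, $T_{g\circ\lambda} = T_{\frac{1}{\rho}[(F_1 C)(0) - F_1 C]}$; comparing with Theorem~\ref{fouriergov} and using that $F_2$ is unitary yields $A_{C} = g({\cal L}_n)$, as asserted. The only genuinely delicate point is the transfer of the functional calculus of ${\cal L}_n$ through $F_2$, together with the correct bookkeeping of the spectrum $[0,\infty)$ as the common domain of the multiplier composition and of $\iota$; the reduction to radial symmetry and the remaining computations are routine.
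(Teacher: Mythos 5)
Your proposal is correct and takes essentially the same route as the paper's own proof: both diagonalize $A_C$ (via Theorem~\ref{fouriergov}) and ${\cal L}_n$ through the unitary $F_2$ into maximal multiplication operators, transfer the functional calculus as $g({\cal L}_n) = F_2^{-1}\circ T_{g\circ(\frac{E}{\rho}|\,\,|^2)}\circ F_2$, and reduce the claim to the multiplier identity $(F_1 C)(k) = (F_1 C)(|k|\,.\,e_1)$, obtained from the $SO(n)$-invariance of $F_1 C$ for $n>1$ and from evenness when $n=1$. Your handling of the $n=1$ case and of the propagation of the core identity to the closure is, if anything, slightly more explicit than the paper's, but the argument is the same.
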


\begin{proof}
First, we note that 
\begin{equation*}
F_2 \circ {\cal L}_n \circ F_2^{-1} = T_{\frac{E}{\rho} |\,\,|^2} \,\, ,
\end{equation*}
where $T_{\frac{E}{\rho} |\,\,|^2}$ denotes the maximal 
multiplication operator in $L^2_{\mathbb{C}}({\mathbb{R}}^n)$
by the function $\frac{E}{\rho} |\,\,|^2$. In particular, this 
implies that the spectrum of ${\cal L}_n$, $\sigma({\cal L}_n)$, is given by 
$[0,\infty)$ and for every $g \in U^s_{\mathbb{C}}([0,\infty))$
\footnote{$U^s_{\mathbb{C}}([0,\infty))$ denotes the space of bounded complex-valued functions on $[0,\infty)$ that are strongly measurable in the
sense that they are everywhere $[0,\infty)$
limit of a sequence of step functions.}
that 
\begin{equation*}
g({\cal L}_n) = F_2^{-1} \circ T_{g \circ \left(\frac{E}{\rho} |\,\,|^2\right)} \circ F_{2} \, \, , 
\end{equation*}
where $T_{g \circ \left(\frac{E}{\rho} |\,\,|^2\right)}$ denotes the maximal 
multiplication operator on $L^2_{\mathbb{C}}({\mathbb{R}}^n)$
by the function 
\begin{equation*}
g \circ \left(\frac{E}{\rho} \, |\,\,|^2\right) \, \, .
\end{equation*}
Furthermore, we note that $(F_1 C)(0) - F_1 C \in BC({\mathbb{R}}^n,{\mathbb{R}})$,
where $BC({\mathbb{R}}^n,{\mathbb{R}})$ is the space
of real-valued bounded continuous on ${\mathbb{R}}^n$,  and that $(F_1 C)(0) - F_1 C$ is even, since
for every $k \in {\mathbb{R}}^n$ 
\begin{align*}
(F_1 C)(- k) & = \int_{{\mathbb{R}}^n} e^{i k \cdot {\textrm{id}}_{{\mathbb{R}}^n}} C \, dv^n =
\int_{{\mathbb{R}}^n} e^{- i k \cdot {\textrm{id}}_{{\mathbb{R}}^n}} [C \circ  (-{\textrm{id}}_{{\mathbb{R}}^n})] \, dv^n \\
& =
\int_{{\mathbb{R}}^n} e^{- i k \cdot {\textrm{id}}_{{\mathbb{R}}^n}} C \, dv^n =
(F_1 C)(k) \, \, , \\
(F_1 C)(k) & = \frac{1}{2} \left[ \int_{{\mathbb{R}}^n} e^{- i k \cdot {\textrm{id}}_{{\mathbb{R}}^n}} C \, dv^n + 
\int_{{\mathbb{R}}^n} e^{i k \cdot {\textrm{id}}_{{\mathbb{R}}^n}} C \, dv^n 
\right] \\
& = \int_{{\mathbb{R}}^n} \cos( k \cdot {\textrm{id}}_{{\mathbb{R}}^n}) \, C \, dv^n \,\, , \\
(F_1 C)(0) - (F_1 C)(k) & =
\int_{{\mathbb{R}}^n} \left[1 - \cos( k \cdot {\textrm{id}}_{{\mathbb{R}}^n})\right] C \, dv^n =
2 \int_{{\mathbb{R}}^n} \sin^2\left( \frac{k}{2} \cdot {\textrm{id}}_{{\mathbb{R}}^n}\right) C \, dv^n \, \, . 
\end{align*}
Furthermore for $n > 1$, we note that
\begin{align*}
(F_1 C)(R(k)) & = \int_{{\mathbb{R}}^n} e^{- i R(k) \cdot {\textrm{id}}_{{\mathbb{R}}^n}} C \, dv^n =
 \int_{{\mathbb{R}}^n} e^{- i R(k) \cdot R}  \, (C \circ R) \, dv^n \\
& = \int_{{\mathbb{R}}^n} e^{- i k \cdot {\textrm{id}}_{{\mathbb{R}}^n}} \, (C \circ R) \, dv^n
= \int_{{\mathbb{R}}^n} e^{- i k \cdot {\textrm{id}}_{{\mathbb{R}}^n}} C \, dv^n =
(F_1 C)(k)
\end{align*} 
for every $R \in SO(n)$ and $k \in {\mathbb{R}}^n$ and hence that 
\begin{equation*}
(F_1 C)(k) = (F_1 C)(|k| . e_1)
\end{equation*}
for every $k \in {\mathbb{R}}^n$.
In particular, 
\begin{align*}
\frac{1}{\rho} \left[ 
(F_1 C)(0) - F_1 C
\right] \circ \iota \, \in U^s_{\mathbb{R}}([0,\infty)) 
\end{align*}
and 
\begin{align*}
& \bigg\{\frac{1}{\rho} \left[ 
(F_1 C)(0) - F_1 C
\right] \circ \iota \bigg\}({\cal L}_n) = 
F_2^{-1} \circ T_{\big\{\frac{1}{\rho} \left[ 
(F_1 C)(0) - F_1 C
\right] \circ \iota \big\} \circ 
\left(\frac{E}{\rho} |\,\,|^2\right)
} \circ F_{2} \\
& = F_2^{-1} \circ
T_{\frac{1}{\rho} \left[ 
(F_1 C)(0) - F_1 C
\right] \circ (\,|\,\,| . e_1) 
}
\circ F_{2} = 
F_2^{-1} \circ
T_{\frac{1}{\rho} \left[ 
(F_1 C)(0) - F_1 C
\right]  
}
\circ F_{2} =
A_{C} \, \, .
\end{align*}
\end{proof}

Lemma~\ref{convergenceofboundedfunctions} gives conditions 
for the convergence of bounded functions of a self-adjoint operator 
to converge to that operator, which implies strong resolvent convergence 
and also the strong convergence of the same bounded continuous function of each member of the sequence against that bounded continuous function of the self-adjoint operator; see Theorem~\ref{applicationofstrongresolventconvergence}. 

\begin{lem} \label{convergenceofboundedfunctions} {\bf (Convergence of Bounded Functions of a Self-Adjoint Operator to that Operator)}
Let $(X,\braket{\,|\,})$ be a non-trivial complex Hilbert space and
$A : D(A) \rightarrow X$ a densely-defined, linear and self-adjoint 
operator with spectrum $\sigma(A)$. Furthermore, let $f_1,f_2,\dots$
be a sequence in $U_{\mathbb{C}}^s(\sigma(A))$ that is 
everywhere on $\sigma(A)$ pointwise convergent to ${\textrm{id}}_{\mathbb{\sigma(A)}}$, and for which there is $M > 0$  
such that 
\begin{equation} \label{bound}
|f_{\nu}| \leqslant M [(1 + |\, \,|)|_{\sigma(A)}]  
\end{equation}
for all $\nu \in {\mathbb{R}}$. Then 
\begin{equation*}
\lim_{\nu \rightarrow \infty} f_{\nu}(A)\xi = 
A \xi \, \, , \quad \xi \in D(A).
\end{equation*}
\end{lem}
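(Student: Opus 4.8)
The plan is to reduce the operator statement to a scalar integral over the spectrum by means of the spectral theorem, and then to pass to the limit under the integral sign via the Dominated Convergence Theorem. Let $\{E_\lambda\}_{\lambda \in {\mathbb{R}}}$ be the spectral family associated with $A$, and for $\xi \in X$ let $\mu_\xi$ denote the finite positive Borel measure on $\sigma(A)$ given by $\mu_\xi(S) := \|E(S)\xi\|^2$, so that $\mu_\xi(\sigma(A)) = \|\xi\|^2$. Fix $\xi \in D(A)$. Since each $f_\nu$ is bounded, $f_\nu(A)$ is everywhere defined, and for $\xi \in D(A)$ the functional calculus gives the fundamental identity
\[
\|f_\nu(A)\xi - A\xi\|^2 = \int_{\sigma(A)} |f_\nu(\lambda) - \lambda|^2 \, d\mu_\xi(\lambda) \, \, .
\]
It therefore suffices to show that the right-hand side tends to $0$ as $\nu \to \infty$.

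First I would note that the integrand converges pointwise to $0$ on $\sigma(A)$, because by hypothesis $f_\nu$ converges everywhere on $\sigma(A)$ to the identity function. It then remains only to exhibit a $\nu$-independent, $\mu_\xi$-integrable dominating function, after which the Dominated Convergence Theorem permits interchanging the limit with the integral.

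The dominant is furnished by the growth bound (\ref{bound}). From $|f_\nu| \leqslant M [(1 + |\, \,|)|_{\sigma(A)}]$ one obtains, for every $\lambda \in \sigma(A)$ and every $\nu$,
\[
|f_\nu(\lambda) - \lambda| \leqslant |f_\nu(\lambda)| + |\lambda| \leqslant M + (M+1)|\lambda| \, \, ,
\]
whence $|f_\nu(\lambda) - \lambda|^2 \leqslant 2 M^2 + 2 (M+1)^2 \lambda^2$. This majorant is $\mu_\xi$-integrable: its constant part integrates to $2 M^2 \|\xi\|^2$ since $\mu_\xi$ is finite, and its quadratic part integrates to $2 (M+1)^2 \|A\xi\|^2$ since $\xi \in D(A)$ is equivalent to $\int_{\sigma(A)} \lambda^2 \, d\mu_\xi = \|A\xi\|^2 < \infty$. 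With this $\nu$-independent integrable dominant in hand, the Dominated Convergence Theorem yields that the integral above tends to $0$, which is the claim.

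I expect the single delicate point to be the construction of the integrable dominant, as this is where both hypotheses are genuinely consumed: the linear growth bound (\ref{bound}) is precisely what forces the squared difference $|f_\nu(\lambda) - \lambda|^2$ to be controlled by a quadratic in $\lambda$, and that quadratic is $\mu_\xi$-integrable exactly because $\xi$ belongs to $D(A)$. Dropping either assumption would break the domination, so no shortcut around this step seems available.
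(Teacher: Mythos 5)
Your proof is correct, and it takes a genuinely more direct route than the paper's. You invoke the spectral theorem in its strong form, namely the isometry $\|g(A)\xi\|^2 = \int_{\sigma(A)} |g|^2 \, d\mu_{\xi}$ applied to the \emph{unbounded} measurable function $g = f_{\nu} - {\textrm{id}}_{\sigma(A)}$ (legitimate on $D(g(A)) = D(A)$, since $f_{\nu}$ is bounded), and then finish with a single application of dominated convergence, with the dominant $2M^2 + 2(M+1)^2 \lambda^2$ whose $\mu_{\xi}$-integrability is exactly the statement $\xi \in D(A)$. The paper deliberately avoids applying the functional calculus to any unbounded function: it first proves that $\left(f_{\nu}(A)\xi\right)_{\nu}$ is a Cauchy sequence, by computing $\|f_{\mu}(A)\xi - f_{\nu}(A)\xi\| = \|f_{\mu} - f_{\nu}\|_{2,\psi_{\xi}}$ for the \emph{bounded} function $f_{\mu} - f_{\nu}$ and inserting ${\textrm{id}}_{\sigma(A)}$ via the triangle inequality in $L^2(\psi_{\xi})$, where the same dominated-convergence estimate you use shows $\|f_{\nu} - {\textrm{id}}_{\sigma(A)}\|_{2,\psi_{\xi}} \rightarrow 0$; it then identifies the limit weakly through the quadratic-form identity $\braket{\xi|f_{\nu}(A)\xi} = \int_{\sigma(A)} f_{\nu} \, d\psi_{\xi} \rightarrow \int_{\sigma(A)} {\textrm{id}}_{\sigma(A)} \, d\psi_{\xi} = \braket{\xi|A\xi}$, followed by polarization and density of $D(A)$. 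Both arguments consume the hypotheses at the same points — pointwise convergence plus the linear growth bound (\ref{bound}) feed the dominated convergence theorem, and $\xi \in D(A)$ supplies integrability of $\lambda^2$ — so the difference is structural, not substantive. What each buys: your version is shorter and makes the mechanism transparent as a single $L^2(\mu_{\xi})$ convergence statement, but it does lean on the identity $f_{\nu}(A)\xi - A\xi = (f_{\nu} - {\textrm{id}}_{\sigma(A)})(A)\xi$ for $\xi \in D(A)$, a standard but nontrivial fact of the unbounded functional calculus that you should state or cite explicitly (it holds because $D\!\left((f_{\nu} - {\textrm{id}}_{\sigma(A)})(A)\right) = D(A)$ when $f_{\nu}$ is bounded); the paper's two-step Cauchy-plus-weak-identification route gets by with only the bounded calculus and the elementary form identity for $A$, at the cost of the extra detour.
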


\begin{proof}
Let $\xi \in D(A)$ and $\psi_{\xi}$
the corresponding spectral measure. According to the spectral 
theorem for densely-defined, self-adjoint linear operators
in Hilbert spaces, ${\textrm{id}}_{\mathbb{R}}^2$ is 
$\psi_{\xi}$-summable and  
\begin{align*}
& \|f_{\mu}(A)\xi - f_{\nu}(A) \xi \|^2 = 
\|(f_{\mu} -f_{\nu})(A)\xi\|^2  \\
& =
\braket{(f_{\mu} -f_{\nu})(A)\xi|(f_{\mu} -f_{\nu})(A)\xi}
= \braket{\xi||f_{\mu} -f_{\nu}|^2(A)\xi} \\
& = \int_{\sigma(A)} |f_{\mu} - f_{\nu}|^2 \, d\psi_{\xi}
= \|f_{\mu} - f_{\nu}\|_{2,\psi_{\xi}}^2 =
\|f_{\mu} - {\textrm{id}}_{\sigma(A)} + {\textrm{id}}_{\sigma(A)} - f_{\nu}\|_{2,\psi_{\xi}}^2 \\
& \leqslant \left(\|f_{\mu} - {\textrm{id}}_{\sigma(A)} \|_{2,\psi_{\xi}} + \|{\textrm{id}}_{\sigma(A)} - f_{\nu}\|_{2,\psi_{\xi}}\right)^2 \, \, ,
\end{align*} 
for $\mu,\nu \in {\mathbb{N}}^{*}$. As a consequence of the pointwise convergence of $f_1,f_2,\dots$ on $\sigma(A)$ to ${\textrm{id}}_{\mathbb{\sigma(A)}}$, 
(\ref{bound}) and Lebesgue's dominated convergence 
theorem, it follows that 
\begin{equation*}
\lim_{\mu \rightarrow \infty} \|f_{\mu} - {\textrm{id}}_{\sigma(A)} \|_{2,\psi_{\xi}} = 0 
\end{equation*}
and hence that $f_{1}(A) \xi,f_{2}(A) \xi,\dots$ is a Cauchy sequence
in $X$. Since $(X,\|\,\,\|)$ is inparticular complete, the latter
implies that $f_{1}(A) \xi,f_{2}(A) \xi,\dots$ is convergent in
$(X,\|\,\,\|)$. Furthermore,
\begin{align*}
& \braket{\xi|\lim_{\nu \rightarrow \infty} f_{\nu}(A) \xi}
= \lim_{\nu \rightarrow \infty} \braket{\xi|f_{\nu}(A) \xi}
= \lim_{\nu \rightarrow \infty} 
\int_{\sigma(A)} f_{\nu} \, d\psi_{\xi} \\
& =
\int_{\sigma(A)} {\textrm{id}}_{\sigma(A)} \, d\psi_{\xi}
= \braket{\xi|A \xi} \, \, , 
\end{align*}
where again the pointwise convergence of $f_1,f_2,\dots$ on $\sigma(A)$ to ${\textrm{id}}_{\mathbb{\sigma(A)}}$,
(\ref{bound}), Lebesgue's dominated convergence 
theorem and the spectral 
theorem for densely-defined, self-adjoint linear operators
in Hilbert spaces has been applied. From the polarization 
identity for $\braket{\,|\,}$, it follows that 
\begin{equation*}
\braket{\xi|\lim_{\nu \rightarrow \infty} f_{\nu}(A) \eta}
= \braket{\xi|A \eta}
\end{equation*}
for all $\xi, \eta \in D(A)$. Since $D(A)$ is dense in $X$, 
the latter implies that 
\begin{equation*}
\braket{\xi|\lim_{\nu \rightarrow \infty} f_{\nu}(A) \eta}
= \braket{\xi|A \eta}
\end{equation*}
for all $\xi \in X$, $\eta \in D(A)$ and hence for every $\eta \in D(A)$ that 
\begin{equation*}
\lim_{\nu \rightarrow \infty} f_{\nu}(A) \eta = A \eta \, \, .
\end{equation*}
\end{proof}

Examples~\ref{example1} and \ref{example2} provide sequences of
micromoduli which satisfy the conditions of
Lemma~\ref{convergenceofboundedfunctions}.  Example~\ref{example1}
has also been treated in \cite{mikata2012,sillingZimmermannAbeyaratne2003}
and Example~\ref{example2} has been treated in \cite{mikata2012}.
Example~\ref{example3}
applies Theorem~\ref{applicationofstrongresolventconvergence} to the
sequences of micromoduli from
Examples~\ref{example1} and~\ref{example2}. As a consequence, for fixed
data and $t \in {\mathbb{R}}$, the solutions of the initial value
problem at time $t$ corresponding to the members of each sequence of
micromoduli converge in $L^2_{\mathbb{C}}({\mathbb{R}})$ to the
corresponding classical solution at time $t$.

\begin{ex} \label{example1}
For every $\nu \in {\mathbb{N}}^{*}$, we define $C_{\nu} 
\in L^1({\mathbb{R}})$ by 
\begin{equation} \label{defexample1}
C_{\nu} := 3 E \nu^3 \chi_{_{\left[-\frac{1}{\nu},\frac{1}{\nu}\right]}} \, \, .
\end{equation} 
For $\nu \in {\mathbb{N}}^{*}$
\begin{equation*}
F_1 C_{\nu} =  6 E \nu^3 \, \overline{
\frac{\sin(\nu^{-1} . {\textrm{id}}_{\mathbb{R}})}{{\textrm{id}}_{\mathbb{R}}}} \, \, , 
\end{equation*}
where 
\begin{equation*}
\overline{
\frac{\sin(\nu^{-1} . {\textrm{id}}_{\mathbb{R}})}{{\textrm{id}}_{\mathbb{R}}}}
\end{equation*}
denotes the unique extension of
$
\sin(\nu^{-1} . {\textrm{id}}_{\mathbb{R}})/{\textrm{id}}_{\mathbb{R}}
$
to a continuous function on ${\mathbb{R}}$. Furthermore, 
for $\nu \in {\mathbb{N}}^{*}$, $\lambda \geqslant 0$
\begin{align*}
& \frac{1}{\rho} \left[ 
(F_1 C_{\nu})(0) - F_1 C
\right] \circ \iota(\lambda) = 
\frac{1}{\rho} \left[ 
(F_1 C_{\nu})(0) - F_1 C_{\nu}
\right]\left( \sqrt{\frac{\rho}{E} \, \lambda \,}\,\right) \\
& = \frac{6E \nu^2}{\rho} \left[ 
1 - \overline{\frac{\sin(\nu^{-1} . {\textrm{id}}_{\mathbb{R}})}{\nu^{-1} .{\textrm{id}}_{\mathbb{R}}}} 
\right]\!\!\left( \sqrt{\frac{\rho}{E} \, \lambda \,}\,\right)
\end{align*}
and $k > 0$
\begin{align*}
& 1 - \frac{\sin(k/\nu)}{k/\nu}  = 
\nu \int_{0}^{1/\nu} [1 - \cos(k x)] \, dx 
= \int_{0}^{1} \left[1 - \cos(k u /\nu)\right] \, du \\
& = \int_{0}^{1} \left[ 
\int_{0}^{k / \nu} u \sin(u y) \, dy\right]
\, du = \frac{k}{\nu} \int_{0}^{1} \left[ 
\int_{0}^{1} u \sin(k u v/ \nu) \, dv\right]
\, du \\
& = \frac{k^2}{\nu^2} \int_{[0,1]^2} u^2 v \, \frac{\sin(k u v/ \nu)}{k u v / \nu} \, 
du dv
\end{align*}
and hence that 
\begin{equation*} 
\nu^2 \left[ 
1 - \frac{\sin(k/\nu)}{k/\nu} \right] 
= k^2 \int_{[0,1]^2} u^2 v \, \frac{\sin(k u v/ \nu)}{k u v / \nu} \, 
du dv \, \, .
\end{equation*}
From the latter, we conclude with the help of Lebesgue's dominated convergence theorem that 
\begin{equation*}
\lim_{\nu \rightarrow \infty} \nu^2 \left[ 
1 - \frac{\sin(k/\nu)}{k/\nu} \right] 
 = \frac{k^2}{6}
\end{equation*}
as well as that 
\begin{equation*}
\bigg| 
\nu^2 \left[ 
1 - \frac{\sin(k/\nu)}{k/\nu} \right] 
\bigg| \leqslant k^2 \int_{[0,1]^2} u^2 v \, \bigg|\frac{\sin(k u v/ \nu)}{k u v / \nu}\bigg| \, 
du dv \leqslant k^2 \int_{[0,1]^2} u^2 v \, 
du dv = \frac{k^2}{6} \, \, . 
\end{equation*}
In particular, we conclude for $\lambda \geqslant 0$ that 
\begin{equation*}
\lim_{\nu \rightarrow \infty} \frac{1}{\rho} \left[ 
(F_1 C_{\nu})(0) - F_1 C
\right] \circ \iota(\lambda) = \frac{6E \nu^2}{\rho} \cdot 
\frac{\rho \lambda}{6 E \nu^2} = \lambda
\end{equation*}
as well as that 
\begin{align*}
\bigg|\frac{1}{\rho} \left[ 
(F_1 C_{\nu})(0) - F_1 C
\right] \circ \iota(\lambda) \bigg| \leqslant  
\frac{6E}{\rho} \cdot \frac{1}{6} \, \frac{\rho \lambda}{E} =
\lambda \, \, .
\end{align*}
Finally, we conclude from  Lemma~\ref{convergenceofboundedfunctions} that 
\begin{equation*}
\lim_{\nu \rightarrow \infty} 
\left\{\frac{1}{\rho} \left[ 
(F_1 C_{\nu})(0) - F_1 C
\right] \circ \iota \right\}\!({\cal L}_1) f =
{\cal L}_1 f
\end{equation*}
for every $f \in D({\cal L}_1) = W^2_{\mathbb{C}}({\mathbb{R}})$, 
where ${\cal L}_1$ is the classical governing operator in $1$ dimension, defined in Theorem~\ref{classicalgoverningoperator}.
\end{ex}

\begin{ex} \label{example2}
For every $\nu \in {\mathbb{N}}^{*}$, we define $C_{\nu} 
\in L^1({\mathbb{R}})$ by 
\begin{equation} \label{defexample2}
C_{\nu} := \frac{2 E \nu^3}{\sqrt{2 \pi} } \, e^{- (\nu^2 / 2) . {\mathrm{id}}_{\mathbb{R}}^2}  = 2 E \nu^2 \cdot 
\frac{\nu}{\sqrt{2 \pi} } \, e^{- (\nu^2 / 2) . {\mathrm{id}}_{\mathbb{R}}^2}
\, \, .
\end{equation} 
For $\nu \in {\mathbb{N}}^{*}$, $\lambda \geqslant 0$
\begin{align*}
& F_1 C_{\nu} =  2 E \nu^2  \cdot  e^{- [1/(2 \nu^2)] . {\mathrm{id}}_{\mathbb{R}}^2} \, \, , \\
& \frac{1}{\rho} \left[ 
(F_1 C_{\nu})(0) - F_1 C_{\nu}
\right] \circ \iota(\lambda) = 
\frac{1}{\rho} \left[ 
(F_1 C_{\nu})(0) - F_1 C_{\nu}
\right]\!\left( \sqrt{\frac{\rho}{E} \, \lambda \,}\,\right) \\
& = \frac{2 E \nu^2}{\rho} \left\{ 
1 - e^{- [1/(2 \nu^2)].  {\mathrm{id}}_{\mathbb{R}}^2}\right\}
\!\!\left( \sqrt{\frac{\rho}{E} \, \lambda \,}\,\right) 
\end{align*}
and $k \geqslant 0$ 
\begin{align*}
\nu^2 [1 - e^{- k^2/(2 \nu^2)}] =
\nu^2 \int_{0}^{k^2/(2 \nu^2)} e^{-u} \, du
= \int_{0}^{k^2/2} e^{- v/\nu^2 } \, dv \, \, .
\end{align*}
From the latter, we conclude for $k \geqslant 0$, 
with the help of Lebesgue's dominated convergence theorem, that 
\begin{equation*}
\lim_{\nu \rightarrow \infty} 
\nu^2 [1 - e^{- k^2/(2 \nu^2)}] = \frac{k^2}{2}
\end{equation*}
as well as that 
\begin{equation*}
\bigg| 
\nu^2 [1 - e^{-k^2/ (2 \nu^2)}]\,
\bigg| 
\leqslant \frac{k^2}{2}
\end{equation*}
and hence for $\lambda \geqslant 0$ that 
\begin{align*}
& \lim_{\nu \rightarrow \infty}
\frac{1}{\rho} \left[ 
(F_1 C_{\nu})(0) - F_1 C_{\nu}
\right] \circ \iota(\lambda) = 
\frac{2 E}{\rho} \frac{\rho}{2 E} \, \lambda 
= \lambda \, \, , \\
& \bigg| \frac{1}{\rho} \left[ 
(F_1 C_{\nu})(0) - F_1 C_{\nu}
\right] \circ \iota(\lambda) \bigg| \leqslant 
\frac{2 E}{\rho} \, \frac{\rho}{2 E} \, \lambda 
= \lambda \, \, .
\end{align*}
Finally, we conclude from  Lemma~\ref{convergenceofboundedfunctions} that 
\begin{equation*}
\lim_{\nu \rightarrow \infty} 
\left\{\frac{1}{\rho} \left[ 
(F_1 C_{\nu})(0) - F_1 C
\right] \circ \iota \right\}\!({\cal L}_1) f =
{\cal L}_1 f
\end{equation*}
for every $f \in D({\cal L}_1) = W^2_{\mathbb{C}}({\mathbb{R}})$, 
where ${\cal L}_1$ is the classical governing operator in $1$ dimension, defined in Theorem~\ref{classicalgoverningoperator}.
\end{ex}

\begin{thm} \label{applicationofstrongresolventconvergence}
{\bf (An Application of Strong Resolvent Convergence)}
Let $(X,\braket{\,|\,})$ be a non-trivial complex Hilbert space
and $A : D(A) \rightarrow X$ a densely-defined, linear and self-adjoint 
operator with spectrum $\sigma(A)$. Furthermore, let $f_1,f_2,\dots$
be a sequence of real-valued functions in $U_{\mathbb{C}}^s(\sigma(A))$ that is 
everywhere on $\sigma(A)$ pointwise convergent to ${\textrm{id}}_{\mathbb{\sigma(A)}}$, and for which there is $M > 0$  
such that 
\begin{equation} \label{bound2}
|f_{\nu}| \leqslant M [(1 + |\, \,|)|_{\sigma(A)}]  
\end{equation}
for all $\nu \in {\mathbb{R}}$. Then for every $g \in BC({{\mathbb{R}},{\mathbb{C}}})$
\begin{equation*}
s-\lim_{\nu \rightarrow \infty} [g|_{\sigma(f_{\nu}(A))}](f_{\nu}(A)) = [g|_{\sigma(A)}](A) \, \, , 
\end{equation*}
where for every $\nu \in {\mathbb{N}}^{*}$, 
$\sigma(f_{\nu}(A))$ denotes the spectrum of 
$f_{\nu}(A)$.
\end{thm}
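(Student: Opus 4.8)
The plan is to collapse the statement about the sequence of distinct self-adjoint operators $f_\nu(A)$ into a statement about a sequence of functions of the single operator $A$, and then finish with one application of Lebesgue's dominated convergence theorem. Since each $f_\nu$ is real-valued and bounded, $f_\nu(A)$ is a bounded self-adjoint operator, and the \emph{composition rule} of the functional calculus gives
\begin{equation*}
[g|_{\sigma(f_\nu(A))}](f_\nu(A)) = (g \circ f_\nu)(A)
\end{equation*}
for every $g \in BC(\mathbb{R},\mathbb{C})$. Thus both operators in the asserted limit are functions of $A$ alone, and the spectral calculus of $A$ applies to their difference directly.

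Carrying this out, I would first record that $g \circ f_\nu \in U^s_{\mathbb{C}}(\sigma(A))$: writing $f_\nu$ as an everywhere pointwise limit of step functions $s_1,s_2,\dots$, the continuity of $g$ yields $g \circ f_\nu = \lim_m (g \circ s_m)$ pointwise, each $g \circ s_m$ is again a step function, and $g \circ f_\nu$ is bounded because $g$ is. Then, fixing $\xi \in X$ with associated spectral measure $\psi_\xi$, the spectral theorem gives
\begin{equation*}
\big\| [g|_{\sigma(f_\nu(A))}](f_\nu(A))\,\xi - [g|_{\sigma(A)}](A)\,\xi \big\|^2 = \int_{\sigma(A)} |g \circ f_\nu - g|^2 \, d\psi_\xi .
\end{equation*}
Because $f_\nu \to {\textrm{id}}_{\sigma(A)}$ everywhere on $\sigma(A)$ and $g$ is continuous, the integrand tends to $0$ at every point of $\sigma(A)$, and it is dominated by the constant $4\,(\sup|g|)^2$, which is $\psi_\xi$-summable since $\psi_\xi(\sigma(A)) = \|\xi\|^2 < \infty$. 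Dominated convergence then drives the integral, hence the norm on the left, to $0$, which is exactly the claimed strong convergence. In this route the growth bound (\ref{bound2}) and Lemma~\ref{convergenceofboundedfunctions} play no role, the boundedness of $g$ by itself supplying the dominating function.

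I expect the only genuinely delicate step to be the justification of the composition rule inside the present $U^s$ functional calculus: one must identify the spectral family of $f_\nu(A)$ with the pushforward of that of $A$ under $f_\nu$, and check that $f_\nu(\lambda) \in \sigma(f_\nu(A))$ for $\psi_\xi$-almost every $\lambda$, so that $g|_{\sigma(f_\nu(A))}$ may be replaced by $g$ itself under the integral. Should one prefer to honor the title and avoid the composition rule, a parallel route is available: Lemma~\ref{convergenceofboundedfunctions} furnishes $f_\nu(A)\xi \to A\xi$ for every $\xi$ in the core $D(A)$ of $A$, which forces $f_\nu(A) \to A$ in the strong resolvent sense and hence strong convergence $h(f_\nu(A))\xi \to h(A)\xi$ for all $h \in C_\infty(\mathbb{R})$; the common total mass $\|\xi\|^2$ of the spectral measures at $\xi$ then upgrades the attendant vague convergence to the weak convergence needed for a general $g \in BC(\mathbb{R},\mathbb{C})$, and weak together with norm convergence produces the strong limit.
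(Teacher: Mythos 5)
Your proposal is correct, but your primary route is genuinely different from the paper's. The paper disposes of this theorem in one line, as a consequence of Lemma~\ref{convergenceofboundedfunctions} together with the standard strong-resolvent machinery of \cite[Vol.~I, Thm.~8.20 and Thm.~8.25]{reedSimon_books}: the Lemma gives $f_{\nu}(A)\xi \rightarrow A\xi$ for $\xi \in D(A)$, which is a common core since each $f_{\nu}(A)$ is bounded, whence Thm.~8.25 yields strong resolvent convergence $f_{\nu}(A) \rightarrow A$ and Thm.~8.20 yields $g(f_{\nu}(A)) \rightarrow g(A)$ strongly for every $g \in BC({\mathbb{R}},{\mathbb{C}})$ --- this is exactly your fallback route, so your reconstruction of the intended proof is accurate (your remark on upgrading the $C_{\infty}$ case via the fixed total mass $\|\xi\|^2$ is essentially how Thm.~8.20(b) is proved). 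Your main route --- the composition rule $[g|_{\sigma(f_{\nu}(A))}](f_{\nu}(A)) = (g \circ f_{\nu})(A)$, followed by $\bigl\|[g|_{\sigma(f_{\nu}(A))}](f_{\nu}(A))\xi - [g|_{\sigma(A)}](A)\xi\bigr\|^2 = \int_{\sigma(A)} |g \circ f_{\nu} - g|^2 \, d\psi_{\xi}$ and dominated convergence with the constant dominant $4\|g\|_{\infty}^2$ --- is sound and, in this setting where all approximants are functions of the single operator $A$, strictly more elementary: it bypasses resolvents entirely, it gives the strong limit directly for \emph{every} $\xi \in X$ rather than first on the dense set $D(A)$, and it exposes that the growth bound (\ref{bound2}) is superfluous for this theorem, pointwise convergence of $f_{\nu}$ to ${\textrm{id}}_{\sigma(A)}$ alone sufficing (the bound is needed only for Lemma~\ref{convergenceofboundedfunctions} itself). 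The one step you rightly flag, identifying the spectral family of $f_{\nu}(A)$ with the pushforward under $f_{\nu}$ of that of $A$, is standard Borel functional calculus and poses no obstacle; your verification that $g \circ f_{\nu} \in U^s_{\mathbb{C}}(\sigma(A))$ is also correct, since composing $g$ with a step function again yields a step function, and note that the real-valuedness of the $f_{\nu}$ is still used to make $f_{\nu}(A)$ self-adjoint. What the paper's formulation buys in exchange is robustness: the strong-resolvent argument does not require the approximating operators to be functions of $A$ itself, which is why the theorem is stated and titled as it is.
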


\begin{proof}
The statement is a consequence of Lemma~\ref{convergenceofboundedfunctions} 
and, for example, 
\cite[Vol.~I, Thm. 8.20 and Thm. 8.25]{reedSimon_books}.
\end{proof}
 
\begin{ex} \label{example3}
As a consequence of Examples~\ref{example1} and \ref{example2},
for every $g \in BC({{\mathbb{R}},{\mathbb{C}}})$
\begin{equation*}
s-\lim_{\nu \rightarrow \infty} [g|_{\sigma(A_{C_{\nu}})}](A_{C_{\nu}})
= [g|_{\sigma({\cal L}_1)}]({\cal L}_1) \, \, , 
\end{equation*}
where ${\cal L}_1$ is the classical governing operator in $1$ dimension, defined in Theorem~\ref{classicalgoverningoperator}, and for every $\nu \in {\mathbb{N}}^{*}$, $A_{C_{\nu}}$ is defined by (\ref{definitionofA}), corresponding to the micromodulus  $C_{\nu}$ given by (\ref{defexample1}) and spectrum $\sigma(A_{C_{\nu}})$, or for every $\nu \in {\mathbb{N}}^{*}$, $A_{C_{\nu}}$ is defined by (\ref{definitionofA}), corresponding to the micromodulus $C_{\nu}$ given by (\ref{defexample2}) and spectrum $\sigma(A_{C_{\nu}})$.
\end{ex}

\section{Representation and Properties of the Solutions}
\label{sec:representationSolu}
We consider the calculation of the solutions of the homogeneous wave
equation using (\ref{representationofthesolution}). Since the
governing peridynamic operator is bounded, the functions of that
operator in (\ref{representationofthesolution}) can be represented in
form of power series in the governing operator.  We provide a
representation of a class of holomorphic functions of a bounded,
self-adjoint operator in
Lemma~\ref{holomorphicfunctionalcalculusI}. We apply this
representation to the functions present in the solution of the initial
value problem of the homogeneous wave equation in
Lemma~\ref{approximations}.
Lemma~\ref{holomorphicfunctionalcalculusI} and
Lemma~\ref{approximations} can be viewed as straightforward
applications of the spectral theorems for densely-defined,
self-adjoint linear operators in Hilbert spaces.  On the other hand,
the matrix entries of the governing operator are sums of two commuting
operators, a multiple of the identity operator and a convolution.
Therefore, power series expansions in terms of the convolution
operator turn out to be more useful.  For this purpose, the
application of the new expansions given in
Theorems~\ref{additiontheorem},~\ref{additiontheorem2} and
\ref{besselrepresentationtheorem} proved to be superior; see
Examples~\ref{gaussians} and \ref{gaussians1}.  In particular,
Corollary~\ref{errorestimates} gives an error estimate for the
expansion in Theorem~\ref{additiontheorem2}. This error estimate has
been used to plot the solution in Figures \ref{fig:comparisonContinuous} 
and \ref{fig:comparisonDiscontinuous}.

\begin{lem} {\bf (Holomorphic Functional Calculus)}
\label{holomorphicfunctionalcalculusI}
Let $(X,\braket{\,|\,})$ be a non-trivial complex 
Hilbert space, $A \in L(X,X)$ self-adjoint and $\sigma(A) \subset {\mathbb{R}}$ the (non-empty, compact) spectrum of $A$. Furthermore, 
let $R > \|A\|$ and  
$f : U_{R}(0) \rightarrow {\mathbb{C}}$ be holomorphic. Then,
the sequence 
\begin{equation*}
\left(\frac{f^{(k)}(0)}{k !} . A^{k} \right)_{k \in {\mathbb{N}}}
\end{equation*}
is absolutely summable in $L(X,X)$ and 
\begin{equation*}
(f|_{{\sigma}(A)})(A) = 
\sum_{k=0}^{\infty} \frac{f^{(k)}(0)}{k !} . A^{k} 
\, \, .
\end{equation*}  
\end{lem}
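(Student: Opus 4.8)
The plan is to prove the lemma in two stages: first establish that the operator-valued series converges absolutely in the Banach space $L(X,X)$, then identify its sum with $(f|_{\sigma(A)})(A)$ via the functional calculus. For the first stage, I would use the hypothesis $R > \|A\|$ together with the fact that the Taylor coefficients of a holomorphic function on $U_R(0)$ control its growth. Since $f$ is holomorphic on the open disk $U_R(0)$, pick any $r$ with $\|A\| < r < R$. By Cauchy's estimates (or equivalently, the fact that the power series $\sum_k \frac{f^{(k)}(0)}{k!} z^k$ has radius of convergence at least $R$), there is a constant such that $|f^{(k)}(0)/k!| \leqslant M_r \, r^{-k}$ for all $k$. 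Then the operator norms satisfy
\begin{equation*}
\left\| \frac{f^{(k)}(0)}{k!} . A^k \right\| \leqslant \frac{|f^{(k)}(0)|}{k!} \, \|A\|^k \leqslant M_r \left( \frac{\|A\|}{r} \right)^{\! k},
\end{equation*}
and since $\|A\|/r < 1$ this is a convergent geometric series, giving absolute summability in $L(X,X)$.

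For the second stage, the key is the functional calculus of the bounded self-adjoint operator $A$. I would invoke the fact that for a \emph{polynomial} $p$, the functional calculus reduces to substitution, i.e. $(p|_{\sigma(A)})(A) = p(A)$ computed by the ordinary operator arithmetic; this is a standard property of the spectral calculus. Denote the partial sums by $p_N(z) := \sum_{k=0}^{N} \frac{f^{(k)}(0)}{k!} z^k$, so that $(p_N|_{\sigma(A)})(A) = \sum_{k=0}^{N} \frac{f^{(k)}(0)}{k!} . A^k$. By stage one these partial sums converge in $L(X,X)$ to the claimed sum. On the other hand, since $\sigma(A)$ is compact and contained in $\overline{U_{\|A\|}(0)} \subset U_R(0)$, the Taylor polynomials $p_N$ converge to $f$ uniformly on $\sigma(A)$. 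The functional calculus is an isometric (or at least norm-contracting for the supremum norm) algebra homomorphism, so uniform convergence $p_N \to f$ on $\sigma(A)$ yields $(p_N|_{\sigma(A)})(A) \to (f|_{\sigma(A)})(A)$ in operator norm. Matching the two limits of the same sequence $\left( (p_N|_{\sigma(A)})(A) \right)_N$ gives the identity.

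The main obstacle, such as it is, lies in the careful bookkeeping connecting the two convergence notions: I must ensure that the uniform estimate $\sup_{z \in \sigma(A)} |p_N(z) - f(z)| \to 0$ is legitimate, which requires that $\sigma(A)$ sits strictly inside the disk of convergence. This is exactly where the strict inequality $R > \|A\|$ is used, since the spectral radius of a bounded self-adjoint operator equals $\|A\|$, so $\sigma(A) \subset [-\|A\|, \|A\|]$ and hence $\sup_{z \in \sigma(A)} |z| = \|A\| < R$. With a radius $r \in (\|A\|, R)$ fixed, uniform convergence of the Taylor series on the closed disk of radius $\|A\|$ follows from standard power-series theory. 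Once both convergence statements are in hand, the proof closes by the uniqueness of limits in $L(X,X)$.
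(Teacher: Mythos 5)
Your proposal is correct and follows essentially the same route as the paper's proof: absolute summability of the operator series from $\|A\| < R$ (the paper cites ``general properties of power series'' where you make the Cauchy-estimate bound $M_r(\|A\|/r)^k$ explicit), uniform convergence of the Taylor polynomials on the compact set $\sigma(A) \subset B_{\|A\|}(0) \subset U_R(0)$, and the spectral theorem's polynomial substitution property plus norm-continuity of the continuous functional calculus to identify the two limits. The only difference is expository thoroughness, not substance.
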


\begin{proof}
First, we note that according to Taylor's theorem, general properties of power series and the compactness of $\sigma(A)$
that 
\begin{equation*}
\left(\frac{f^{(k)}(0)}{k !} . z^{k} \right)_{k \in {\mathbb{N}}}
\end{equation*}
is absolutely summable for every $z \in U_{R}(0)$ as well as, since 
$\sigma(A) \subset B_{\|A\|}(0) \subset U_{R}(0),$
that the sequence of continuous functions 
\begin{equation*}
\left(\,\sum_{k=0}^{n}  \frac{f^{(k)}(0)}{k !} . {({\textrm{id}}_{\mathbb{R}}}|_{\sigma(A)})^n \right)_{n \in {\mathbb{N}}}  
\end{equation*} 
converges uniformly to the continuous 
function 
$f|_{{\sigma}(A)}$. In particular, since 
$\|A\| < R,$ this implies that the sequence 
\begin{equation*}
\left(\frac{f^{(k)}(0)}{k !} . A^{k} \right)_{k \in {\mathbb{N}}}
\end{equation*}
is absolutely summable in $L(X,X)$, and  
it follows from the spectral theorem for bounded self-adjoint 
operators in Hilbert spaces that 
\begin{equation*}
\left(\,\sum_{k=0}^{n}  \frac{f^{(k)}(0)}{k !} . {({\textrm{id}}_{\mathbb{R}}}|_{\sigma(A)})^n\right)(A) =
\sum_{k=0}^{n} \frac{f^{(k)}(0)}{k !} . A^{k} \, \, , 
\end{equation*}
as well as that 
\begin{equation*}
(f|_{{\sigma}(A)})(A) = 
\sum_{k=0}^{\infty} \frac{f^{(k)}(0)}{k !} . A^{k} 
\, \, .
\end{equation*} 
\end{proof}

\begin{lem} \label{approximations} {\bf (Approximations)}
Let $(X,\braket{\,|\,})$ be a non-trivial complex 
Hilbert space, $\sqrt{\phantom{ij}}$
the complex square-root function, with domain ${\mathbb{C}}
\setminus ((- \infty,0] \times \{0\})$. $A \in L(X,X)$ self-adjoint and $\sigma(A) \subset {\mathbb{R}}$ the 
(non-empty, compact) spectrum of $A$.
For every $t \in {\mathbb{R}}$,
the sequences 
\begin{equation*}
\left( (-1)^{k} \,
\frac{t^{2k} }{(2k)!}\, . A^{k} \right)_{k \in {\mathbb{N}}}
\, \, , \, \, \left( (-1)^{k} \,
\frac{t^{2k+1} }{(2k+1)!} \, . A^{k} \right)_{k \in {\mathbb{N}}}
\end{equation*}
are absolutely summable in $L(X,X)$ and
\begin{align*}
& \left[\overline{\cos \left(t \sqrt{\phantom{ij}} \right)}\,
\bigg|_{\sigma(A)}\right]\!(A) = 
\sum_{k=0}^{\infty} (-1)^{k} \,
\frac{t^{2k} }{(2k)!}\, A^{k} \, \, , \\
& \left[\, \overline{\frac{\sin \left(t \sqrt{\phantom{ij}} \right)}{\sqrt{\phantom{ij}}}} \, \bigg|_{\sigma(A)}\right]\!(A)  = 
\sum_{k=0}^{\infty} (-1)^{k} \,
\frac{t^{2k+1} }{(2k+1)!}\, A^{k} \, \, .
\end{align*} 
\end{lem}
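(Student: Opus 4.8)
The plan is to obtain both identities as immediate applications of the holomorphic functional calculus established in Lemma~\ref{holomorphicfunctionalcalculusI}. By the very definition adopted in Theorem~\ref{abstractwaveequation}~(iii), the two operator-valued objects on the left-hand sides are the functional calculi $(f|_{\sigma(A)})(A)$ of the unique entire extensions $f = \overline{\cos(t\sqrt{\phantom{ij}})}$ and $f = \overline{\sin(t\sqrt{\phantom{ij}})/\sqrt{\phantom{ij}}}$. Since $A$ is bounded and self-adjoint with non-empty compact spectrum, every entire function is in particular holomorphic on $U_R(0)$ for any radius $R > \|A\|$, so Lemma~\ref{holomorphicfunctionalcalculusI} applies verbatim and simultaneously delivers the absolute summability of the operator series in $L(X,X)$ and the representation of the functional calculus as the corresponding power series in $A$. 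It therefore only remains to identify the Taylor coefficients $f^{(k)}(0)/k!$ of these two entire extensions at the origin.

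For this identification I would start from the everywhere-convergent Maclaurin series $\cos w = \sum_{k=0}^{\infty} (-1)^k w^{2k}/(2k)!$ and $\sin w = \sum_{k=0}^{\infty} (-1)^k w^{2k+1}/(2k+1)!$. Substituting $w = t\sqrt{z}$ and using $(\sqrt{z})^{2k} = z^k$ on the domain $\mathbb{C} \setminus ((-\infty,0]\times\{0\})$ of the square root, the first series becomes $\sum_{k=0}^{\infty} (-1)^k t^{2k} z^k/(2k)!$, a power series converging on all of $\mathbb{C}$ whose sum agrees with $\cos(t\sqrt{z})$ on the cut plane; by uniqueness of the entire extension this power series \emph{is} $\overline{\cos(t\sqrt{z})}$, so its $k$-th Taylor coefficient equals $(-1)^k t^{2k}/(2k)!$. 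An entirely analogous computation for $\sin(t\sqrt{z})/\sqrt{z}$, in which the extra factor $\sqrt{z}$ carried by each odd power $(t\sqrt{z})^{2k+1}$ exactly cancels the division by $\sqrt{z}$, yields the everywhere-convergent series $\sum_{k=0}^{\infty} (-1)^k t^{2k+1} z^k/(2k+1)!$ and hence the $k$-th Taylor coefficient $(-1)^k t^{2k+1}/(2k+1)!$.

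Feeding these two coefficient sequences into the conclusion of Lemma~\ref{holomorphicfunctionalcalculusI} reproduces precisely the asserted absolute summability of $\big((-1)^k t^{2k} A^k/(2k)!\big)_{k}$ and $\big((-1)^k t^{2k+1} A^k/(2k+1)!\big)_{k}$ together with the two stated series expansions, completing the argument. I do not expect a genuine obstacle: the only point requiring care is the bookkeeping of the square-root substitution and the appeal to the identity theorem to guarantee that the formally rearranged series really represents the entire extensions $\overline{\cos(t\sqrt{z})}$ and $\overline{\sin(t\sqrt{z})/\sqrt{z}}$, rather than merely coinciding with the original functions on the cut plane. Because both rearranged series converge throughout $\mathbb{C}$, this uniqueness step is immediate, and the rest is a mechanical invocation of the preceding lemma.
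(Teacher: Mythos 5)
Your proposal is correct and follows essentially the same route as the paper: identify the Taylor coefficients of the entire extensions $\overline{\cos(t\sqrt{\phantom{ij}})}$ and $\overline{\sin(t\sqrt{\phantom{ij}})/\sqrt{\phantom{ij}}}$ by substituting $w = t\sqrt{z}$ into the Maclaurin series (with the uniqueness of the entire extension justified by everywhere-convergence of the rearranged series), then invoke Lemma~\ref{holomorphicfunctionalcalculusI} for both the absolute summability and the operator series representation. The only difference is cosmetic: the paper additionally records the hyperbolic counterparts $\cosh(t\sqrt{\phantom{ij}}\circ(-{\textrm{id}}_{\mathbb{C}}))$ and $\sinh(t\sqrt{\phantom{ij}})/\sqrt{\phantom{ij}}\circ(-{\textrm{id}}_{\mathbb{C}})$ on the complementary cut plane, which your identity-theorem argument renders unnecessary.
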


\begin{proof}
We note that
for every $t \in {\mathbb{R}}$
\begin{align*}
& \cos(t \sqrt{\phantom{ij}} \,)  : {\mathbb{C}}
\setminus ((-\infty,0] \times \{0\}) \rightarrow {\mathbb{C}} 
\, \, , \, \, \\
& \cosh(t \sqrt{\phantom{ij}} \circ (-{\textrm{id}}_{\mathbb{C}} )\,)  : {\mathbb{C}}
\setminus ([0,\infty) \times \{0\}) \rightarrow {\mathbb{C}}
\end{align*} 
are holomorphic function such that 
\begin{align*}
& \cos(t \sqrt{z} \,) = \sum_{k=0}^{\infty} (-1)^{k} \,
\frac{(t \sqrt{z})^{2k}}{(2k)!} =
\sum_{k=0}^{\infty} (-1)^{k} \,
\frac{t^{2k} }{(2k)!}\, z^{k} \, \, , \\
& \cosh(t \sqrt{-z} \,) = \sum_{k=0}^{\infty} 
\frac{(t \sqrt{- z})^{2k}}{(2k)!} =
\sum_{k=0}^{\infty} (-1)^{k} \,
\frac{t^{2k} }{(2k)!}\, z^{k} 
\end{align*}
for every $z \in {\mathbb{C}}
\setminus ((- \infty,0] \times \{0\})$ and 
$z \in {\mathbb{C}}
\setminus [0,\infty) \times \{0\})$, respectively. As a consequence,
there is a unique extension of $\cos(t \sqrt{\phantom{ij}} \,)$
to an entire holomorphic function 
$\overline{\cos(t \sqrt{\phantom{ij}} \,)}$  such that
\begin{equation*}
\overline{\cos(t \sqrt{\phantom{ij}} \,)}(z) =
\sum_{k=0}^{\infty} (-1)^{k} \,
\frac{t^{2k} }{(2k)!}\, z^{k} 
\end{equation*}
for every $z \in {\mathbb{C}}$. Furthermore, 
\begin{align*}
& \frac{\sin(t \sqrt{\phantom{ij}} \,)}{ \sqrt{\phantom{ij}}}  : {\mathbb{C}}
\setminus ((-\infty,0] \times \{0\}) \rightarrow {\mathbb{C}} 
\, \, , \, \, \\
& \frac{\sinh(t \sqrt{\phantom{ij}}\,)}{\sqrt{\phantom{ij}}} \circ (-{\textrm{id}}_{\mathbb{C}} )  : {\mathbb{C}}
\setminus ([0,\infty) \times \{0\}) \rightarrow {\mathbb{C}}
\end{align*} 
are holomorphic function such that 
\begin{align*}
& 
 \frac{\sin(t \sqrt{z} \,)}{ \sqrt{z}}
 = \frac{1}{ \sqrt{z}} \, 
 \sum_{k=0}^{\infty} (-1)^{k} \,
\frac{(t \sqrt{z})^{2k+1}}{(2k+1)!} =
\sum_{k=0}^{\infty} (-1)^{k} \,
\frac{t^{2k+1}}{(2k+1)!}\, z^{k} \, \, , \\
& \frac{\sinh(t \sqrt{-z} \,)}{\sqrt{- z}} = 
\frac{1}{ \sqrt{-z}} \, 
\sum_{k=0}^{\infty} 
\frac{(t \sqrt{- z})^{2k+1}}{(2k+1)!} =
\sum_{k=0}^{\infty} (-1)^{k} \,
\frac{t^{2k + 1} }{(2k+1)!}\, z^{k} 
\end{align*}
for every $z \in {\mathbb{C}}
\setminus ((- \infty,0] \times \{0\})$ and 
$z \in {\mathbb{C}}
\setminus [0,\infty) \times \{0\})$, respectively. As a consequence,
there is a unique extension of 
$\sin(t \sqrt{\phantom{ij}} \,)/ \sqrt{\phantom{ij}}$
to an entire holomorphic function
\begin{equation*} 
\overline{\frac{\sin(t \sqrt{\phantom{ij}} \,)}{ \sqrt{\phantom{ij}}}}
\end{equation*}  
such that
\begin{equation*}
\overline{\frac{\sin(t \sqrt{\phantom{ij}} \,)}{ \sqrt{\phantom{ij}}}}\,(z) =
\sum_{k=0}^{\infty} (-1)^{k} \,
\frac{t^{2k+1}}{(2k+1)!}\, z^{k} 
\end{equation*}
for every $z \in {\mathbb{C}}$. In particular, it follows
from Lemma~\ref{holomorphicfunctionalcalculusI} that  
the sequences 
\begin{equation*}
\left( (-1)^{k} \,
\frac{t^{2k} }{(2k)!}\, . A^{k} \right)_{k \in {\mathbb{N}}}
\, \, , \, \, \left( (-1)^{k} \,
\frac{t^{2k+1} }{(2k+1)!} \, . A^{k} \right)_{k \in {\mathbb{N}}}
\end{equation*}
are absolutely summable in $L(X,X)$ and that
\begin{align*}
\left[\cos \left(t \sqrt{\phantom{ij}} \right)
\bigg|_{\sigma(A)}\right]\!(A) & = 
\sum_{k=0}^{\infty} (-1)^{k} \,
\frac{t^{2k} }{(2k)!}\, A^{k} \, \, , \\
\left[\, \overline{\frac{\sin \left(t \sqrt{\phantom{ij}} \right)}{\sqrt{\phantom{ij}}}} \, \bigg|_{\sigma(A)}\right]\!(A)  & = 
\sum_{k=0}^{\infty} (-1)^{k} \,
\frac{t^{2k+1} }{(2k+1)!}\, A^{k} \, \, .
\end{align*}

\end{proof}

In preparation, we provide the power series expansion related to bounded
commuting operators.  We expect that the expansion below can be used 
for large time asymptotic of the solutions of the nonlocal wave equation.

\begin{thm} \label{additiontheorem}
Let $(X,\braket{\,|\,})$ be a non-trivial complex 
Hilbert space, $\sqrt{\phantom{ij}}$
the complex square-root function, with domain ${\mathbb{C}}
\setminus ((- \infty,0] \times \{0\})$. $A, B \in L(X,X)$ self-adjoint
such that $[A,B] = 0$ and $\sigma(A), \sigma(A+B) \subset {\mathbb{R}}$ the 
(non-empty, compact) spectra of $A$ and $A+B$, respectively.
Then
\begin{align*}
& \left[\overline{\cos \left(t \sqrt{\phantom{ij}} \right)}\,
\bigg|_{\sigma(A+B)}\right]\!(A + B) \\
& = \sum_{k=0}^{\infty} (-1)^{k} \cdot \frac{t^{2k}}{(2k)!} \, . 
\left\{
\left[\leftidx{_0}{F}{_1}\!\!\left(-; k + \frac{1}{2}; - \, \frac{t^2}{4} \, . {\textrm id}_{\sigma(A)}\right)\right]\!\!(A)
\right\} \! B^k
\, \, , \\
& \left[\, \overline{\frac{\sin \left(t \sqrt{\phantom{ij}} \right)}{\sqrt{\phantom{ij}}}} \, \bigg|_{\sigma(A+B)}\right]\!(A+B)  \\
& = \sum_{k=0}^{\infty} (-1)^{k} \cdot \frac{t^{2k+1}}{(2k+1)!} \, . 
\left\{
\left[\leftidx{_0}{F}{_1}\!\!\left(-; k + \frac{3}{2}; - \, \frac{t^2}{4} \, . {\textrm id}_{\sigma(A)}\right)\right]\!\!(A)
\right\} \! B^k
\, \, , 
\end{align*} 
where $\leftidx{_0}{F}{_1}$ denotes the generalized hypergeometric
function, defined as in \cite{olverEtAl2010_book}.
\end{thm}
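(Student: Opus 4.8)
The plan is to start from the power-series representations already established in Lemma~\ref{approximations}, applied to the self-adjoint bounded operator $A+B$ (whose spectrum is non-empty and compact), and then to redistribute the binomial expansion of $(A+B)^{k}$ so as to collect all powers of $B$. Concretely, Lemma~\ref{approximations} gives
\begin{equation*}
\left[\overline{\cos\left(t \sqrt{\phantom{ij}} \right)}\,\bigg|_{\sigma(A+B)}\right]\!(A+B) = \sum_{k=0}^{\infty}(-1)^{k}\,\frac{t^{2k}}{(2k)!}\,(A+B)^{k}
\end{equation*}
and the analogous identity for the $\sin/\sqrt{\phantom{ij}}$ term. Since $[A,B]=0$, the binomial theorem for commuting operators yields $(A+B)^{k}=\sum_{j=0}^{k}\binom{k}{j}A^{k-j}B^{j}$. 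First I would substitute this expansion and verify that the resulting double family is absolutely summable in $L(X,X)$: indeed $\sum_{k,j}\frac{t^{2k}}{(2k)!}\binom{k}{j}\|A\|^{k-j}\|B\|^{j}=\sum_{k}\frac{t^{2k}}{(2k)!}(\|A\|+\|B\|)^{k}<\infty$, which is exactly the Banach-space analogue of the hypothesis that licenses rearranging the order of summation.

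Next I would reindex by the power of $B$. Writing $m=j$ for the power of $B$ and $l=k-m$ for the power of $A$, so that $k=l+m$ and $\binom{k}{j}=\binom{l+m}{m}$, the double sum becomes
\begin{equation*}
\sum_{m=0}^{\infty}\left(\sum_{l=0}^{\infty}(-1)^{l+m}\,\frac{t^{2(l+m)}}{(2l+2m)!}\,\binom{l+m}{m}\,A^{l}\right)\!B^{m}.
\end{equation*}
Factoring $(-1)^{m}\,t^{2m}/(2m)!$ out of the inner sum, it then remains to identify the operator
\begin{equation*}
\sum_{l=0}^{\infty}(-1)^{l}\,\frac{(2m)!}{(2l+2m)!}\,\frac{(l+m)!}{l!\,m!}\,t^{2l}A^{l}
\end{equation*}
with $\left[\leftidx{_0}{F}{_1}\!\left(-;m+\tfrac12;-\tfrac{t^{2}}{4}\,{\textrm id}_{\sigma(A)}\right)\right]\!(A)$, where the functional calculus of the bounded self-adjoint operator $A$ is applied termwise.

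For this last step I would use the Legendre duplication formula in the form $(2n)!=4^{n}\,n!\,(1/2)_{n}$, which gives $(2m)!/(2l+2m)! = (1/2)_{m}\,m!\big/\big(4^{l}(l+m)!\,(1/2)_{l+m}\big)$; together with the Pochhammer splitting $(1/2)_{l+m}=(1/2)_{m}\,(m+1/2)_{l}$ this collapses the coefficient $\frac{(2m)!}{(2l+2m)!}\frac{(l+m)!}{l!\,m!}$ to $1\big/\!\left(4^{l}\,(m+1/2)_{l}\,l!\right)$, which is precisely the $l$th Taylor coefficient of $\leftidx{_0}{F}{_1}(-;m+1/2;\cdot\,)$ evaluated at $-t^{2}A/4$. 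The $\sin/\sqrt{\phantom{ij}}$ statement is proved identically, the only change being that the odd factorials $(2n+1)!=4^{n}\,n!\,(3/2)_{n}$ replace the even ones, which shifts the hypergeometric parameter from $m+\tfrac12$ to $m+\tfrac32$. I expect the main obstacle to be bookkeeping rather than any single computation: one must justify the interchange of the two summations carefully via the absolute-summability bound above, and then carry out the duplication-formula and Pochhammer-splitting manipulation without sign or index errors, so that the surviving coefficient matches the definition of $\leftidx{_0}{F}{_1}$ from \cite{olverEtAl2010_book}.
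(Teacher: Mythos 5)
Your proposal is correct and follows essentially the same route as the paper's proof: apply Lemma~\ref{approximations} to $A+B$, expand $(A+B)^{k}$ binomially using $[A,B]=0$, justify the rearrangement by absolute summability of the double family, reindex by the power of $B$, and identify the inner series with $\leftidx{_0}{F}{_1}$ via the definition (\ref{generalizedhypergeometric}). The only cosmetic difference is that you establish the coefficient identity $\frac{(2m)!}{[2(l+m)]!}\,\frac{(l+m)!}{l!\,m!}=4^{-l}\big/\bigl(l!\,(m+\tfrac{1}{2})_{l}\bigr)$ (and its odd analogue) through the duplication formula $(2n)!=4^{n}\,n!\,(1/2)_{n}$ and the Pochhammer splitting $(a)_{l+m}=(a)_{m}(a+m)_{l}$, whereas the paper proves the equivalent identity (\ref{auxgeneralizedhypergeometric}) by induction.
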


\begin{proof}
In a first step, we note for every $t \in {\mathbb{R}}$
that
the family 
\begin{equation*}
\left((-1)^{k+l} \binom{k+l}{l} \,
\frac{t^{2(k+l)} }{[2(k+l)]!}\, A^{k} B^{l}\right)_{(k,l) \in {\mathbb{N}}^2}
\end{equation*}
is absolutely summable in $L(X,X)$, since for 
$(k,l) \in {\mathbb{N}}^2$
\begin{align*}
&
\bigg\| (-1)^{k+l} \binom{k+l}{l} \,
\frac{t^{2(k+l)} }{[2(k+l)]!}\, A^{k} B^{l} \bigg\|
\leqslant 
\frac{t^{2(k+l)} }{[2(k+l)]!} \binom{k+l}{l} \|A\|^k \|B\|^l \\
& = 
\frac{(k+l)!}{l! k! [2(k+l)]!}
\left(\,t^2 \|A\|\,\right)^{\!k} \left(\,t^2 \|B\|\,\right)^{\!l}
\leqslant  \frac{1}{k!} \left(\,t^2\|A\|\,\right)^{\!k} \frac{1}{l!} (\,t^2\|B\|\,)^{l} 
\end{align*} 
and hence for every finite subset $S \subset {\mathbb{N}}^2$
\begin{align*}
& \sum_{(k,l) \in S} \bigg\| (-1)^{k+l} \binom{k+l}{l} \,
\frac{t^{2(k+l)} }{[2(k+l)]!}\, A^{k} B^{l} \bigg\|
\leqslant \exp\left(t^2\|A\| 
\right) \exp\left(t^2\|B\|\right)  \, \, .
\end{align*}
Also, we note 
that
the family 
\begin{equation*}
\left((-1)^{k+l} \binom{k+l}{l} \,
\frac{t^{2(k+l)+1} }{[2(k+l)+1]!}\, A^{k} B^{l} \right)_{(k,l) \in {\mathbb{N}}^2}
\end{equation*}
is absolutely summable in $L(X,X)$, since for 
$(k,l) \in {\mathbb{N}}^2$
\begin{align*}
& \bigg\|(-1)^{k+l} \binom{k+l}{l} \,
\frac{t^{2(k+l)+1} }{[2(k+l)+1]!}\, A^{k} B^{l} \bigg\|_2 
\leqslant 
\frac{|t|^{2(k+l)+1} }{[2(k+l)+1]!} \binom{k+l}{l} \|A\|^k \|B\|^l \\
& = 
|t| \, \frac{(k+l)!}{l! k! [2(k+l)+1]!}
\left(\,t^2 \|A\|\,\right)^{\!k} \left(\,t^2 \|B\|\,\right)^{\!l}
\leqslant  |t| \, \frac{1}{k!} \left(\,t^2\|A\|\,\right)^{\!k} \frac{1}{l!} (\,t^2\|B\|\,)^{l} \, \, , 
\end{align*}
leading to 
\begin{align*}
& \sum_{(k,l) \in S}
\left\|(-1)^{k+l} \binom{k+l}{l} \,
\frac{t^{2(k+l)+1} }{[2(k+l)+1]!}\, A^{k} B^{l}\right\|
\leqslant |t| \exp\left(t^2\|A\| 
\right) \exp\left(t^2\|B\|\right) \, \, , 
\end{align*}
for every finite subset $S \subset {\mathbb{N}}^2$. Hence, we conclude 
the following. 
\begin{align*}
& \sum_{k=0}^{\infty} (-1)^{k} \,
\frac{t^{2k} }{(2k)!}\, (A + B)^{k} =
\sum_{k=0}^{\infty} \sum_{l=0}^{k} (-1)^{k} \,
\frac{t^{2k} }{(2k)!}\, \binom{k}{l} A^{k - l} B^{l} \\
& = \sum_{l=0}^{\infty} \sum_{k=l}^{\infty} (-1)^{k} \,
\frac{t^{2k} }{(2k)!}\, \binom{k}{l} A^{k - l} B^{l} =
\sum_{l=0}^{\infty} \left[ \sum_{k=l}^{\infty} (-1)^{k} \, \binom{k}{l}
\frac{t^{2k} }{(2k)!}\, A^{k - l} \right] \! B^{l}  \\
& = \sum_{l=0}^{\infty} \left[ \sum_{k=0}^{\infty} (-1)^{k+l} \binom{k+l}{l} \,
\frac{t^{2(k+l)} }{[2(k+l)]!}\, A^{k} \right] \! B^{l} \\
& = 
\sum_{l=0}^{\infty} (-1)^{l} t^{2l} 
\left[ \sum_{k=0}^{\infty} (-1)^{k} \binom{k+l}{l} \,
\frac{t^{2k}}{[2(k+l)]!}\, A^{k} \right] \! B^{l} \\
& = \sum_{l=0}^{\infty} (-1)^{l} \, t^{2l} 
\left[ \sum_{k=0}^{\infty} \frac{(k+l)!}{[2(k+l)]! \cdot l! } \cdot 
\frac{1}{k!} \, 
(- t^2 A)^{k} \right] \! B^{l} 
\end{align*}
In the following, we show the auxiliary result that 
for every $k, l \in {\mathbb{N}}$, 
\begin{equation} \label{auxgeneralizedhypergeometric}
\frac{(k+l)!}{[2(k+l)]! \cdot l!}  = 
2^{-k} \cdot \frac{1}{\prod_{m=0}^{k-1}[2(l+m) + 1]} \cdot \frac{1}{(2l)!}
\end{equation}
The proof proceeds by induction on $k$. First, we note that 
\begin{equation*}
\frac{l!}{(2l)! \cdot l!}  = 
2^{-0} \cdot \frac{1}{\prod_{m=0}^{-1}[2(l+m) + 1]} \cdot \frac{1}{(2l)!} \, \, .
\end{equation*}
In the following, we assume that (\ref{auxgeneralizedhypergeometric})
is true for some $k \in {\mathbb{N}}$. Then
\begin{align*}
& \frac{(k+l+1)!}{[2(k+l+1)]! \cdot l!}  = \frac{1}{2} \cdot \frac{1}{2(k+l)+1} \cdot
\frac{(k+l)!}{[2(k+l)]! \cdot l!}  \\
& = \frac{1}{2} \cdot \frac{1}{2(k+l)+1} \cdot 2^{-k} \cdot \frac{1}{\prod_{m=0}^{k-1}[2(l+m) + 1]} \cdot \frac{1}{(2l)!} \\
& = 2^{-(k+1)} \cdot \frac{1}{\prod_{m=0}^{k}[2(l+m) + 1]} \cdot \frac{1}{(2l)!} \, \, , 
\end{align*}
and hence (\ref{auxgeneralizedhypergeometric})
is true for $k+1$. The equality (\ref{auxgeneralizedhypergeometric})
implies for every $k, l \in {\mathbb{N}}$ that 
\begin{align*}
& \frac{(k+l)!}{[2(k+l)]! \cdot l!} = 2^{-k} \cdot \frac{1}{\prod_{m=0}^{k-1}[2(l+m) + 1]} \cdot \frac{1}{(2l)!} \\
& = 4^{-k} \cdot
\frac{1}{\prod_{m=0}^{k-1}\left(l+ m + \frac{1}{2}\right)} \cdot \frac{1}{(2l)!} = 4^{-k} \cdot \frac{1}{(2l)! \left(l + \frac{1}{2}\right)_{k}}
\end{align*}
Hence 
\begin{align*}
& \sum_{k=0}^{\infty} (-1)^{k} \,
\frac{t^{2k} }{(2k)!}\, (A + B)^{k} = \sum_{l=0}^{\infty} (-1)^{l} \, t^{2l} 
\left[ \sum_{k=0}^{\infty} \frac{1}{(2l)! \left(l + \frac{1}{2}\right)_{k}} \cdot 
\frac{1}{k!} \, \left(- \, \frac{t^2}{4} \, . A\right)^{k}
 \right] \! B^{l} \\
& \sum_{l=0}^{\infty} (-1)^{l} \, \frac{t^{2l}}{(2l)!} 
\left[ \sum_{k=0}^{\infty} \frac{1}{ \left(l + \frac{1}{2}\right)_{k} \cdot k!} \cdot 
\left(- \, \frac{t^2}{4} \, . A\right)^{k} \right] \! B^{l} \, \, .
\end{align*}
By definition of the generalized hypergeometric function
$\leftidx{_0}{F}{_1}$, for every $l \in {\mathbb{N}}$, $z \in {\mathbb{C}}$
\begin{align} \label{generalizedhypergeometric}
\leftidx{_0}{F}{_1}\!\!\left(-; l + \frac{1}{2}; z\right)
= \sum_{k=0}^{\infty} \frac{z^k}{(l + \frac{1}{2})_k \cdot k!}
\, \, .
\end{align}
Hence 
\begin{align*}
& \sum_{k=0}^{\infty} (-1)^{k} \,
\frac{t^{2k} }{(2k)!}\, (A + B)^{k} 
= \sum_{l=0}^{\infty} (-1)^{l} \cdot \frac{t^{2l}}{(2l)!} \, . 
\left\{
\left[\leftidx{_0}{F}{_1}\!\!\left(-; l + \frac{1}{2}; - \, \frac{t^2}{4} \, . {\textrm id}_{\sigma(A)}\right)\right]\!\!(A)
\right\} \! B^l
\, \, .
\end{align*}
Furthermore,
\begin{align*}
& \sum_{k=0}^{\infty} (-1)^{k} \,
\frac{t^{2k+1} }{(2k+1)!}\, (A + B)^{k} =
\sum_{k=0}^{\infty} \sum_{l=0}^{k} (-1)^{k} \,
\frac{t^{2k+1} }{(2k+1)!}\, \binom{k}{l} A^{k - l} B^{l} \\
& = \sum_{l=0}^{\infty} \sum_{k=l}^{\infty} (-1)^{k} \,
\frac{t^{2k+1} }{(2k+1)!}\, \binom{k}{l} A^{k - l} B^{l} =
\sum_{l=0}^{\infty} \left[ \sum_{k=l}^{\infty} (-1)^{k} \, \binom{k}{l}
\frac{t^{2k+1} }{(2k+1)!}\, A^{k - l} \right] \! B^{l}  \\
& = \sum_{l=0}^{\infty} \left[ \sum_{k=0}^{\infty} (-1)^{k+l} \binom{k+l}{l} \,
\frac{t^{2(k+l)+1} }{[2(k+l)+1]!}\, A^{k} \right] \!  B^{l} \\
& =
\sum_{l=0}^{\infty} (-1)^{l} t^{2l+1} 
\left[ \sum_{k=0}^{\infty} (-1)^{k} \binom{k+l}{l} \,
\frac{t^{2k}}{[2(k+l)+1]!}\, A^{k} \right] \!  B^{l} \\
& = t \sum_{l=0}^{\infty} (-1)^{l}  \cdot t^{2l}
\left[ \sum_{k=0}^{\infty} \frac{(k+l)!}{[2(k+l)+1]! \cdot l!} \cdot \frac{1}{k!} \, .
\left(- t^2 A\right)^{k} \right] \!  B^{l} \, \, .
\end{align*}
Since for every $k, l \in {\mathbb{N}}$ 
\begin{align*}
& \frac{(k+l)!}{[2(k+l)+1]! \cdot l!} =
\frac{1}{2} \cdot \frac{1}{k+ l + \frac{1}{2}}
\cdot \frac{(k+l)!}{[2(k+l)]! \cdot l!} \\
& = \frac{1}{2} \cdot \frac{1}{k+ l + \frac{1}{2}} \cdot 4^{-k} \cdot \frac{1}{(2l)! \left(l + \frac{1}{2}\right)_{k}} = 
\frac{1}{2} \cdot 4^{-k} \cdot \frac{1}{(2l)! \left(l + \frac{1}{2}\right) \left(l + \frac{3}{2}\right)_{k}} \\
& = 4^{-k} \cdot \frac{1}{(2l+1)! \left(l + \frac{3}{2}\right)_{k}}
\, \, , 
\end{align*}
we conclude that 
\begin{align*}
& \sum_{k=0}^{\infty} (-1)^{k} \,
\frac{t^{2k+1} }{(2k+1)!}\, (A + B)^{k} \\
& = t \sum_{l=0}^{\infty} (-1)^{l}  \cdot t^{2l}
\left[ \sum_{k=0}^{\infty}
\frac{1}{(2l+1)! \left(l + \frac{3}{2}\right)_{k}}
\cdot \frac{1}{k!} \, .
\left(- \, \frac{t^2}{4} \, . A\right)^{k} \right] \! B^{l} \\
& = \sum_{l=0}^{\infty} (-1)^{l}  \cdot \frac{t^{2l+1}}{(2l+1)!}
\left[\sum_{k=0}^{\infty}
\frac{1}{\left(l + \frac{3}{2}\right)_{k} \cdot k!}
\, .
\left(- \, \frac{t^2}{4} \, . A\right)^{k} \right] \! B^{l} \\
& = \sum_{l=0}^{\infty} (-1)^{l} \cdot \frac{t^{2l+1}}{(2l+1)!} \, . 
\left\{
\left[\leftidx{_0}{F}{_1}\!\!\left(-; l + \frac{3}{2}; - \, \frac{t^2}{4} \, . {\textrm id}_{\sigma(A)}\right)\right]\!\!(A)
\right\} \! B^l
\, \, .
\end{align*}
\end{proof}

The following lemma gives a connection between generalized hypergeometric
and spherical Bessel functions.

\begin{lem} \label{additiontheorem1}
For every $k \in {\mathbb{N}}$ and $x > 0$
\begin{align*}
\frac{x^{2k}}{(2k)!} \, . \,
\leftidx{_0}{F}{_1}\!\!\left(-; k + \frac{1}{2}; - \, \frac{x^2}{4}  \right) & = \frac{1}{2^k k!} \, x^{k+1} j_{k-1}(x) \, \, , \\
\frac{x^{2k+1}}{(2k+1)!} \cdot \leftidx{_0}{F}{_1}\left(-;k + \frac{3}{2}, - \frac{x^2}{4} \right) & = \frac{1}{2^k k!} \, x^{k+1} j_k(|x|) \, \, , 
\end{align*}
where the spherical Bessel functions 
$j_0,j_1,\dots$ are defined as in \cite{olverEtAl2010_book} and
\begin{equation*}
j_{-1}(x) := \frac{\cos(x)}{x}, \quad x>0.
\end{equation*}
\end{lem}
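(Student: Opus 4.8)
The plan is to reduce both identities to a single classical connection formula between spherical Bessel functions and the confluent hypergeometric series $\leftidx{_0}{F}{_1}$, namely
\begin{equation*}
j_n(x) = \frac{x^n}{(2n+1)!!} \, \leftidx{_0}{F}{_1}\!\!\left(-; n + \frac{3}{2}; -\frac{x^2}{4}\right) \, \, ,
\end{equation*}
valid for $n \in {\mathbb{N}}$ and $x > 0$. First I would establish this formula starting from the definition $j_n(x) = \sqrt{\pi/(2x)}\, J_{n+1/2}(x)$ used in \cite{olverEtAl2010_book}, together with the ascending power series for the ordinary Bessel function $J_\nu$. Writing $\Gamma(n+k+3/2) = \Gamma(n+3/2)(n+3/2)_k$ turns the Bessel series into the defining series of $\leftidx{_0}{F}{_1}$, and the overall normalization is then fixed by the Gamma-value $\Gamma(n+3/2) = (2n+1)!! \, \sqrt{\pi}/2^{n+1}$, which converts the Gamma-function prefactor into a double factorial. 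This step is essentially term-by-term bookkeeping.

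Granting this formula, the second identity follows immediately by setting $n=k$: solving for $\leftidx{_0}{F}{_1}(-;k+3/2;-x^2/4) = (2k+1)!!\, x^{-k} j_k(x)$ and substituting into the left-hand side produces the factor $x^{k+1}(2k+1)!!/(2k+1)!$, so the claim reduces to the double-factorial identity $(2k+1)!!/(2k+1)! = 1/(2^k k!)$, which I would derive from $(2k+1)! = (2k+1)!!\,(2k)!!$ together with $(2k)!! = 2^k k!$. In the same way, for $k \geqslant 1$ the first identity follows by setting $n = k-1$, which gives the factor $x^{k+1}(2k-1)!!/(2k)!$ and reduces to $(2k-1)!!/(2k)! = 1/(2^k k!)$, obtained from $(2k)! = (2k)!!\,(2k-1)!!$.

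The main obstacle --- and the only point at which the argument is not pure bookkeeping --- is the boundary case $k=0$ of the first identity, where the index $n = k-1 = -1$ falls outside the range of the connection formula and one must invoke the separate definition $j_{-1}(x) = \cos(x)/x$. Here I would verify directly that $\leftidx{_0}{F}{_1}(-;1/2;-x^2/4) = \cos(x)$ by matching its defining series $\sum_m (-x^2/4)^m/[(1/2)_m\, m!]$ against the Maclaurin series of cosine: the Pochhammer symbol satisfies $(1/2)_m = (2m-1)!!/2^m$, and combining $2^m m! = (2m)!!$ with $(2m)!!\,(2m-1)!! = (2m)!$ collapses the general term to $(-1)^m x^{2m}/(2m)!$. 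Consequently the left-hand side at $k=0$ equals $\cos(x)$, while the right-hand side is $x\, j_{-1}(x) = \cos(x)$, closing the case. Equivalently, adopting the convention $(-1)!! = 1$ makes the connection formula hold verbatim at $n=-1$ and subsumes this case into the general argument.
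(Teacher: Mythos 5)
Your proposal is correct and takes essentially the same route as the paper: both derive the connection between $\leftidx{_0}{F}{_1}$ and the spherical Bessel functions from the ascending series for $J_\nu$ together with $j_n(x)=\sqrt{\pi/(2x)}\,J_{n+1/2}(x)$, apply it at the shifted indices $n=k$ and $n=k-1$, and settle the boundary case $k=0$ by directly matching the series of $\leftidx{_0}{F}{_1}\!\left(-;\frac{1}{2};-\frac{x^2}{4}\right)$ against the Maclaurin series of $\cos(x)$ so that the convention $j_{-1}(x)=\cos(x)/x$ closes the argument. Your double-factorial bookkeeping (e.g., $(2k+1)!!/(2k+1)!=1/(2^k k!)$) is merely a cosmetic repackaging of the paper's computation via Pochhammer symbols such as $\left(\frac{1}{2}\right)_{k+1}$, and the paper's routing of the first identity through the already-proved second identity at index $k-1$ is the same move as your direct application of the connection formula at $n=k-1$.
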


\begin{proof}
We note that for every $\nu \in (0,\infty)$, $k \in {\mathbb{N}}$, and $x > 0$
\begin{align*}
J_{\nu}(x) & :=
 \left(\frac{x}{2} \right)^{\nu} \sum_{k=0}^{\infty} \frac{(-1)^k}{k! \, \Gamma(\nu + k + 1)} \left(\frac{x^2}{4}\right)^{k} = \frac{1}{\Gamma(\nu + 1)} \cdot \left(\frac{x}{2} \right)^{\nu} \sum_{k=0}^{\infty} \frac{(-1)^k}{k! \, \frac{\Gamma(\nu + k + 1)}{\Gamma(\nu + 1)}} \left(\frac{x^2}{4}\right)^{k} 
 \\
 & \phantom{:}= \frac{1}{\Gamma(\nu + 1)} \cdot \left(\frac{x}{2} \right)^{\nu} \sum_{k=0}^{\infty} \frac{(-1)^k}{k! \, (\nu + 1)_k} \left(\frac{x^2}{4}\right)^{k} 
= \frac{1}{\Gamma(\nu + 1)} \cdot \left( \frac{x}{2} \right)^{\nu} \cdot \leftidx{_0}{F}{_1}(-;\nu + 1,- x^2/4) \, \,.
\end{align*}

\begin{align*}
j_k(x) & := \sqrt{\frac{\pi}{2 x}} \, J_{k+\frac{1}{2}}(x) =
\sqrt{\frac{\pi}{2 x}} \, \frac{1}{\Gamma(k + \frac{3}{2})} \cdot \left( \frac{x}{2} \right)^{k + \frac{1}{2}} \cdot \leftidx{_0}{F}{_1}(-;k + \frac{3}{2}, -x^2/4) \\
& \phantom{:}= \frac{\sqrt{\pi}}{2 \Gamma(k + \frac{3}{2})} \cdot \left( \frac{x}{2} \right)^{k} \cdot \leftidx{_0}{F}{_1}(-;k + \frac{3}{2}, -x^2/4) \, \, .
\end{align*}
Hence for every $k \in {\mathbb{N}}$, $x > 0$
\begin{align*}
\leftidx{_0}{F}{_1}(-;k + \frac{3}{2}, -x^2/4) = 
\frac{2 \Gamma(k + \frac{3}{2})}{\sqrt{\pi}} \left(\frac{x}{2} \right)^{-k} j_k(x) =
2^{k+1} \left(\frac{1}{2}\right)_{\!k+1} x^{-k} j_k(x)
\end{align*}
as well as
\begin{align*}
& \frac{x^{2k+1}}{(2k+1)!} \, \leftidx{_0}{F}{_1}(-;k + \frac{3}{2}, -x^2/4)
= \frac{x^{2k+1}}{(2k+1)!} \, 2^{k+1} \left(\frac{1}{2}\right)_{\!k+1} x^{-k} j_k(x) \\
& = \frac{x^{2k+1}}{(2k+1)!} \, 2^{k+1} \,  2^{-(k+1)} \, \frac{(2k + 2)!}{2^{k+1} (k+1)!} x^{-k} j_k(x) = \frac{(2 k + 2)}{2^{k+1}(k+1)!} \, x^{k+1} j_k(x) \\
& = \frac{1}{2^k k!} \, x^{k+1} j_k(x) \, \, .
\end{align*}
Furthermore, for every $k \in {\mathbb{N}}^{*}$, $x > 0$
\begin{equation} \label{auxiliaryrelation}
\frac{x^{2k}}{(2k)!} \, . \,
\leftidx{_0}{F}{_1}(-; k + \frac{1}{2}; - x^2/4) = \frac{x}{2k} \, \frac{1}{2^{k-1} (k-1)!} \, x^{k} j_{k-1}(x)
= \frac{1}{2^k k!} \, x^{k+1} j_{k-1}(x) \, \, .
\end{equation}
Since for $x > 0$
\begin{align*}
& \leftidx{_0}{F}{_1}(-;\frac{1}{2}, -x^2/4) = \sum_{k=0}^{\infty} \frac{1}{\left(\frac{1}{2}\right)_{k}\cdot k!} \cdot \left(- \, \frac{x^2}{4}\right)^{k} =
\sum_{k=0}^{\infty} \frac{1}{2^{-k} \cdot \frac{(2k)!}{2^k \cdot k!} \cdot k!} \cdot \left(- \, \frac{x^2}{4}\right)^{k} \\
& = \sum_{k=0}^{\infty} \frac{4^k}{(2k)!} \cdot \left(- \, \frac{x^2}{4}\right)^{k} = \sum_{k=0}^{\infty} (-1)^k \, \frac{x^{2k}}{(2k)!}
= \cos(x) \, \, , 
\end{align*}
the equality (\ref{auxiliaryrelation}) is true also for $k=0$, if 
we define 
\begin{equation*}
j_{-1}(x) := \frac{\cos(x)}{x} \, \, . 
\end{equation*}
\end{proof}

Eventually, we have a representation involving two commuting
operators, with one of the operators being a multiple of the identity
and a general $C$ which is not necessarily a convolution operator.

\begin{thm} \label{additiontheorem2}
Let $(X,\braket{\,|\,})$ be a non-trivial complex 
Hilbert space, $\sqrt{\phantom{ij}}$
the complex square-root function, with domain ${\mathbb{C}}
\setminus ((- \infty,0] \times \{0\})$. $c >0$, $C \in L(X,X)$ self-adjoint and $\sigma(c-C) \subset {\mathbb{R}}$ the 
(non-empty, compact) spectrum of $c - C$.
Then for every $t \in {\mathbb{R}}$
\begin{align*}
& \left[\overline{\cos \left(t \sqrt{\phantom{ij}} \right)}\,
\bigg|_{\sigma(c-C)}\right]\!(c - C) = \sum_{k=0}^{\infty} 
\,  \frac{1}{2^k k!} \, (\sqrt{c t^2} \,)^{k+1} j_{k-1}(\sqrt{c t^2} \, ) \! \left(\frac{1}{c} . C \right)^{\!\!k} 
\, \, , \\
& \left[\, \overline{\frac{\sin \left(t \sqrt{\phantom{ij}} \right)}{\sqrt{\phantom{ij}}}} \, \bigg|_{\sigma(c - C)}\right]\!(c - C)  = t \sum_{k=0}^{\infty} 
\,  \frac{1}{2^k k!} \, (\sqrt{c t^2} \,)^{k} j_{k}(\sqrt{c t^2} \,) \! \left(\frac{1}{c} . C \right)^{\!\!k} \, \, ,
\end{align*} 
where the spherical Bessel functions 
$j_0,j_1,\dots$ are defined as in \cite{olverEtAl2010_book} and
\begin{equation*}
j_{-1}(x) := \frac{\cos(x)}{x}, \quad x>0
\end{equation*}
and the members of the sums are defined for $t = 0$ by continuous extension. 
\end{thm}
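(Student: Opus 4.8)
The plan is to specialize the addition theorem, Theorem~\ref{additiontheorem}, to the case in which one of the two commuting operators is a multiple of the identity, and then to translate the resulting $\leftidx{_0}{F}{_1}$ coefficients into spherical Bessel functions by means of Lemma~\ref{additiontheorem1}. Concretely, I would put $A := c\,.\,\mathrm{id}_X$ and $B := -C$, so that $A + B = c - C$. Both are bounded self-adjoint operators; they commute because $A$ is a scalar multiple of the identity; and their spectra $\sigma(A) = \{c\}$ and $\sigma(A+B) = \sigma(c-C)$ are non-empty and compact. Hence the hypotheses of Theorem~\ref{additiontheorem} are met, and I may invoke it for this pair.

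Next I would evaluate the inner functional calculus. Since $A = c\,.\,\mathrm{id}_X$ has spectrum $\{c\}$, one has $g(A) = g(c)\,.\,\mathrm{id}_X$ for every bounded function $g$; applied to $s \mapsto \leftidx{_0}{F}{_1}(-;k+\tfrac12;-\tfrac{t^2}{4}s)$ this collapses the bracketed operator in Theorem~\ref{additiontheorem} to the scalar $\leftidx{_0}{F}{_1}(-;k+\tfrac12;-\tfrac{ct^2}{4})$ times the identity, and likewise for the argument $k+\tfrac32$. Substituting $B^k = (-C)^k = (-1)^k C^k$ then cancels the explicit $(-1)^k$ in every summand, so the two expansions reduce to
\[
\sum_{k=0}^{\infty} \frac{t^{2k}}{(2k)!}\,\leftidx{_0}{F}{_1}\!\Bigl(-;k+\tfrac12;-\tfrac{ct^2}{4}\Bigr) C^k
\quad\text{and}\quad
\sum_{k=0}^{\infty} \frac{t^{2k+1}}{(2k+1)!}\,\leftidx{_0}{F}{_1}\!\Bigl(-;k+\tfrac32;-\tfrac{ct^2}{4}\Bigr) C^k ,
\]
for the cosine and sine operators, respectively.

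I would then set $x := \sqrt{ct^2}$, so that $x^2 = ct^2$ and $\tfrac{ct^2}{4} = \tfrac{x^2}{4}$; for $t \neq 0$ we have $x > 0$ and Lemma~\ref{additiontheorem1} applies. Writing $t^{2k} = c^{-k} x^{2k}$ in the cosine series gives $\tfrac{t^{2k}}{(2k)!}\,\leftidx{_0}{F}{_1}(-;k+\tfrac12;-\tfrac{x^2}{4}) = c^{-k}\,\tfrac{1}{2^k k!}\,x^{k+1} j_{k-1}(x)$ by the first identity of the lemma, and combining $c^{-k} C^k = (\tfrac1c\,.\,C)^k$ yields precisely the claimed cosine expansion. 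For the sine series, writing $t^{2k+1} = t\,c^{-k} x^{2k}$ and using the second identity of the lemma (after dividing its left side by $x$) produces $\tfrac{t^{2k+1}}{(2k+1)!}\,\leftidx{_0}{F}{_1}(-;k+\tfrac32;-\tfrac{x^2}{4}) = t\,c^{-k}\,\tfrac{1}{2^k k!}\,x^{k} j_{k}(x)$, which after factoring out $t$ and collecting $(\tfrac1c\,.\,C)^k$ is the stated sine expansion.

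It remains to treat $t = 0$, which I would do by continuity. Both sides depend continuously on $t$: the left-hand operators through the entire functions $\overline{\cos(t\sqrt{\phantom{ij}})}$ and $\overline{\sin(t\sqrt{\phantom{ij}})/\sqrt{\phantom{ij}}}$, and the right-hand series term by term through the continuous extensions $x^{k+1} j_{k-1}(x)$ and $x^{k} j_{k}(x)$, which tend to $1$ for $k=0$ and to $0$ for $k \geq 1$ as $x \to 0$, with uniform-on-compacta convergence guaranteed by the absolute summability established in Theorem~\ref{additiontheorem}. Since the two sides agree for every $t \neq 0$, they agree at $t = 0$ as well. There is no genuine analytic obstacle here, the two heavy ingredients being already available; the only point needing care is the bookkeeping---absorbing the signs into $(-C)^k$, transporting the powers $c^{-k}$ into $(\tfrac1c\,.\,C)^k$ through the substitution $x = \sqrt{ct^2}$, and confirming the $t=0$ limit from the small-argument behavior of the spherical Bessel functions.
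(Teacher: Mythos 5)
Your proposal is correct and follows exactly the route the paper intends: its entire proof of Theorem~\ref{additiontheorem2} is the one-line remark that it is a direct consequence of Theorem~\ref{additiontheorem} and Lemma~\ref{additiontheorem1}, and your argument (taking $A = c\,.\,\mathrm{id}_X$, $B = -C$, collapsing the inner functional calculus to the scalar $\leftidx{_0}{F}{_1}(-;k+\tfrac12;-\tfrac{ct^2}{4})$, absorbing the signs into $(-C)^k$, substituting $x=\sqrt{ct^2}$, and settling $t=0$ by continuity) is precisely the bookkeeping the authors left implicit. No gaps; your specialization and the $t=0$ treatment are both sound.
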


\begin{proof}
Direct consequence of Theorem \ref{additiontheorem} and Lemma \ref{additiontheorem1}.
\end{proof}

We provide an error estimate of the previous representation.

\begin{cor} \label{errorestimates} {\bf (Error Estimates)}
Let $(X,\braket{\,|\,}),\sqrt{\phantom{ij}}, c, C, \sigma(c-C),j_{-1},j_0,j_1,\dots$ as in Theorem~\ref{additiontheorem2} and 
$N \in {\mathbb{N}}$.
Then for every $t \in {\mathbb{R}}$
\begin{align*}
& \left\|
\left[\overline{\cos \left(t \sqrt{\phantom{ij}} \right)} \,
\bigg|_{\sigma(c-C)}\right]\!(c - C) - \sum_{k=0}^{N} 
\,  \frac{1}{2^k k!} \, (\sqrt{c t^2} \,)^{k+1} j_{k-1}(\sqrt{c t^2} \, ) \! \left(\frac{1}{c} . C \right)^{\!\!k} \right\| \\
& \leqslant \frac{\pi}{N!} \, \min\left\{ 1,\left(\!\frac{t^2 \|C\|}{4}\!\right)^{\!\!N+1}\right\} e^{\,t^2 \|C\|/4} \, \, , \\
& \left\| \left[\, \overline{\frac{\sin \left(t \sqrt{\phantom{ij}} \right)}{\sqrt{\phantom{ij}}}} \, \bigg|_{\sigma(c - C)}\right]\!(c - C)  - t \sum_{k=0}^{\infty} 
\,  \frac{1}{2^k k!} \, (\sqrt{c t^2} \,)^{k} j_{k}(\sqrt{c t^2} \,) \! \left(\frac{1}{c} . C \right)^{\!\!k} \right\| \\
& \leqslant \frac{\pi}{2 (N+1)!} \, |t|  \, \min\left\{ 1,\left(\!\frac{t^2 \|C\|}{4}\!\right)^{\!\!N+1}\right\} e^{\,t^2 \|C\|/4} \, \, .
\end{align*} 
\end{cor}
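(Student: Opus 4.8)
The plan is to read off from Theorem~\ref{additiontheorem2} that each of the two operators appearing on the left is exactly the full sum $\sum_{k=0}^{\infty}$ of the corresponding series; consequently the quantity inside each norm is precisely the tail $\sum_{k=N+1}^{\infty}$ of that series (for the $\sin/\sqrt{\phantom{ij}}$ estimate the subtracted sum is understood as the partial sum $\sum_{k=0}^{N}$, matching the cosine case). By the triangle inequality together with submultiplicativity of the operator norm and $\|(\frac1c.C)^{k}\|\le(\|C\|/c)^{k}$, the whole problem reduces to a scalar estimate on the coefficients $\frac{1}{2^{k}k!}(\sqrt{ct^{2}}\,)^{k+1}j_{k-1}(\sqrt{ct^{2}}\,)$ and $\frac{1}{2^{k}k!}(\sqrt{ct^{2}}\,)^{k}j_{k}(\sqrt{ct^{2}}\,)$.

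The first step I would carry out is a uniform bound on the spherical Bessel functions. Writing $j_{n}=\sqrt{\pi/(2x)}\,J_{n+1/2}$ and invoking the classical estimate $|J_{\nu}(x)|\le(x/2)^{\nu}/\Gamma(\nu+1)$ (valid for $\nu\ge-\tfrac12$, $x\ge0$, e.g.\ from the Poisson integral representation), one gets $|j_{n}(x)|\le x^{n}/(2n+1)!!$ for $x\ge0$, $n\in\mathbb{N}$. Since the tail begins at $k=N+1\ge1$, only $j_{0},j_{1},\dots$ occur and the exceptional value $j_{-1}$ never enters. Substituting $x=\sqrt{ct^{2}}$ and using the identities $(2k)!=2^{k}k!(2k-1)!!$ and $(2k+1)!=2^{k}k!(2k+1)!!$, the powers of $c$ cancel against $(\|C\|/c)^{k}$, and I expect the clean term bounds $\|\mathrm{term}_{k}^{\cos}\|\le(t^{2}\|C\|)^{k}/(2k)!$ and $\|\mathrm{term}_{k}^{\sin}\|\le|t|\,(t^{2}\|C\|)^{k}/(2k+1)!$.

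Setting $a:=t^{2}\|C\|/4$, so that $t^{2}\|C\|=4a$, the next step is the comparison $\frac{(4a)^{k}}{(2k)!}=\frac{2^{k}}{(2k-1)!!}\,\frac{a^{k}}{k!}$ and $\frac{(4a)^{k}}{(2k+1)!}=\frac{2^{k}}{(2k+1)!!}\,\frac{a^{k}}{k!}$. Because $k\mapsto 2^{k}/(2k\pm1)!!$ is decreasing for $k\ge1$, its maximum over $k\ge N+1$ is attained at $k=N+1$, so a single prefactor can be pulled out of the sum, leaving the exponential tail $\sum_{k=N+1}^{\infty}a^{k}/k!$. For this tail I would use the two standard estimates $\sum_{k=N+1}^{\infty}a^{k}/k!\le e^{a}$ and $\le\frac{a^{N+1}}{(N+1)!}e^{a}\le a^{N+1}e^{a}$ simultaneously to produce the factor $\min\{1,a^{N+1}\}e^{a}$. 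It then remains to verify the two numerical inequalities $\frac{2^{N+1}}{(2N+1)!!}\le\frac{\pi}{N!}$ and $\frac{2^{N+1}}{(2N+3)!!}\le\frac{\pi}{2(N+1)!}$; via the same double-factorial identity these reduce to $\frac{2^{2N+1}(N!)^{2}}{(2N+1)!}\le2$ and $\frac{2^{2N+3}((N+1)!)^{2}}{(2N+3)!}\le\frac43$, each maximized at $N=0$ and each comfortably below $\pi$.

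The main difficulty here is organizational rather than conceptual: assembling the double-factorial and $\Gamma$-function bookkeeping so that the powers of $c$ genuinely cancel and the prefactors collapse to exactly $\pi/N!$ and $\pi/(2(N+1)!)$, and arranging the $\min$ so that a single term bound feeds both the trivial tail estimate ($\le e^{a}$) and the leading-order one ($\le a^{N+1}e^{a}$). The only genuinely analytic ingredient is the uniform spherical Bessel bound $|j_{n}(x)|\le x^{n}/(2n+1)!!$; everything downstream is elementary estimation of factorial ratios.
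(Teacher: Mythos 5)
Your proof is correct, and it reaches the stated constants by a slightly different route than the paper. The paper bounds the spherical Bessel functions via the integral representation DLMF 10.54.1, giving $|j_{k-1}(x)|\leqslant \pi\,x^{k-1}/(2^{k}(k-1)!)$ and $|j_{k}(x)|\leqslant \pi\,x^{k}/(2^{k+1}k!)$, so the tails collapse at once to $\pi\sum_{k>N}\frac{1}{k!\,(k-1)!}\,a^{k}$ and $\frac{\pi}{2}\,|t|\sum_{k>N}\frac{1}{(k!)^{2}}\,a^{k}$ with $a=t^{2}\|C\|/4$, from which the prefactors $\pi/N!$ and $\pi/(2(N+1)!)$ are extracted simply via $(k-1)!\geqslant N!$ resp.\ $k!\geqslant(N+1)!$; the final exponential-tail estimate $\sum_{k>N}a^{k}/k!\leqslant\min\{1,a^{N+1}\}e^{a}$ is identical in both arguments. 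You instead use the Poisson/series bound $|j_{n}(x)|\leqslant x^{n}/(2n+1)!!$, which together with $(2k)!=2^{k}k!\,(2k-1)!!$ and $(2k+1)!=2^{k}k!\,(2k+1)!!$ yields the cleaner term bounds $(t^{2}\|C\|)^{k}/(2k)!$ and $|t|\,(t^{2}\|C\|)^{k}/(2k+1)!$, and you then recover the paper's constants through the monotonicity of $k\mapsto 2^{k}/(2k\pm1)!!$ on the relevant range plus the numerical verifications $2^{2N+1}(N!)^{2}/(2N+1)!\leqslant 2$ and $2^{2N+3}((N+1)!)^{2}/(2N+3)!\leqslant 4/3$ (both ratios are decreasing in $N$, as your telescoping check shows). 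In fact your intermediate estimates are sharper than the stated ones, with $2$ and $4/3$ where the corollary has $\pi$, which you deliberately weaken to match; the cost is the extra factorial bookkeeping, while the paper's $\pi$ falls out of the Bessel bound with no further work. You also handle the two boundary issues correctly: since the tail starts at $k=N+1\geqslant 1$, the exceptional $j_{-1}$ never occurs, and the $\sum_{k=0}^{\infty}$ in the second display of the corollary is indeed a misprint for the partial sum $\sum_{k=0}^{N}$, which is exactly what the paper's own proof subtracts.
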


\begin{proof}
As a consequence of Theorem~\ref{additiontheorem2}, for $t \in {\mathbb{R}}$
\begin{align*}
& \left\|
\left[\overline{\cos \left(t \sqrt{\phantom{ij}} \right)}\,
\bigg|_{\sigma(c-C)}\right]\!(c - C) - \sum_{k=0}^{N} 
\,  \frac{1}{2^k k!} \, (\sqrt{c t^2} \,)^{k+1} j_{k-1}(\sqrt{c t^2} \, ) \! \left(\frac{1}{c} . C \right)^{\!\!k} \right\| \\
& \leqslant \sum_{k=N+1}^{\infty} 
\,  \frac{1}{2^k k!} \, (\sqrt{c t^2} \,)^{k+1} |\,j_{k-1}(\sqrt{c t^2} \, )| \left(\!\frac{\|C\|}{c}\!\right)^{\!\!k} \\
& \leqslant \sum_{k=N+1}^{\infty} 
\,  \frac{1}{2^k k!} \, (\sqrt{c t^2} \,)^{k+1} \, 
\pi \, \frac{(\sqrt{c t^2} \,)^{k-1}}{2^k (k-1)!}
\left(\!\frac{\|C\|}{c}\!\right)^{\!\!k} \\
& = \pi \sum_{k=N+1}^{\infty} 
\,  \frac{1}{k!(k-1)!}
\left(\!\frac{t^2 \|C\|}{4}\!\right)^{\!\!k} 
\leqslant \frac{\pi}{N!} \sum_{k=N+1}^{\infty} 
\,  \frac{1}{k!}
\left(\!\frac{t^2 \|C\|}{4}\!\right)^{\!\!k} \, \, ,
\end{align*}
where the integral representation DLMF~10.54.1 of \cite{olverEtAl2010_book} 
({\tt http://dlmf.nist.gov/10.54}) for spherical 
Bessel functions has been used. Since 
\begin{align*}
& \sum_{k=N+1}^{\infty} 
\,  \frac{1}{k!}
\left(\!\frac{t^2 \|C\|}{4}\!\right)^{\!\!k} =
\left(\!\frac{t^2 \|C\|}{4}\!\right)^{\!\!N+1} \sum_{k=N+1}^{\infty} 
\,  \frac{1}{k!}
\left(\!\frac{t^2 \|C\|}{4}\!\right)^{\!\!k - N - 1} \\
& \leqslant \left(\!\frac{t^2 \|C\|}{4}\!\right)^{\!\!N+1}
\sum_{k=N+1}^{\infty} 
\,  \frac{1}{(k - N - 1)!}
\left(\!\frac{t^2 \|C\|}{4}\!\right)^{\!\!k - N - 1}
\leqslant \left(\!\frac{t^2 \|C\|}{4}\!\right)^{\!\!N+1}
e^{\,t^2 \|C\|/4} \, \, , 
\end{align*}
this implies that 
\begin{align*}
& \left\|
\left[\overline{\cos \left(t \sqrt{\phantom{ij}} \right)}\,
\bigg|_{\sigma(c-C)}\right]\!(c - C) - \sum_{k=0}^{N} 
\,  \frac{1}{2^k k!} \, (\sqrt{c t^2} \,)^{k+1} j_{k-1}(\sqrt{c t^2} \, ) \! \left(\frac{1}{c} . C \right)^{\!\!k} \right\| \\
& \leqslant \frac{\pi}{N!} \, \min\left\{ 1,\left(\!\frac{t^2 \|C\|}{4}\!\right)^{\!\!N+1}\right\} e^{\,t^2 \|C\|/4} \, \, .
\end{align*}
Furthermore, 
\begin{align*}
& \left\|\left[\, \overline{\frac{\sin \left(t \sqrt{\phantom{ij}} \right)}{\sqrt{\phantom{ij}}}} \, \bigg|_{\sigma(c - C)}\right]\!(c - C) - t \sum_{k=0}^{N} 
\,  \frac{1}{2^k k!} \, (\sqrt{c t^2} \,)^{k} j_{k}(\sqrt{c t^2} \, ) \! \left(\frac{1}{c} . C \right)^{\!\!k} \right\| \\
& \leqslant |t| \sum_{k=N+1}^{\infty} 
\,  \frac{1}{2^k k!} \, (\sqrt{c t^2} \,)^{k} |\,j_{k}(\sqrt{c t^2} \, )| \left(\!\frac{\|C\|}{c}\!\right)^{\!\!k} \\
& \leqslant |t| \sum_{k=N+1}^{\infty} 
\,  \frac{1}{2^k k!} \, (\sqrt{c t^2} \,)^{k} \, 
\pi \, \frac{(\sqrt{c t^2} \,)^{k}}{2^{k+1} k!}
\left(\!\frac{\|C\|}{c}\!\right)^{\!\!k} \\
& = \frac{\pi}{2} \, |t| \sum_{k=N+1}^{\infty} 
\,  \frac{1}{(k!)^2}
\left(\!\frac{t^2 \|C\|}{4}\!\right)^{\!\!k} 
\leqslant \frac{\pi}{2 (N+1)!} \, |t| \sum_{k=N+1}^{\infty} 
\,  \frac{1}{k!}
\left(\!\frac{t^2 \|C\|}{4}\!\right)^{\!\!k} \\
& \leqslant \frac{\pi}{2 (N+1)!} \, |t|  \, \min\left\{ 1,\left(\!\frac{t^2 \|C\|}{4}\!\right)^{\!\!N+1}\right\} e^{\,t^2 \|C\|/4} \, \, .
\end{align*}
\end{proof}

Application of Theorem \ref{additiontheorem2} to the special case of
$A_C$ gives the following.

\begin{thm} \label{besselrepresentationtheorem}
Let $n \in {\mathbb{N}}^{*}$, $\rho > 0$, 
$C \in L^1({\mathbb{R}}^n)$ be even such that 
\begin{equation*}
c := \int_{{\mathbb{R}}^n} C \, dv^n > 0  
\end{equation*}
and $A_C$ as in Lemma~\ref{governingoperator}. Then for $t \in {\mathbb{R}}$
\begin{align} \label{besselrepresentation}
& \left[\overline{\cos \left(t \sqrt{\phantom{ij}} \right)}\,
\bigg|_{\sigma(A_C)}\right]\!(A_C) f = \sum_{k=0}^{\infty} \frac{1}{2^k k!} ( \sqrt{c t^2/ \rho}\,)^{k+1} j_{k-1}
( \sqrt{c t^2 / \rho} \,) \, c^{-k} . C^k * f \, \, , \nonumber \\
& 
\left[\, \overline{\frac{\sin \left(t \sqrt{\phantom{ij}} \right)}{\sqrt{\phantom{ij}}}} \, \bigg|_{\sigma(A_C)}\right]\!(A_C)
f = t \sum_{k=0}^{\infty} \frac{1}{2^k k!} ( \sqrt{c t^2/ \rho}\,)^{k} j_{k}
( \sqrt{c t^2 / \rho} \,) \, c^{-k} . C^k * f \, \, , 
\end{align}
for every $f \in L^2_{\mathbb{C}}({\mathbb{R}}^n)$, where the spherical Bessel functions 
$j_0,j_1,\dots$ are defined as in \cite{olverEtAl2010_book}, 
\begin{equation*}
j_{-1}(z) := \frac{\cos(z)}{z}, \quad z \in {\mathbb{C}}^{*},
\end{equation*}
and the members of the sums are defined for $t = 0$ by continuous extension.  
\end{thm}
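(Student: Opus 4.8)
The plan is to recognize that $A_C$ is precisely an operator of the type $c-C$ treated in Theorem~\ref{additiontheorem2}, and then to read off the two expansions by a direct substitution. Concretely, Lemma~\ref{governingoperator} and (\ref{definitionofA}) write $A_C f = \frac{1}{\rho}[\,c\,f - C*f\,]$, where $c=\int_{\mathbb{R}^n}C\,dv^n$ and $C*f=\textrm{Int}(K)f$ with $\textrm{Int}(K)$ the bounded self-adjoint convolution operator constructed in Lemma~\ref{governingoperator}. Hence $A_C = \tfrac{c}{\rho}\,\textrm{id} - \tfrac{1}{\rho}\,\textrm{Int}(K)$, which has exactly the structure required by Theorem~\ref{additiontheorem2}, with scalar $\tfrac{c}{\rho}$ and self-adjoint operator $\tfrac{1}{\rho}\,\textrm{Int}(K)$.

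First I would verify the hypotheses of Theorem~\ref{additiontheorem2} for this identification: the scalar $\tfrac{c}{\rho}$ is positive because $c>0$ and $\rho>0$; the operator $\tfrac{1}{\rho}\,\textrm{Int}(K)\in L(L^2_{\mathbb{C}}(\mathbb{R}^n))$ is self-adjoint by Lemma~\ref{governingoperator}; and $\sigma(A_C)$ is a non-empty compact subset of $\mathbb{R}$, since $A_C$ is a bounded self-adjoint operator on the non-trivial Hilbert space $L^2_{\mathbb{C}}(\mathbb{R}^n)$. With these checks in place, Theorem~\ref{additiontheorem2} applies verbatim to the operator $c-C:=A_C$.

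Next I would translate the two abstract right-hand sides. The Bessel argument matches immediately, since $\sqrt{(c/\rho)\,t^2}=\sqrt{c t^2/\rho}$. For the operator factors I would compute
\begin{equation*}
\left(\frac{\rho}{c}\cdot\frac{1}{\rho}\,\textrm{Int}(K)\right)^{\!k} f = \frac{1}{c^{k}}\,\textrm{Int}(K)^{k} f ,
\end{equation*}
and substitute this, together with $\sqrt{ct^2/\rho}$, into the $\cos$- and $\sin$-expansions of Theorem~\ref{additiontheorem2}. The $t=0$ continuous extension is inherited, because $\sqrt{ct^2/\rho}\ge 0$ is real, so the extension of $j_{-1}$ to $\mathbb{C}^*$ stated here is consistent with the one (on $(0,\infty)$) used in Theorem~\ref{additiontheorem2}. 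This yields (\ref{besselrepresentation}) as soon as $\textrm{Int}(K)^{k}f$ is identified with $C^{k}*f$.

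The main (and essentially only) genuine point beyond bookkeeping is this last identification: the symbol $C^{k}*f$ in the statement must be read as iterated convolution $\textrm{Int}(K)^{k}f = C*(C*(\cdots*f))$, i.e.\ convolution with the $k$-fold convolution power of $C$. I would justify it by associativity of the convolution product together with Young's inequality, which gives that the $k$-fold convolution power of $C$ lies in $L^1(\mathbb{R}^n)$ with $L^1$-norm at most $\|C\|_1^{k}$, so that $C^{k}*f$ is a well-defined element of $L^2_{\mathbb{C}}(\mathbb{R}^n)$ for every $f\in L^2_{\mathbb{C}}(\mathbb{R}^n)$ and $\textrm{Int}(K)^{k}$ is convolution by this power. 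Everything else is direct substitution into Theorem~\ref{additiontheorem2}, so the argument is short.
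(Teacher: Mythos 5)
Your proposal is correct and is essentially the paper's own proof: the paper simply cites Lemma~\ref{governingoperator} together with Theorem~\ref{additiontheorem2}, which is exactly your identification $A_C = \frac{c}{\rho}\,\mathrm{id} - \frac{1}{\rho}\,\mathrm{Int}(K)$ followed by direct substitution. The only difference is that you spell out the bookkeeping the paper leaves implicit (verifying the hypotheses, matching $\sqrt{c t^2/\rho}$, and justifying $\mathrm{Int}(K)^k f = C^k * f$ via associativity and Young's inequality), all of which is sound.
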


\begin{proof}
The statement is a direct consequence of Theorems~\ref{governingoperator}
and \ref{additiontheorem2}.
\end{proof}

\section{Examples} \label{sec:examples}
\begin{figure}[t]
\centering 
\subfigure[Generalized solution $u$ to the classical (local) wave equation with 
initial data $u(0,x) = 1/(1+x^2)$ and $(\partial u / \partial
t)(0,x)=0,~x \in {\mathbb{R}}$.]{
\scalebox{0.465}{\includegraphics{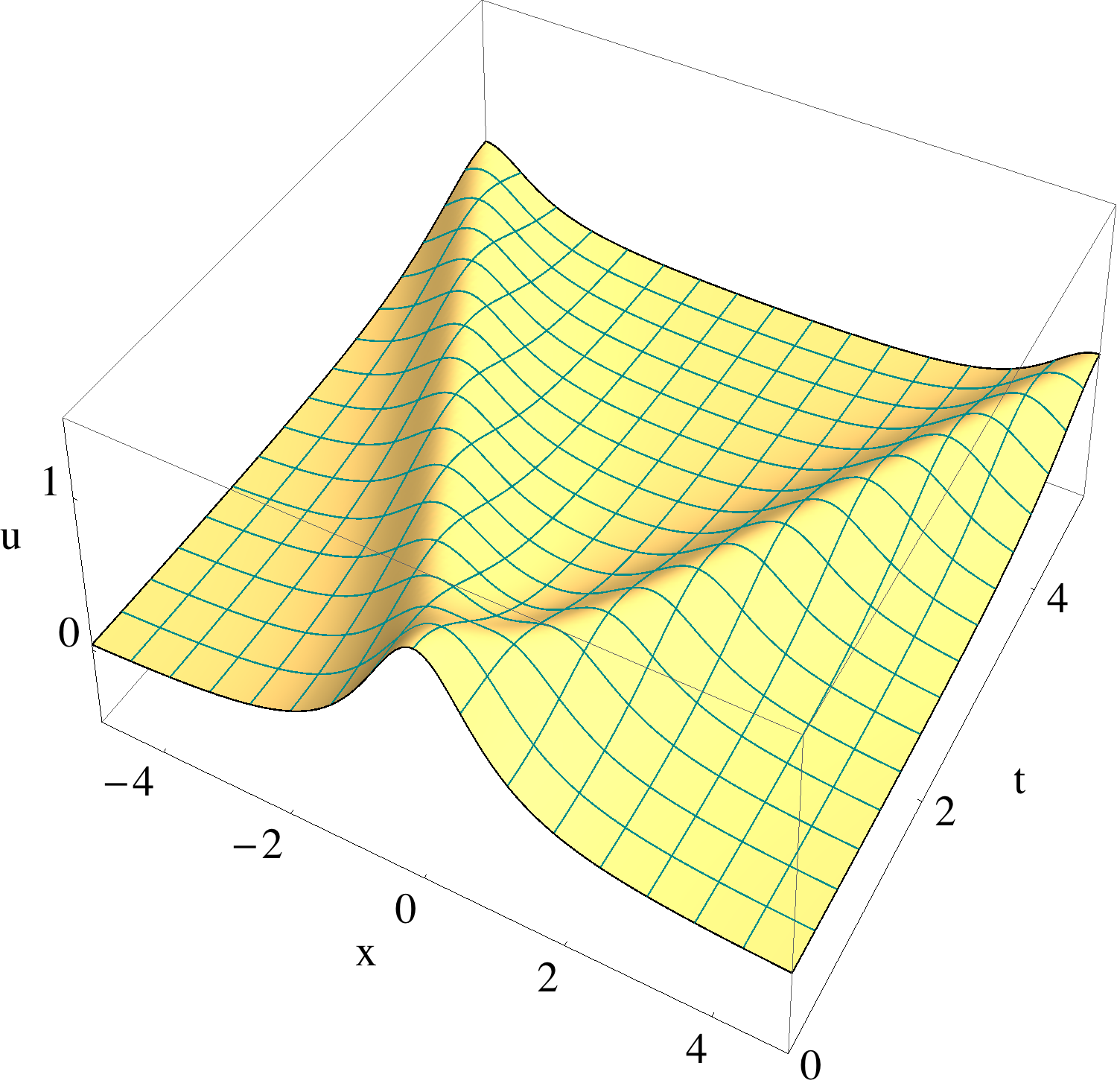}}
\label{fig0}
}
\hspace{.35cm}
\subfigure[Solution $u$ to the nonlocal wave equation 
with initial data $u(0,x) = f(x)$ 
and $(\partial u / \partial
t)(0,x)=0,~x \in {\mathbb{R}}$
in Example~\ref{gaussians}.]{
\scalebox{0.465}{\includegraphics{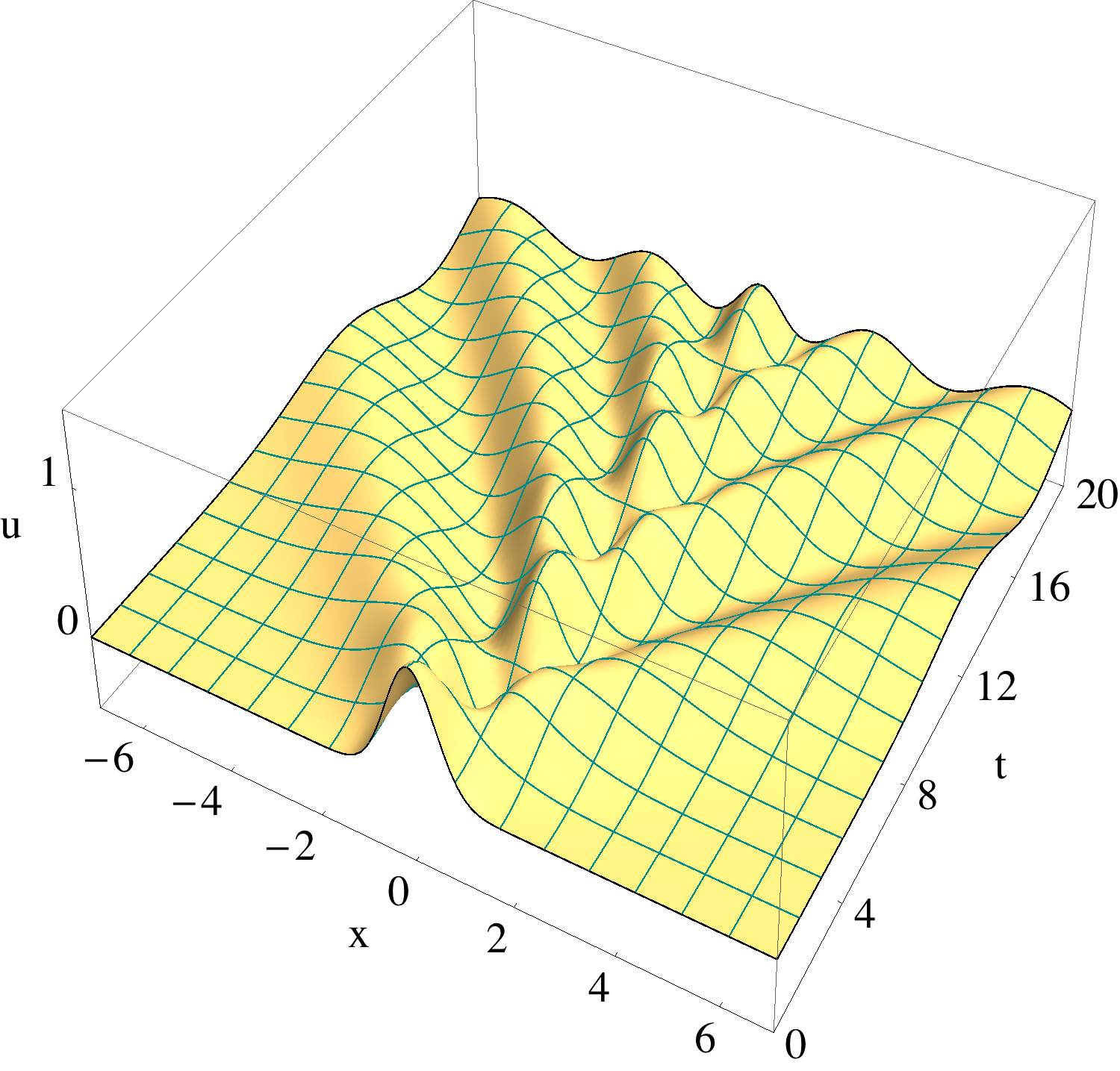}}
\label{fig1}
}
\\
\subfigure[Generalized solution $u$ to the classical (local) wave equation with 
initial data $u(0,x) = 0$ and $(\partial u / \partial t)(0,x)= 1/(1+x^2), 
~x \in {\mathbb{R}}$.]{
\scalebox{0.465}{\includegraphics{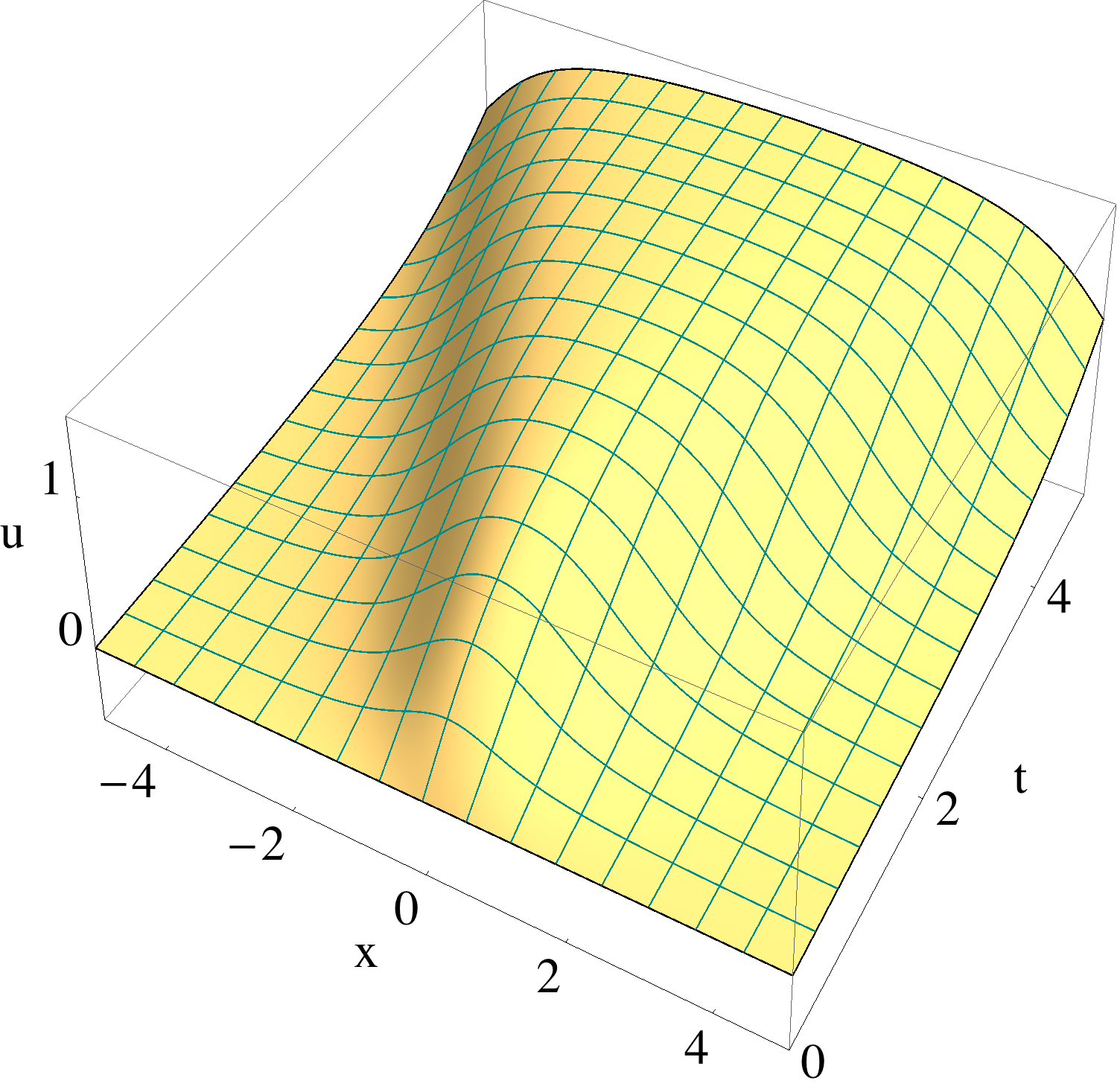}}
\label{fig2}
}
\hspace{.35cm}
\subfigure[Solution $u$ to the nonlocal wave equation 
with initial data $u(0,x) = 0$ 
and $(\partial u / \partial
t)(0,x)=f(x),~x \in {\mathbb{R}}$
in Example~\ref{gaussians}.]{
\scalebox{0.465}{\includegraphics{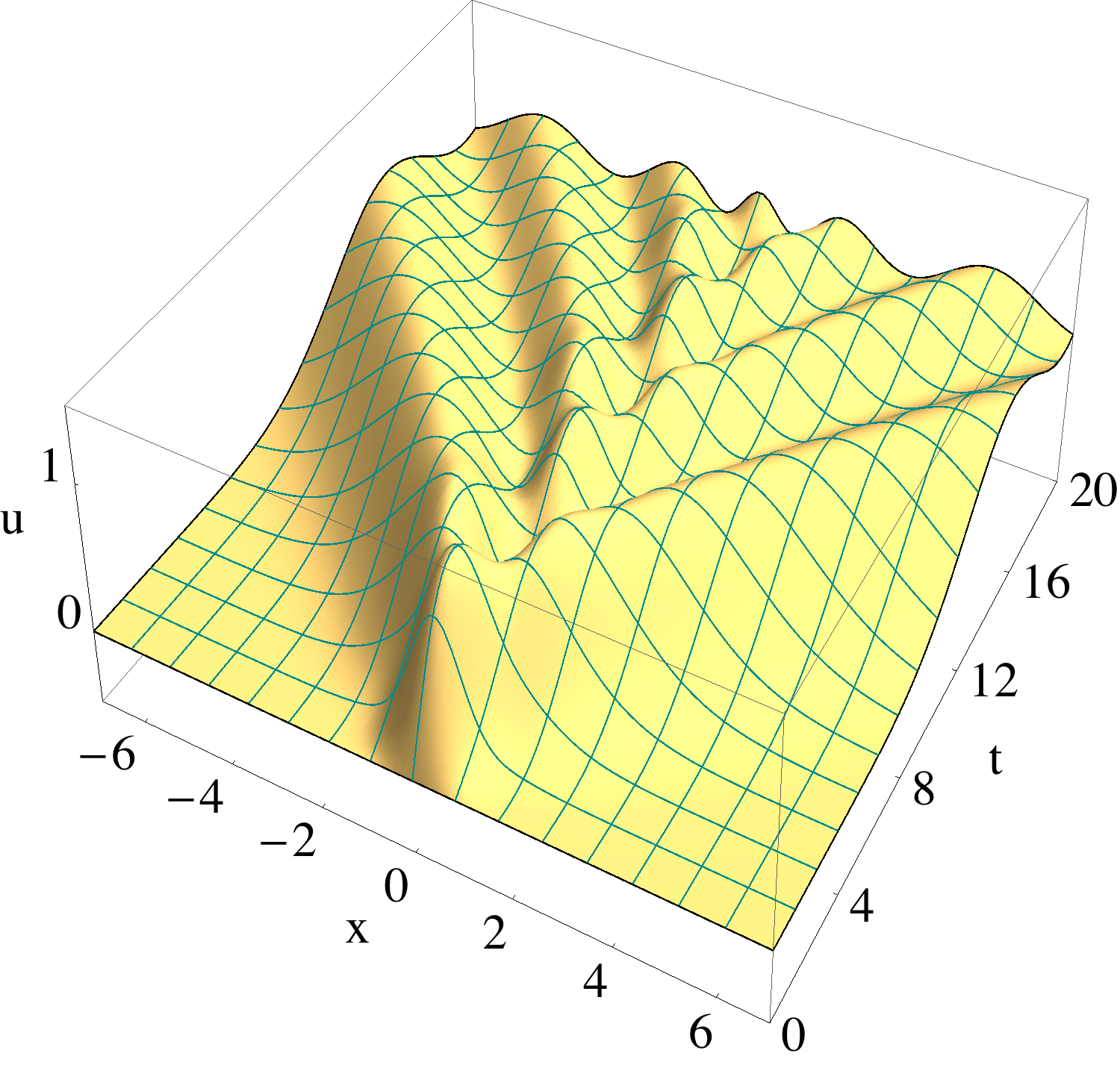}}
\label{fig3}
}
\caption{Evolution of the local and nonlocal wave equation solutions with
vanishing initial velocity ((a) and (b)) and vanishing initial displacement
((c) and (d)). For (a) and (c), we use $\rho = E = 1$, $b = 0$,
values in (\ref{classicalelasticity}). For (b) and (d), 
we use $c = a = 1$, $\rho = 1$, $\sigma = 1$, $\sigma_d = 1/2$
values in Example~\ref{gaussians}.}
\label{fig:comparisonContinuous}
\end{figure}

We apply the apparatus we have constructed of the previous section on
an example that involves a micromodulus and input function both
of which are normal distributions with mean value zero
and standard deviation $\sigma$ and $\sigma_d$, respectively.

\begin{ex} \label{gaussians}
For $\rho,\sigma, \sigma_{d}, a > 0$, we define $C_{\sigma} 
\in L^1({\mathbb{R}})$ and $f \in L^2_{\mathbb{C}}({\mathbb{R}})$
by 
\begin{align*} 
& C_{\sigma} := 
\frac{a}{\sqrt{2 \pi} \, \sigma} \, e^{- [1 / (2 \sigma^2)] . {\mathrm{id}}_{\mathbb{R}}^2}
\, \, , \, \, 
f := \frac{1}{\sqrt{2 \pi}\sigma_d} \, e^{- [1 / (2 \sigma_d^2)] . {\mathrm{id}}_{\mathbb{R}}^2} \, \, .
\end{align*}
Then for $k \in {\mathbb{N}}^{*}$
\begin{align*}
& F_1 C_{\sigma} =  a e^{- (\sigma^2/2). {\mathrm{id}}_{\mathbb{R}}^2} \, \, , \, \, 
F_1 C_{\sigma}^k = (F_1 C_{\sigma})^k =   
a^k e^{- k (\sigma^2/2). {\mathrm{id}}_{\mathbb{R}}^2} \, \, , \\
& F_2 (C_{\sigma}^k * f) = (F_1 C_{\sigma}^{k}) \cdot F_2 f = \frac{a^k}{\sqrt{2 \pi}} \,
e^{- k (\sigma^2/2). {\mathrm{id}}_{\mathbb{R}}^2} \cdot e^{- (\sigma_d^2/2). {\mathrm{id}}_{\mathbb{R}}^2} \\
& 
= \frac{a^k}{\sqrt{2 \pi}} \, e^{- [(k \sigma^2 + \sigma_d^2)/2]. {\mathrm{id}}_{\mathbb{R}}^2}  =  
\frac{1}{\sqrt{2 \pi}} \, F_1 \frac{a^k}{\sqrt{2 \pi} \sqrt{k \sigma^2 + \sigma_d^2}} \, e^{- \{1 / [2(k \sigma^2 + \sigma_d^2)]\} . {\mathrm{id}}_{\mathbb{R}}^2} \\
& = F_2 \frac{a^k}{\sqrt{2 \pi} \sqrt{k \sigma^2 + \sigma_d^2}} \, e^{- \{1 / [2(k \sigma^2 + \sigma_d^2)]\} . {\mathrm{id}}_{\mathbb{R}}^2} 
\end{align*}
and hence
\begin{equation*}
C_{\sigma}^k * f = \frac{a^k}{\sqrt{2 \pi} \sqrt{k \sigma^2 + \sigma_d^2}} \, e^{- \{1 / [2(k \sigma^2 + \sigma_d^2)]\} . {\mathrm{id}}_{\mathbb{R}}^2} \, \, .
\end{equation*}
Since
\begin{equation*}
c = \int_{\mathbb{R}} C_{\sigma} \, dv^1 = a > 0 \, \, , 
\end{equation*}
we conclude from Theorem~\ref{besselrepresentationtheorem} that for $t \in {\mathbb{R}}$
\begin{eqnarray} \label{specialbesselrepresentation}
& \left[\overline{\cos \left(t \sqrt{\phantom{ij}} \right)}\,
\bigg|_{\sigma(A_C)}\right]\!(A_C) f \nonumber \\
& = \sum_{k=0}^{\infty} \frac{1}{2^k k!} ( \sqrt{a t^2/ \rho}\,)^{k+1} j_{k-1}
( \sqrt{a t^2 / \rho} \,) \, 
\frac{1}{\sqrt{2 \pi} \sqrt{k \sigma^2 + \sigma_d^2}} \, e^{- \{1 / [2(k \sigma^2 + \sigma_d^2)]\} . {\mathrm{id}}_{\mathbb{R}}^2} \, \, , \nonumber \\
& 
\left[\, \overline{\frac{\sin \left(t \sqrt{\phantom{ij}} \right)}{\sqrt{\phantom{ij}}}} \, \bigg|_{\sigma(A_C)}\right]\!(A_C)
f \nonumber \\
& = t \sum_{k=0}^{\infty} \frac{1}{2^k k!} ( \sqrt{a t^2/ \rho}\,)^{k} j_{k}
( \sqrt{a t^2 / \rho} \,) \, 
\frac{1}{\sqrt{2 \pi} \sqrt{k \sigma^2 + \sigma_d^2}} \, e^{- \{1 / [2(k \sigma^2 + \sigma_d^2)]\} . {\mathrm{id}}_{\mathbb{R}}^2} \, \, , 
\end{eqnarray}
where $A_C$ is as in Lemma~\ref{governingoperator}. 
\end{ex}

\begin{figure}[t]
\centering
\subfigure[Generalized solution $u$ to the classical (local) wave equation with 
initial data $u(0,x) = e^{- {\textrm id}_{\mathbb{R}}} \chi_{_{[0,\infty)}}(x)$ 
and 
$(\partial u / \partial t)(0,x)=0, x \in {\mathbb{R}}$.]{
\scalebox{0.465}{\includegraphics{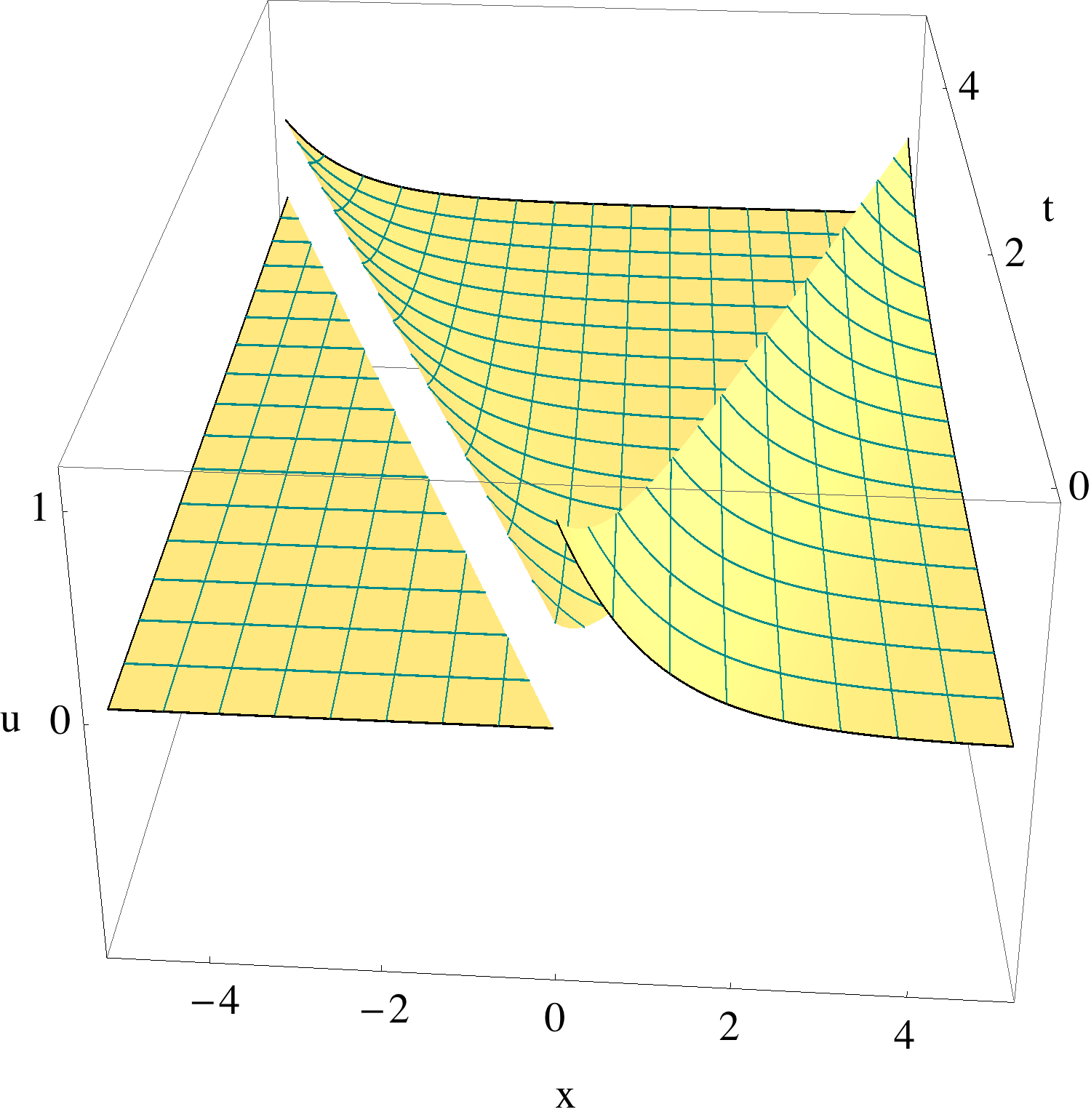}}
\label{fig4}
}
\hspace{.35cm}
\subfigure[Solution $u$ to the nonlocal wave equation 
with initial data 
$u(0,x) = e^{- {\textrm id}_{\mathbb{R}}} \chi_{_{[0,\infty)}}(x)$ 
and $(\partial u / \partial t)(0,x)=0,~x \in {\mathbb{R}}$.]{
\scalebox{0.465}{\includegraphics{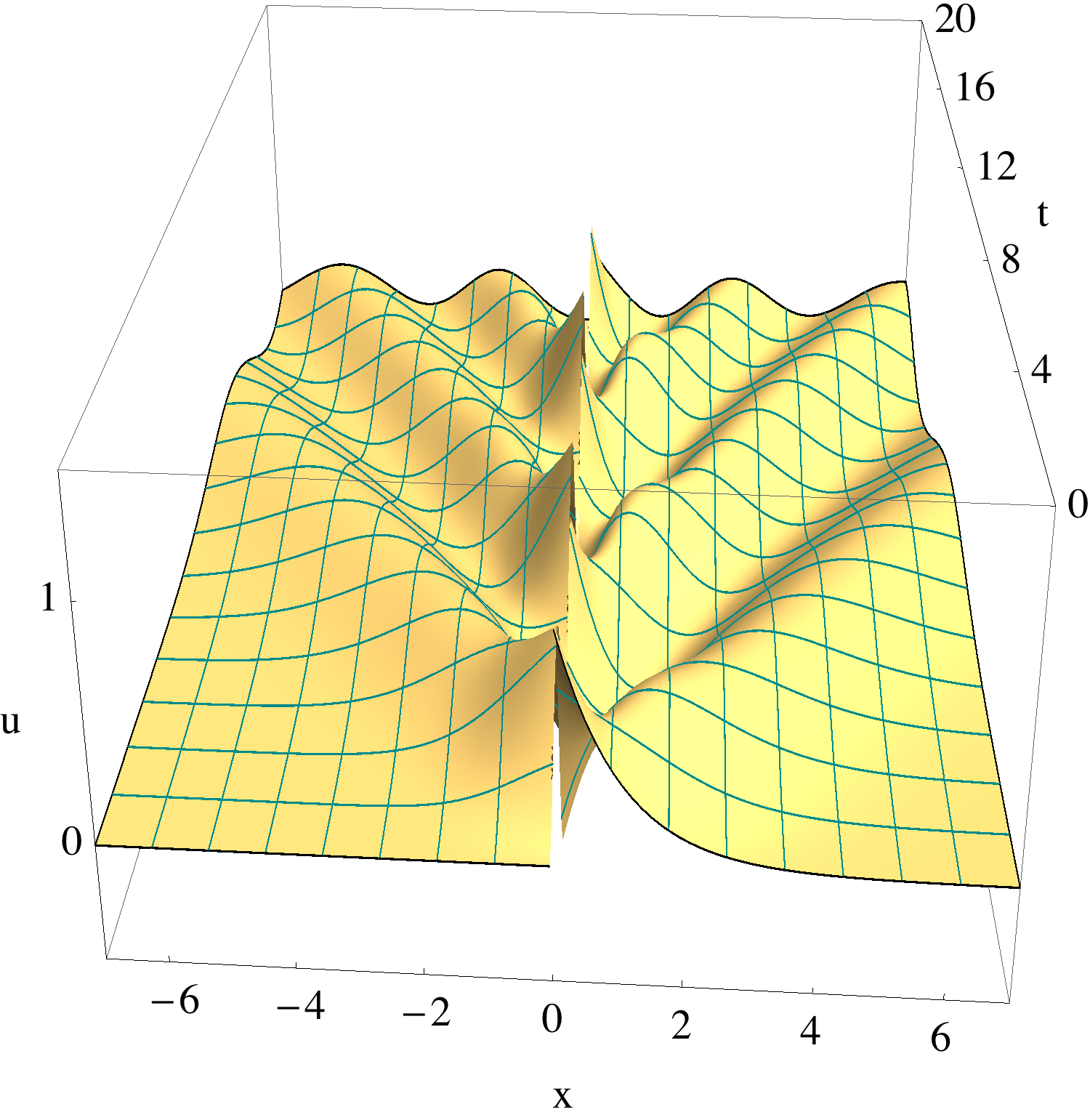}}
\label{fig5}
}
\\
\subfigure[Generalized solution $u$ to the classical (local) wave equation with 
initial data $u(0,x) = 0$ and $(\partial u / \partial t)(0,x) =  
e^{- {\textrm id}_{\mathbb{R}}} \chi_{_{[0,\infty)}}(x), ~x \in {\mathbb{R}}$.]{
\scalebox{0.465}{\includegraphics{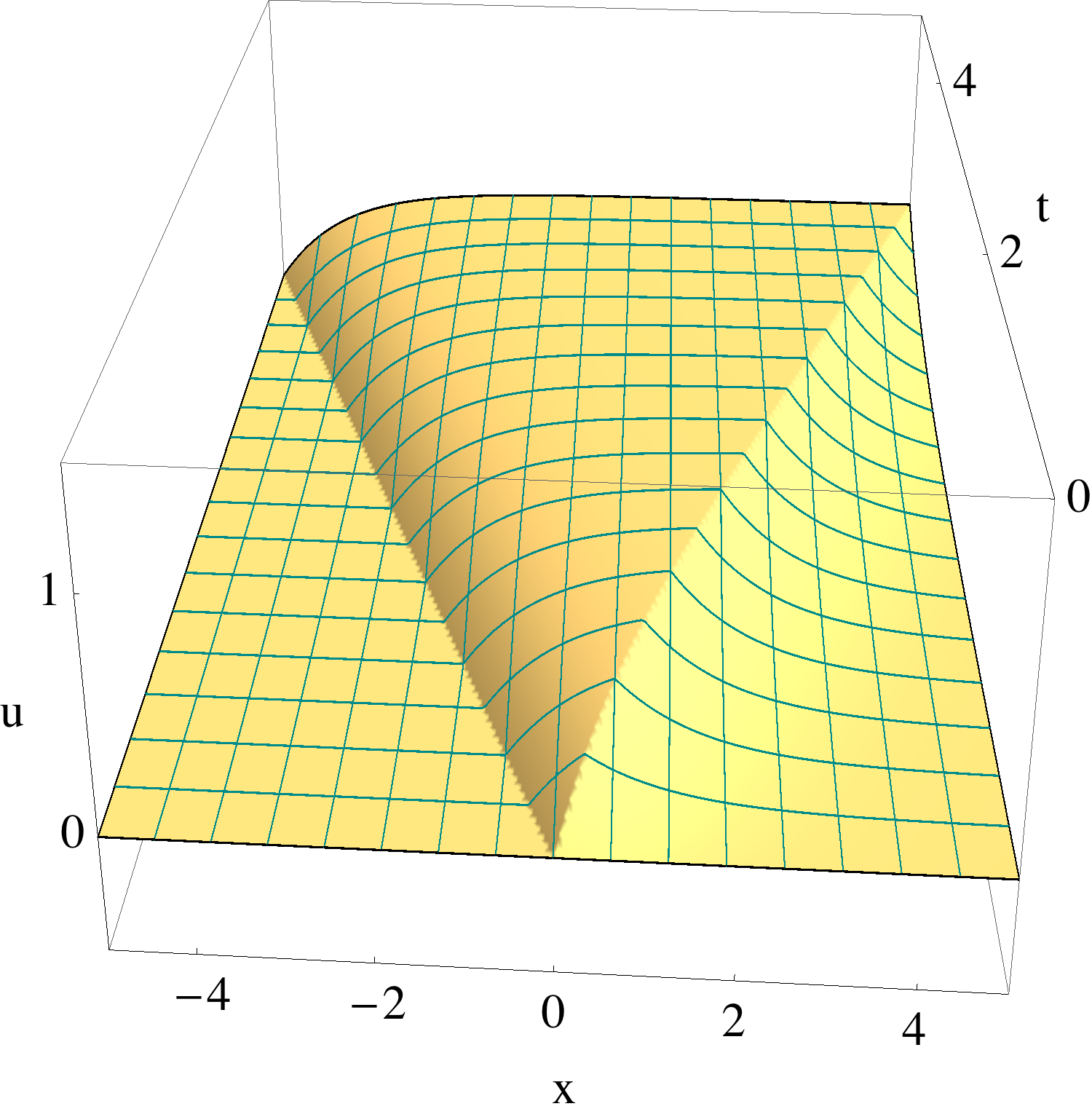}}
\label{fig6}
}
\hspace{.35cm}
\subfigure[Solution $u$ to the nonlocal wave equation 
with initial data $u(0,x) = 0$ 
and $(\partial u / \partial
t)(0,x)=f(x),~x \in {\mathbb{R}}$
in Example~\ref{gaussians1}.]{
\scalebox{0.465}{\includegraphics{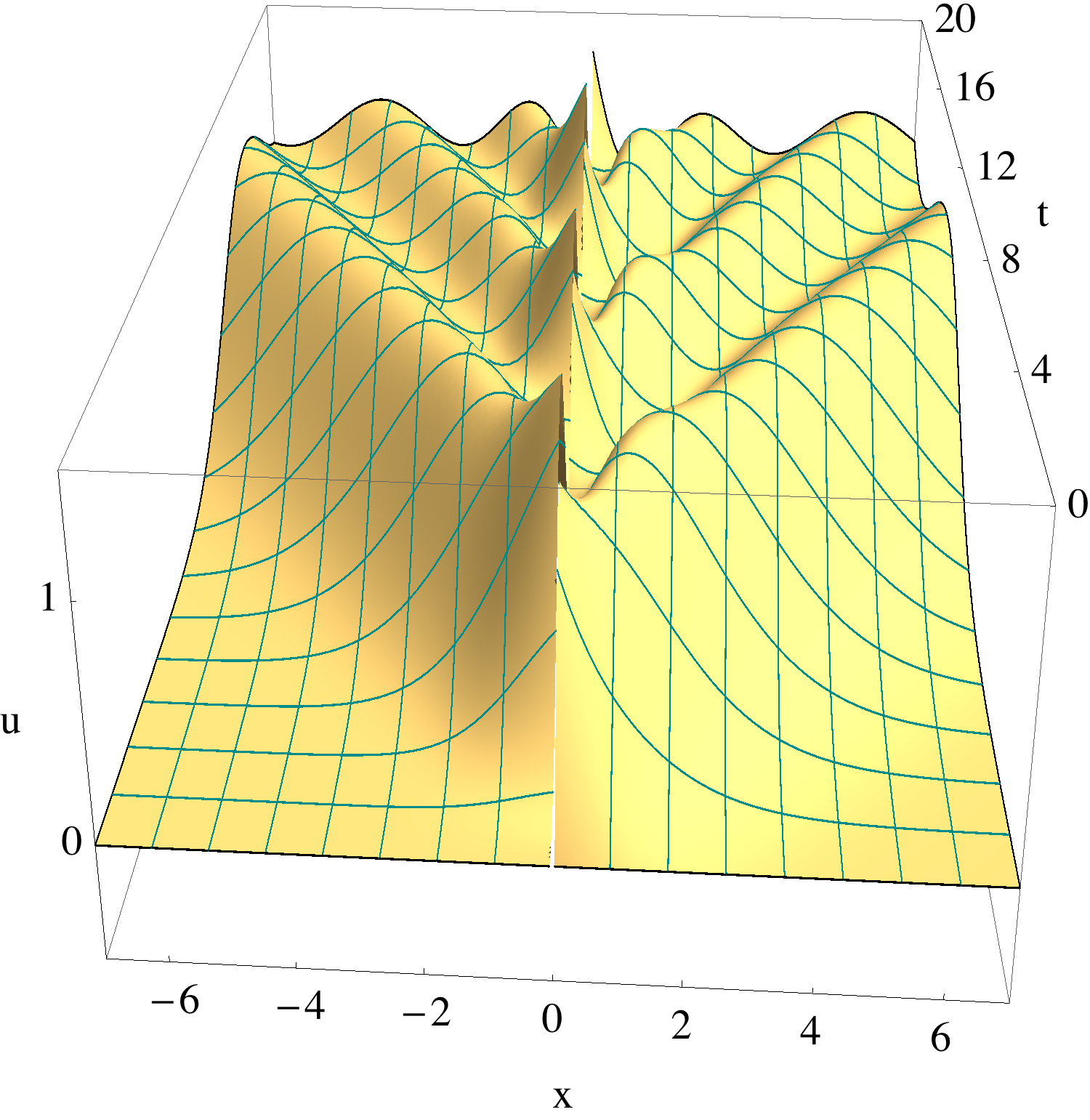}}
\label{fig7}
}
\caption{Evolution of the local and nonlocal wave equation solutions with
discontinuous initial displacement ((a) and (b)) and discontinuous 
initial velocity 
((c) and (d)). For (a) and (c), we use $\rho = E = 1$, $b = 0$,
values in (\ref{classicalelasticity}). For (b) and (d), 
we use $c = a = 1$, $\rho = 1$, $\sigma = 1$, and $b=\epsilon=1$
values in Example~\ref{gaussians1}.}
\label{fig:comparisonDiscontinuous}
\end{figure}

We depict and compare the solutions of the classical and nonlocal wave
equations in Figures~\ref{fig:comparisonContinuous} and
\ref{fig:comparisonDiscontinuous}.  In the classical case, as expected,
we observe the propagation of waves along characteristics; see
Figures~\ref{fig0} and \ref{fig2} for vanishing initial velocity and
displacement, respectively.  In the nonlocal case, we observe repeated
separation of waves and an oscillation at the center of the initial
pulse; see Figures~\ref{fig1} and \ref{fig3} for vanishing initial
velocity and displacement, respectively.

We study the propagation of discontinuity in the data for classical 
and nonlocal wave equations in the following example.

\begin{ex} \label{gaussians1}
As in the previous example, for $\rho,\sigma, a, b, \varepsilon > 0$, we define $C_{\sigma} 
\in L^1({\mathbb{R}})$ 
by 
\begin{align*} 
& C_{\sigma} := 
\frac{a}{\sqrt{2 \pi} \, \sigma} \, e^{- [1 / (2 \sigma^2)] . {\mathrm{id}}_{\mathbb{R}}^2}
 \, \, .
\end{align*}
Then 
\begin{equation*}
F_1 C_{\sigma} =  
a e^{- (\sigma^2/2). {\mathrm{id}}_{\mathbb{R}}^2} \, \, , 
\end{equation*}
and for $k \in {\mathbb{N}}^{*}$,
\begin{equation*}
F_1 C_{\sigma}^k = (F_1 C_{\sigma})^k =   
a^k e^{- k (\sigma^2/2). {\mathrm{id}}_{\mathbb{R}}^2} =
F_1 \frac{a^k}{\sqrt{2 \pi k} \, \sigma} \, e^{- [1 / (2  k \sigma^2)] . {\mathrm{id}}_{\mathbb{R}}^2}
\, \, , 
\end{equation*}
and hence
\begin{equation*}
C_{\sigma}^k = \frac{a^k}{\sqrt{2 \pi} \sqrt{k \sigma^2}} \, e^{- [1 / (2k \sigma^2) ] . {\mathrm{id}}_{\mathbb{R}}^2} \, \, .
\end{equation*}
Furthermore, we define $f \in L^2_{\mathbb{C}}({\mathbb{R}})$
by 
\begin{align*}
f := b \, e^{- \varepsilon . {\mathrm{id}}_{\mathbb{R}}} \cdot \chi_{_{[0,\infty)}} \, \, .
\end{align*}
Then for $x \in {\mathbb{R}}$, 
\begin{align*}
& (C_{\sigma}^k * f)(x) = \frac{a^k b}{\sqrt{2 \pi} \sqrt{k \sigma^2}} \, \int_{0}^{\infty} e^{- [(x -y)^2 / (2k \sigma^2)]} \cdot e^{- \varepsilon y} \, dy \\
& =
\frac{2 a^k b}{\pi} \, e^{\left(
\frac{\varepsilon \sigma}{2} \sqrt{2k} \right)^2}
\, e^{- \varepsilon x} \, \mathrm{erfc}\!\left(
\frac{\varepsilon \sigma}{2}\sqrt{2k} - \frac{x}{\sigma \sqrt{2k}} 
\right) 
 \, \, ,
\end{align*}
where $\mathrm{erfc}$ denotes the error function defined according to
DLMF \cite{olverEtAl2010_book}. 
We note for $x \in {\mathbb{R}}$ that 
\begin{equation*}
\lim_{k \rightarrow 0} \frac{a^k b}{2} \, e^{\left(
\frac{\varepsilon \sigma}{2} \sqrt{2k} \right)^2}
\, e^{- \varepsilon x} \, \mathrm{erfc}\!\left(
\frac{\varepsilon \sigma}{2}\sqrt{2k} - \frac{x}{\sigma \sqrt{2k}} 
\right) =
\begin{cases}
0  & \text{if $x < 0$} \\
\frac{b}{2} & \text{if $x = 0$} \\
b e^{- \varepsilon x} & \text{if $x > 0$}
\end{cases} 
\, \, .
\end{equation*}
Since
\begin{equation*}
c = \int_{\mathbb{R}} C_{\sigma} \, dv^1 = a > 0 \, \, , 
\end{equation*}
we conclude from Theorem~\ref{besselrepresentationtheorem} that for $t \in {\mathbb{R}}$
\begin{align} \label{specialbesselrepresentationjumps}
& \left[\overline{\cos \left(t \sqrt{\phantom{ij}} \right)}\,
\bigg|_{\sigma(A_C)}\right]\!(A_C) f = \sum_{k=0}^{\infty} \frac{1}{2^k k!} ( \sqrt{a t^2/ \rho}\,)^{k+1} j_{k-1}
( \sqrt{a t^2 / \rho} \,) \, f_k  \, \, , \nonumber \\
& 
\left[\, \overline{\frac{\sin \left(t \sqrt{\phantom{ij}} \right)}{\sqrt{\phantom{ij}}}} \, \bigg|_{\sigma(A_C)}\right]\!(A_C)
f = t \sum_{k=0}^{\infty} \frac{1}{2^k k!} ( \sqrt{a t^2/ \rho}\,)^{k} j_{k}
( \sqrt{a t^2 / \rho} \,) \, f_k \, \, , 
\end{align}
where 
\begin{align*}
f_0 & := b \, e^{- \varepsilon . {\mathrm{id}}_{\mathbb{R}}} \cdot \chi_{_{[0,\infty)}} \, \, , \\
f_k(x) & := \frac{2 b}{\pi} \, e^{\left(
\frac{\varepsilon \sigma}{2} \sqrt{2k} \right)^2}
\, e^{- \varepsilon x} \, \mathrm{erfc}\!\left(
\frac{\varepsilon \sigma}{2}\sqrt{2k} - \frac{x}{\sigma \sqrt{2k}} 
\right) \\
& \, \, = \frac{b}{\sqrt{2 \pi} \sqrt{k \sigma^2}} \, e^{- \varepsilon x} \int_{-\infty}^{x} e^{- u^2 / (2k \sigma^2)}  \cdot e^{ \varepsilon u} \, du 
\end{align*}
for every $x \in {\mathbb{R}}$ and $k \in {\mathbb{N}}^{*}$, 
and where $A_C$ is as in Lemma~\ref{governingoperator}. 

\end{ex}

In the classical wave equation, as expected, discontinuities propagate
along the characteristics; see Figures~\ref{fig4} and \ref{fig6}
for vanishing initial velocity and displacement, respectively.
On the other hand, in the nonlocal case, the discontinuity remains 
in the same place for all time; see Figures~\ref{fig5} and \ref{fig7}
for vanishing initial velocity and displacement, respectively.
This confirms the results given in \cite{wecknerAbeyaratne2005}.

\section{Conclusion}
\label{sec:conclusion}

Our result that the governing operator is a bounded function of the
classical local operator for scalar-valued functions should be
generalizable to vector-valued case.  Our notable result that the
governing operator $A_C$ of the peridynamic wave equation is a bounded
function of the classical governing operator has far
reaching consequences.  It enables the comparison of peridynamic
solutions to those of classical elasticity.  The remarkable
implication is that it opens the possibly of defining peridynamic-type operators on bounded
domains as functions of the corresponding classical operator.  Since
the classical operator is defined through \emph{local} boundary
conditions, the functions inherit this knowledge. This observation
opens a gateway to incorporate local boundary conditions into nonlocal
theories, which has vital implications for numerical treatment of
nonlocal problems.  This is the subject of our companion paper
\cite{aksoyluBeyerCeliker2014_bounded}.

We expect that the expansions in Theorems \ref{additiontheorem} and
\ref{additiontheorem2} can be used for obtaining the large time
asymptotic of solutions of the nonlocal wave equation.  In the
classical case, as expected, we observe the propagation of waves along
characteristics.  In the nonlocal case, we observe oscillatory
recurrent wave separation.  We think that this phenomenon is worth
investigating.  On the other hand, we observe that discontinuity
remains stationary in the nonlocal case, whereas, it is well-known that
discontinuities propagate along characteristics.  We hold that this
fundamentally difference is one of the most distinguishing feature of
PD.  In conclusion, we believe that we added valuable tools to the of
arsenal of methods to analyze nonlocal problems.

\appendix

\section{Some Proofs from Section \ref{sec:optreatment}}
\subsection{Instability of Solutions}

We give a proof of Theorem~\ref{instability}.

\begin{proof}
Since $\sigma(A)$ is bounded from below, we can define 
\begin{equation*}
\lambda_0 := \inf \{\lambda \in \sigma(A)\} \, \, .
\end{equation*}
Furthermore, since $\sigma(A)$ is closed, $\lambda_0 \in \sigma(A)$ 
and since 
\begin{equation*}
\sigma(A) \cap (-\infty,0) \neq \emptyset \, \, , 
\end{equation*}
we conclude that $\lambda_0 < 0$. Furthermore, let $f \in C({\mathbb{R}},{\mathbb{R}})$ such that $f|_{[0,\infty)}$ is bounded and such that
$f|_{(-\infty,0]}$ is positive and decreasing. In particular, this implies that 
$f|_{\sigma(A)} \in U^s(\sigma(A))$ and also that $f^2|_{(-\infty,0]}$ is positive and decreasing. 
Furthermore, let $0 < \varepsilon < |\lambda_0|$. Then there is 
$\xi \in D(A)$ such that 
\begin{equation*}
\eta := (\chi_{_{[\lambda_0,\lambda_0 + \varepsilon]}}|_{\sigma(A)})(A) \xi
\neq 0_{X}  \, \, .
\end{equation*}
Otherwise, since $D(A)$ is in particular dense in $X$, 
\begin{equation*}
(\chi_{_{[\lambda_0,\lambda_0 + \varepsilon]}}|_{\sigma(A)})(A)
= 0_{L(X,X)} \, \, , 
\end{equation*}
in contradiction to the fact that $\lambda_0 \in \sigma(A)$.
In particular, since  
\begin{equation*}
(\chi_{_{[\lambda_0,\lambda_0 + \varepsilon]}}|_{\sigma(A)})(A) 
\end{equation*}
and $A$ commute, it follows that $\eta \in D(A)$.

Furthermore, 
\begin{align*}
& \|(f|_{\sigma(A)})(A)\eta\|^2  = \braket{(f|_{\sigma(A)})(A)\eta|(f|_{\sigma(A)})(A)\eta} =
\braket{\eta|(f|_{\sigma(A)})^2(A)\eta} \\
& = \braket{\xi|(\chi_{_{[\lambda_0,\lambda_0 + \varepsilon]}}|_{\sigma(A)})(A)(f|_{\sigma(A)})^2(A)\xi} 
=
\int_{\sigma(A)} \chi_{_{[\lambda_0,\lambda_0 + \varepsilon]}} \cdot 
f^2 \, d\psi_{\xi} \\
& \geqslant \int_{\sigma(A)} \chi_{_{[\lambda_0,\lambda_0 + \varepsilon]}} \cdot 
[f(\lambda_0 + \varepsilon)]^2 \, d\psi_{\xi} 
= [f(\lambda_0 + \varepsilon)]^2
\int_{\sigma(A)} \chi_{_{[\lambda_0,\lambda_0 + \varepsilon]}} \, d\psi_{\xi} \\
& = [f(\lambda_0 + \varepsilon)]^2 \cdot \|\eta\|^2 .
\end{align*}
In particular, we conclude that 
\begin{align*}
\left\|\left[\overline{\cos \left(t \sqrt{\phantom{ij}} \right)}\,
\bigg|_{\sigma(A)}\right]\!(A) \eta \right\| \geqslant 
\cosh(t \sqrt{|\lambda_0 + \varepsilon|}\,) \cdot \|\eta\|
\end{align*} 
for all $t \in {\mathbb{R}}$. Since $\varepsilon$ is otherwise 
arbitrary, the latter implies that 
\begin{align*}
\left\|\left[\overline{\cos \left(t \sqrt{\phantom{ij}} \right)}\,
\bigg|_{\sigma(A)}\right]\!(A) \eta \right\| \geqslant 
\cosh(t \sqrt{|\lambda_0|}\,) \cdot \|\eta\| \geqslant
\frac{1}{2} \, e^{t \sqrt{|\lambda_0|}} \cdot \|\eta\| .
\end{align*} 
\end{proof}

\subsection{Solutions of Inhomogeneous Wave Equations}
We give a proof of Theorem~\ref{solutionoftheinhomogenousequation}.

\begin{proof}
In a first step, we note for $\lambda > 0$ that 
\begin{align*} 
& \overline{\frac{\sin \left((t - \tau) \sqrt{\phantom{ij}} \right)}{\sqrt{\phantom{ij}}}}\,(\lambda) = \frac{\sin[(t - \tau) \sqrt{\lambda}\,]}{\sqrt{\lambda}} \\
& =
\frac{\sin(t\sqrt{\lambda} \, )}{\sqrt{\lambda}} \, 
\cos(\tau \sqrt{\lambda} \, ) - \cos(t\sqrt{\lambda} \, )
\, \frac{\sin(\tau \sqrt{\lambda} \, )}{\sqrt{\lambda}} \\
& = \overline{\frac{\sin \left(t \sqrt{\phantom{ij}} \right)}{\sqrt{\phantom{ij}}}}\,(\lambda) \cdot \overline{\cos \left(\tau \sqrt{\phantom{ij}} \right)}\,(\lambda) -
\overline{\cos \left(t \sqrt{\phantom{ij}} \right)}\,(\lambda)
 \cdot \overline{\frac{\sin \left(\tau \sqrt{\phantom{ij}} \right)}{\sqrt{\phantom{ij}}}}\,(\lambda) \, \, .
\end{align*}
Since
\begin{equation*}
\overline{\frac{\sin \left((t - \tau) \sqrt{\phantom{ij}} \right)}{\sqrt{\phantom{ij}}}} \, \, , \, \, \overline{\frac{\sin \left(t \sqrt{\phantom{ij}} \right)}{\sqrt{\phantom{ij}}}} \, \, , \, \,
\overline{\cos \left(\tau \sqrt{\phantom{ij}} \right)} \, \, , \, \,
\overline{\cos \left(t \sqrt{\phantom{ij}} \right)} \, \, , \, \,\overline{\frac{\sin \left(\tau \sqrt{\phantom{ij}} \right)}{\sqrt{\phantom{ij}}}} \, \, ,
\end{equation*}
are entire functions, this implies that 
\begin{align*}
& \overline{\frac{\sin \left((t - \tau) \sqrt{\phantom{ij}} \right)}{\sqrt{\phantom{ij}}}}\,(\lambda) \\
& = \overline{\frac{\sin \left(t \sqrt{\phantom{ij}} \right)}{\sqrt{\phantom{ij}}}}\,(\lambda) \cdot \overline{\cos \left(\tau \sqrt{\phantom{ij}} \right)}\,(\lambda) -
\overline{\cos \left(t \sqrt{\phantom{ij}} \right)}\,(\lambda)
 \cdot \overline{\frac{\sin \left(\tau \sqrt{\phantom{ij}} \right)}{\sqrt{\phantom{ij}}}}\,(\lambda)  \, \, , 
\end{align*}
for every $\lambda \in {\mathbb{C}}$ and hence, by application 
of the spectral 
theorem for densely-defined, self-adjoint linear operators
in Hilbert spaces, that 
\begin{align*}
& \left[\, \overline{\frac{\sin \left((t - \tau) \sqrt{\phantom{ij}} \right)}{\sqrt{\phantom{ij}}}} \, \bigg|_{\sigma(A)}\!\right]\!(A) f(\tau) \\
& = \left\{ 
\left[\, \overline{\frac{\sin \left(t \sqrt{\phantom{ij}} \right)}{\sqrt{\phantom{ij}}}} \, \bigg|_{\sigma(A)}\!\right]\!(A)
\left[\overline{\cos \left(\tau \sqrt{\phantom{ij}} \right)}\, \bigg|_{\sigma(A)}\!\right]\!(A) \right. \\
& \quad \quad - \left.
\left[\overline{\cos \left(t \sqrt{\phantom{ij}} \right)}\,(\lambda)\, \bigg|_{\sigma(A)}\!\right]\!(A)
 \left[\overline{\frac{\sin \left(\tau \sqrt{\phantom{ij}} \right)}{\sqrt{\phantom{ij}}}}\, \bigg|_{\sigma(A)}\!\right]\!(A) \right\} 
 f(\tau) \\
& = b(t) a(\tau) f(\tau) - a(t) b(\tau) f(\tau)
\end{align*}
for all $t,\tau \in {\mathbb{R}}$, 
where $a, b : {\mathbb{R}} \rightarrow L(X,X)$ are defined by 
\begin{align*}
a(t) := \left[\overline{\cos \left(t \sqrt{\phantom{ij}} \right)}\,(\lambda)\, \bigg|_{\sigma(A)}\!\right]\!(A) \, \, , \, \, 
b(t) := \left[\, \overline{\frac{\sin \left(t \sqrt{\phantom{ij}} \right)}{\sqrt{\phantom{ij}}}}\, \bigg|_{\sigma(A)}\!\right]\!(A) \, \, , 
\end{align*}
for every $t \in {\mathbb{R}}$. In the following, for $\xi \in D(A)$, we are going to
use that the maps
\begin{align*}
(\,{\mathbb{R}} \rightarrow X, t \mapsto a(t) \xi\,) \, \, \, \, \text{and} \, \, \, \,
(\,{\mathbb{R}} \rightarrow X, t \mapsto b(t) \xi\,)
\end{align*}
are differentiable with derivatives 
\begin{align*}
(\,{\mathbb{R}} \rightarrow X, t \mapsto - b(t) A \xi\,) \, \, \, \, \text{and} \, \, \, \, 
(\,{\mathbb{R}} \rightarrow X, t \mapsto a(t) \xi\,) \, \, ,
\end{align*}
respectively.
We note that, as a consequence of 
the spectral theorem for densely-defined, self-adjoint linear operators
in Hilbert spaces, that $a$, $b$ are strongly continuous and that 
\begin{equation*}
a(t) D(A) \subset D(A) \, \, , \, \, 
b(t) D(A) \subset D(A) \, \, ,
\end{equation*}
for every $t \in {\mathbb{R}}$. Also for every
$k \in U^s_{\mathbb{C}}(\sigma(A))$, $k(A)D(A) \subset D(A)$ and 
for $\xi \in D(A)$
\begin{equation*}
\|k(A) \xi\|_{A}^2 = \|k(A) \xi\|^2 + \|A k(A) \xi\|^2 =
\|k(A) \xi\|^2 + \|k(A) A \xi\|^2 \, \, \left[ \, \leqslant 
\|k(A)\|_{\text{Op}}^2 \cdot \|\xi\|_{A}^2 \, 
\right]\, \, .
\end{equation*}
Hence $a, b$ induce strongly continuous maps
from ${\mathbb{R}}$ to $X_{A}$, which we indicate with the same symbols, and where $X_{\!A} := (D(A),\|\,\,\|_{A})$. In addition, we note that the inclusion $\iota$ of $X_{A}$ into $X$
is continuous. 
In the next step, we observe for a strongly continuous
$c : {\mathbb{R}} \rightarrow L(X_{\!A},X_{\!A})$ and a continuous 
$g : {\mathbb{R}} \rightarrow X_{\!A}$ that 
\begin{align*}
& \|c(t+h) g(t+h) - c(t) g(t)\|_{A} \\
& = \|c(t+h) g(t+h) - c(t+h) g(t) + c(t+h) g(t) - c(t) g(t)\|_{A} \\
& = \|c(t+h) [g(t+h) - g(t)]_{A} + [c(t+h) - c(t)] g(t)\|_{A} \\
& \leqslant \|c(t+h)\| \cdot \|g(t+h) - g(t)\|_{A} + \|c(t+h)g(t) - c(t) g(t)\|_{A}
\end{align*}
and hence that $({\mathbb{R}} \rightarrow X_{\!A}, t \mapsto c(t) g(t))$
is continuous as well as that 
\begin{align*}
\left( {\mathbb{R}} \rightarrow X_{\!A}, t \mapsto \int^A_{I_t} c(\tau) g(\tau) d\tau \right) \, \, , 
\end{align*}
where $\int^A$ denotes weak integration in $X_{\!A}$,
is differentiable with derivative
\begin{equation*}
({\mathbb{R}} \rightarrow X_{\!A}, t \mapsto c(t) g(t)) \, \, .
\end{equation*}
We conclude for every $t \in {\mathbb{R}}$ that 
\begin{align*}
& b(t) \int_{I_t}^{A} a(\tau) f(\tau) \, d\tau - 
a(t) \int_{I_t}^{A} b(\tau) f(\tau) \, d\tau \\
& =\int_{I_t}^{A} [b(t) a(\tau) f(\tau) - a(t) b(\tau) f(\tau)] \, d\tau \\
& = \int_{I_t}^{A} \left[\, \overline{\frac{\sin \left((t - \tau) \sqrt{\phantom{ij}} \right)}{\sqrt{\phantom{ij}}}} \, \bigg|_{\sigma(A)}\!\right]\!(A) f(\tau) \, d\tau = v(t) \, \, .
\end{align*}
Furthermore, we observe for
$c : {\mathbb{R}} \rightarrow L(X,X)$, $g : {\mathbb{R}} \rightarrow X$such that $\textrm{Ran}(g) \subset D(A)$, 
$t \in {\mathbb{R}}$ and $h \in {\mathbb{R}}^{*}$ that
\begin{align*}
& \frac{1}{h} \, [c(t+h) g(t+h) - c(t) g(t)] \\
& = 
\frac{1}{h} \, [c(t+h) g(t+h) - c(t+h) g(t) + c(t+h) g(t) - c(t) g(t)] \\
& = 
c(t+h) \frac{1}{h}\,[g(t+h) - g(t)] + \frac{1}{h}\,[c(t+h) - c(t)] g(t) \\
& = c(t) \frac{1}{h}\,[g(t+h) - g(t)] + 
\frac{1}{h}\,[c(t+h) g(t) - c(t) g(t)] \\
& \, \quad + [c(t+h) - c(t)] \frac{1}{h}\,[g(t+h) - g(t)]
\end{align*}
and hence that 
\begin{align*}
& \frac{1}{h} \, [a(t+h) g(t+h) - a(t) g(t)] - a(t) g^{\, \prime}(t) +
b(t) A g(t) \\
& = a(t) \left\{ \frac{1}{h}\,[g(t+h) - g(t)] - g^{\, \prime}(t) 
\right\} + 
\frac{1}{h}\,[a(t+h) g(t) - a(t) g(t)] + b(t) A g(t) \\
& \, \quad + [a(t+h) -a(t)] \left\{ \frac{1}{h}\,[g(t+h) - g(t)] 
- g^{\, \prime}(t) \right\} + [a(t+h) -a(t)] g^{\, \prime}(t) \, \, , \\
& \frac{1}{h} \, [b(t+h) g(t+h) - b(t) g(t)] - b(t) g^{\, \prime}(t) -
a(t) g(t) \\
& = b(t) \left\{ \frac{1}{h}\,[g(t+h) - g(t)] - g^{\, \prime}(t) 
\right\} + 
\frac{1}{h}\,[b(t+h) g(t) - b(t) g(t)] - a(t) g(t) \\
& \, \quad + [b(t+h) - b(t)] \left\{ \frac{1}{h}\,[g(t+h) - g(t)] 
- g^{\, \prime}(t) \right\} + [b(t+h) - b(t)] g^{\, \prime}(t) \, \, .
\end{align*}
This implies that 
\begin{equation*}
({\mathbb{R}} \rightarrow X, t \mapsto a(t) g(t)) \, \, , \, \, 
({\mathbb{R}} \rightarrow X, t \mapsto b(t) g(t))
\end{equation*}
are differentiable with derivatives 
\begin{equation*}
({\mathbb{R}} \rightarrow X, t \mapsto a(t) g^{\, \prime}(t) 
- b(t) A g(t)) \, \, , \, \, 
({\mathbb{R}} \rightarrow X, t \mapsto b(t) g^{\, \prime}(t) +  a(t) g(t)) \, \, ,
\end{equation*}
respectively.
Application of the latter to $v$ gives for $t \in {\mathbb{R}}$
 
\begin{align*}
& v^{\, \prime}(t) = b(t) a(t) f(t) 
+ a(t) \int_{I_t}^{A} a(\tau) f(\tau) \, d\tau - a(t) b(t) f(t) 
+ b(t) A \int_{I_t}^{A} b(\tau) f(\tau) \, d\tau \\
& = a(t) \int_{I_t}^{A} a(\tau) f(\tau) \, d\tau + b(t)  \int_{I_t} b(\tau) A f(\tau) \, d\tau \\
& = a(t) \int_{I_t}^{A} a(\tau) f(\tau) \, d\tau + b(t)  \int_{I_t}^{A} b(\tau) A f(\tau) \, d\tau 
\, \, ,
\end{align*}
where $\int$ denotes weak integration in $X$, 
and that  
\begin{align*}
& v^{\, \prime \prime}(t) \\
& = a(t) a(t) f(t) - b(t) A \int_{I_t}^{A} a(\tau) f(\tau) \, d\tau + b(t) b(t) A f(t) + a(t) \int_{I_t} b(\tau) A f(\tau) \, d\tau \\
& = a(t) a(t) f(t) + b(t) b(t) A f(t) - b(t) A \int_{I_t}^{A} a(\tau) f(\tau) \, d\tau + a(t) A \int_{I_t}^{A} b(\tau) f(\tau) \, d\tau \\
& = f(t) - A v(t) \, \, .
\end{align*}

\end{proof}

\subsection{Conservation Laws Induced by Symmetries}
We give a proof of Theorem~\ref{conservationlaws}.

\begin{proof}
Part~(i):
Let $t \in I$ and $h \in {\mathbb{R}}$ such 
that $t + h \in I$. Then 
\begin{align*}
& \frac{j_{u,v}(t+h) - j_{u,v}(t)}{h} \\
& = h^{-1} \left[ 
\braket{u(t+h)|v^{\prime}(t+h)} - \braket{u^{\prime}(t+h)|v(t+h)} - \braket{u(t)|v^{\prime}(t)} + \braket{u^{\prime}(t)|v(t)} \right] \\
& =  h^{-1}\left[ 
\braket{u(t+h) - u(t)|v^{\prime}(t+h)} + 
\braket{u(t)|v^{\prime}(t+h) - v^{\prime}(t)} \right. \\
& \left.  \qquad\, \, \, \, \, \, \, \,
- \braket{u^{\prime}(t+h)|v(t+h) - v(t)} 
- \braket{u^{\prime}(t+h) - u^{\prime}(t)|v(t)} \right]  \, \, . 
\end{align*}
Hence it follows that $j_{u,v}$ is differentiable in $t$ with derivative 
\begin{align*}
j_{u,v}^{\prime}(t) & = \braket{u(t)|(v^{\prime})^{\prime}(t)} -
 \braket{(u^{\prime})^{\prime}(t)|v(t)} = \braket{u(t)|(v^{\prime})^{\prime}(t)} -
 \braket{(u^{\prime})^{\prime}(t)|v(t)} \\
& = - \braket{u(t)|Av(t)} + \braket{A u(t)|v(t)}= 0 \, \, .
\end{align*}
From the latter, we conclude that the derivative of 
$j_{u,v}$ vanishes and hence that  
$j_{u,v}$ is a constant function. 
\newline
Part~(ii):
Since  
$A \circ B \supset B \circ A$, it follows that  
$B(D(A)) \subset D(A)$. Hence $B \circ u$ is a twice continuously differentiable map 
assuming values in $D(A)$ and 
satisfying
\begin{equation*} 
(B \circ u)^{\, \prime \prime}(t) = B u^{\, \prime \prime}(t) =
- B A \, u(t) = - A B \, u(t) = - A (B \circ u)(t)
\end{equation*}
for all $t \in {\mathbb{R}}$. According to Part~(i) this implies that 
$j_{u,B \circ u} : {\mathbb{R}} \rightarrow {\mathbb{C}}$, defined by 
\begin{equation*}
j_{u,B \circ u}(t) := \braket{u(t)|B u^{\prime}(t)} - \braket{u^{\prime}(t)|B u(t)} 
\end{equation*}
for every $t \in {\mathbb{R}}$, is constant.
\newline
Part~(iii): For the proof, let $U_B : {\mathbb{R}} \rightarrow L(X,X)$ be 
the strongly continuous one-parameter group that is generated by $B$. This implies
that 
\begin{align*}
& D(B) = 
\{
\xi \in X: \lim_{t \rightarrow 0, t \neq 0} \frac{1}{t} (U_B(t) - {\textrm{id}}_{X}\!)\xi \, \, 
\textrm{exists}
\} \, \, , \\
& B \xi = \frac{1}{i} \lim_{t \rightarrow 0, t \neq 0} \frac{1}{t} (U_B(t) - {\textrm{id}}_{X}\!)\xi
\end{align*}
for every $\xi \in D(B)$. Since $A$ and $B$ commute, every $f(B)$, where $f \in U_{\mathbb{C}}^s(\sigma(B))$ and $\sigma(B)$ denotes the spectrum of $B$, commutes with 
$A$, i.e., satisfies 
\begin{equation*}
A \circ f(B) \supset f(B) \circ A \, \, .
\end{equation*}
Hence it follows from Part~(ii) that 
\begin{equation*}
j_{u,f_s(B)}(t) := \braket{u(t)|f_s(B) u^{\prime}(t)} - \braket{u^{\prime}(t)|f_s(B) u(t)} 
\end{equation*}
for every $t \in {\mathbb{R}}$, is constant, where 
\begin{align*}
f_s(\lambda) := \frac{1}{i s} \left(e^{i s \lambda}  - 1
\right)
\end{align*}
for every $\lambda \in \sigma(B)$ and $s>0$. Also,  
since $A$ and $B$ commute, every $g(A)$, where $g \in U_{\mathbb{C}}^s(\sigma(A))$ and $\sigma(A)$ denotes the spectrum of $A$, commutes with 
$B$, i.e., satisfies 
\begin{equation*}
B \circ g(A) \supset g(A) \circ B \, \, ,
\end{equation*}
which implies that 
\begin{equation*}
g(A)(D(B)) \subset D(B) 
\end{equation*}
and hence also that 
\begin{equation*}
g(A)(D(A) \cap D(B)) \subset D(A) \cap D(B) \, \, . 
\end{equation*}
Therefore, we conclude from Theorem~\ref{abstractwaveequation},
since $u(0), u^{\prime}(0) \in D(A) \cap D(B)$, that 
$\textrm{Ran}(u), \textrm{Ran}(u^{\prime})  \subset D(A) \cap  D(B)$. As a consequence,
for every $t \in {\mathbb{R}}$, 
\begin{equation*}
\lim_{s \rightarrow 0} j_{u,f_s(B)}(t) = \braket{u(t)| B u^{\prime}(t)} - \braket{u^{\prime}(t)|B u(t)} \, \, . 
\end{equation*}
Finally, since $j_{u,f_s(B)}$ is a constant function for $s > 0$, we conclude that 
\begin{equation*}
j_{u,B}(t) := \braket{u(t)| B u^{\prime}(t)} - \braket{u^{\prime}(t)|B u(t)} 
\end{equation*}
for every $t \in {\mathbb{R}}$, is a constant function. 
\end{proof}

\pagebreak

\bibliographystyle{siam}
\bibliography{../../../data/digest/gemFEM2011/paper3_stokes/bib/burak}

\end{document}